\newtheorem{theorem}{Theorem}[section]
\newtheoremstyle{italics}{}{}{\itshape}{}{\bfseries}{ }{ }{}
\theoremstyle{italics}
\newtheorem{prop}[theorem]{Proposition}
\newtheorem{lemma}[theorem]{Lemma}
\newtheorem{cor}[theorem]{Corollary}
\newtheoremstyle{noitalics}{}{}{}{}{\bfseries}{ }{ }{}
\theoremstyle{noitalics}
\newtheorem{definition}[theorem]{Definition}
\newtheorem{remark}[theorem]{Remark}
\newtheorem{example}[theorem]{Example}
\newtheorem*{example*}{Example}
\newcommand{\F}{\mathbb{F}}
\newcommand{\R}{\mathbb{R}}
\newcommand{\Z}{\mathbb{Z}}
\newcommand{\C}{\mathbb{C}}
\newcommand{\N}{\mathbb{N}}
\newcommand{\Mat}{\mathrm{M}}
\newcommand{\Hom}{\mathrm{Hom}}
\newcommand{\Ker}{\mathrm{Ker}\,}
\newcommand{\rank}{\text{rank}}
\newcommand{\End}{\mathrm{End}}
\newcommand{\Gr}{\mathrm{Gr}}
\newcommand{\Proj}{\mathbb{P}}
\newcommand{\PEC}{\mathcal P}
\newcommand{\D}{\mathcal{D}}
\newcommand{\Q}{\mathcal{Q}}
\newcommand{\Rem}{\mathcal{R}}
\newcommand{\wD}{\widehat{\D}}
\newcommand{\wQ}{\widehat{\Q}}
\newcommand{\Tr}{\mathrm{Tr}}
\newcommand{\K}{\Ker \,\D}
\newcommand{\spanof}{\text{span}}
\newcommand{\Points}{\Pi}
\newcommand{\Lines}{\Lambda}
\newcommand{\Inc}{\mathrm I}
\newcommand{\DN}{\mathbb D}
\newcommand{\AL}{\mathcal A}
\newcommand{\RP}{\R\Proj}
\newcommand{\eps}{\varepsilon}
\newcommand{\ai}[1]{{#1}}
\title{Pentagram maps over rings, Grassmannians, and skewers}
\author{Leaha Hand\thanks{
Department of Mathematics,
University of Arizona;
e-mail: {\tt leahahand@math.arizona.edu }
} \,
and Anton Izosimov\thanks{
School of Mathematics \& Statistics, University of Glasgow; e-mail:
{\tt anton.izosimov@glasgow.ac.uk}
} }
\date{}
\numberwithin{equation}{section}
\begin{document}
\maketitle
\abstract{
The pentagram map is a discrete dynamical system on planar polygons. By definition, the image of a polygon $P$ under the pentagram map is the polygon $P'$ whose vertices are the
intersection points of consecutive shortest diagonals of $P$. The pentagram map was introduced by R. Schwartz in 1992,
and is now one of the most renowned discrete integrable
systems. 

Several authors proposed generalizations of the pentagram map to other geometries, in particular to Grassmannians, where the role of points and lines is played by higher-dimensional subspaces, as well to {skewer geometry}, where both points and lines are affine lines in the three-dimensional Euclidean space.

In the present paper, we develop a common framework for these kinds of generalizations. Specifically, we show that those maps can be viewed as pentagram maps in the projective plane over an appropriate ring.  In general, those rings need not be division rings or commutative. We show that the Grassmannian pentagram map corresponds to the ring of matrices, while the skewer map is the pentagram map over the ring of dual numbers. Furthermore, we prove that the pentagram map remains integrable for {any} {stably finite} ground ring $R$.
}

\tableofcontents


\section{Introduction}
\subsection{Overview}

The pentagram map is a discrete dynamical system on the space of planar polygons introduced in \cite{pentagrammap}. 
The definition of the pentagram map is illustrated in Figure~\ref{fig:PM}: the image of the polygon $P$ under the pentagram map is the polygon $P'$ whose vertices are the intersection points of consecutive shortest diagonals of~$P$, i.e.  diagonals connecting second-nearest vertices.
In~\cite{integrabilityPMOvScTa, solovievintegrability}, it was shown that the pentagram map is a completely integrable system.  As of now, it is one of the most famous discrete integrable systems which features numerous connections to various areas of mathematics.  In particular, the pentagram map has an interpretation in terms of cluster algebras~\cite{glick2011pentagram}, networks of surfaces \cite{clusteralgebras}, the dimer model~\cite{affolter2024vector}, Poisson-Lie groups \cite{loopgroups}, and difference operators \cite{izopentagramandrefactorization}. Furthermore, the pentagram map can be viewed as a space-time discretization of the Boussinesq equation \cite{integrabilityPMOvScTa}, a shallow water approximation.

\begin{figure}[b]
\centering
\begin{tikzpicture}[]
\coordinate (VK7) at (0,0);
\coordinate (VK6) at (1.5,-0.5);
\coordinate (VK5) at (3,1);
\coordinate (VK4) at (3,2);
\coordinate (VK3) at (1,3);
\coordinate (VK2) at (-0.5,2.5);
\coordinate (VK1) at (-1,1.5);

\draw  [line width=0.5mm]  (VK7) -- (VK6) -- (VK5) -- (VK4) -- (VK3) -- (VK2) -- (VK1) -- cycle;
\draw [dashed, line width=0.2mm, name path=AC] (VK7) -- (VK5);
\draw [dashed,line width=0.2mm, name path=BD] (VK6) -- (VK4);
\draw [dashed,line width=0.2mm, name path=CE] (VK5) -- (VK3);
\draw [dashed,line width=0.2mm, name path=DF] (VK4) -- (VK2);
\draw [dashed,line width=0.2mm, name path=EG] (VK3) -- (VK1);
\draw [dashed,line width=0.2mm, name path=FA] (VK2) -- (VK7);
\draw [dashed,line width=0.2mm, name path=GB] (VK1) -- (VK6);

\path [name intersections={of=AC and BD,by=Bp}];
\path [name intersections={of=BD and CE,by=Cp}];
\path [name intersections={of=CE and DF,by=Dp}];
\path [name intersections={of=DF and EG,by=Ep}];
\path [name intersections={of=EG and FA,by=Fp}];
\path [name intersections={of=FA and GB,by=Gp}];
\path [name intersections={of=GB and AC,by=Ap}];

\draw  [line width=0.5mm]  (Ap) -- (Bp) -- (Cp) -- (Dp) -- (Ep) -- (Fp) -- (Gp) -- cycle;

\node at (-0.9,2.3) () {$P$};
\node at (2,1.5) () {$P'$};

\end{tikzpicture}
\caption{The pentagram map.}\label{fig:PM}
\end{figure}
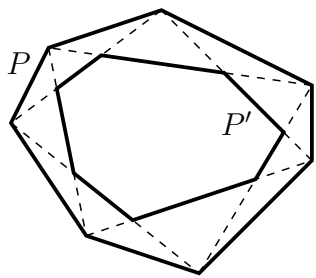

The pentagram map has been generalized in several different directions. One can intersect diagonals joining vertices that are further apart, consider polygons in higher-dimensional spaces, etc. \cite{ clusteralgebras, higherdimPM, higherdimPM2}.  Here, we are interested in \ai{other} kinds of generalizations  which are based on replacing conventional notions of points and lines by more general objects with similar properties. One example of such a map is the \emph{Grassmann pentagram map}, defined in~\cite{felipe2015pentagram} and further studied in \cite{noncommCylinder, ovenhousegrassman}. In this setting, the role of \emph{points} is played by elements of $\Gr(k,3k)$, i.e. $k$-dimensional subspaces in a $3k$-dimensional vector space, while \emph{lines} are elements of $\Gr(2k,3k)$. Any two {generic} points $V_1, V_2 \in \Gr(k,3k)$ belong to a single line $V_1 \oplus V_2 \in \Gr(2k,3k)$, and any two generic lines $W_1, W_2 \in \Gr(2k,3k)$ meet at a point $W_1 \cap W_2 \in \Gr(k,3k)$. The Grassmann pentagram map takes a (cyclically labeled) sequence of points $V_i \in  \Gr(k,3k)$, which we think of as a \emph{polygon \ai{in} the Grassmannian} $\Gr(k,3k)$, to a new sequence given by $(V_i \oplus V_{i+2}) \cap (V_{i+1} \oplus V_{i+3})$. In other words, just like the classical pentagram map, the Grassmann map is defined by intersecting diagonals connecting second-nearest vertices. Furthermore, the classical pentagram map is nothing else but the $k=1$ version of the Grassmann map.

Another generalization of the pentagram map, similar in spirit, is the \emph{skewer pentagram map} \cite{tabwineskins}. In this case, both points and lines are affine lines in the Euclidean space $\mathbb R^3$, with a point being incident to a line if and only if the corresponding affine lines meet a right angle. Thus, both the operation of constructing a line through two points and finding the intersection of two lines reduce to taking the common perpendicular, or \emph{skewer}. Therefore, the corresponding pentagram map  takes a cyclically labeled sequence $\ell_i$ of affine lines in $\mathbb R^3$ to the sequence $S(S(\ell_i , \ell_{i+2}) , S(\ell_{i+1} , \ell_{i+3}))$, where $S(\ell,\ell')$  stands for the skewer of lines $\ell, \ell'$.

In the present paper, we develop a common framework for these types of pentagram-like maps. Specifically, we show that such maps can be viewed as pentagram maps in the projective plane over an appropriate {ring}.  In general, those rings need not be division rings or commutative. We show that the Grassmann pentagram map corresponds to the ring of matrices, while the skewer map is the pentagram map over the ring $\R[\eps] / (\eps^2)$ of \emph{dual numbers}. 

Furthermore, we prove that the pentagram map is integrable for \emph{any} \emph{stably finite} ground ring $R$, thus generalizing the results of  \cite{felipe2015pentagram, ovenhousegrassman} on integrability of the Grassmann map and answering the question of \cite{tabwineskins} regarding integrability of the skewer map. The formal definition of stably finite rings will be given in Section \ref{sec:sfr}, but we emphasize here that this is a large class of rings which includes, amongst others, all commutative and all Noetherian rings. 

Integrability is understood in this paper as the existence of a \emph{Lax form}. A discrete dynamical system is said to admit a Lax form if  it can be written in the form $L \mapsto A^{-1}LA$ for certain operators $L$, $A$ (for details, see Section \ref{sec:lax}). In the setting of pentagram maps over a ring $R$, these operators are \emph{pseudo-difference operators} with coefficients in $R$. \ai{When $R$ is a field, we recover the Lax form of the classical pentagram map given in \cite{clusteralgebras, izopentagramandrefactorization}.}

The existence of a Lax form is one of the characteristic features of integrable dynamical systems. In particular, for the classical pentagram map (i.e., for the pentagram map over a field), the Lax form has been a central tool for establishing the map's dynamical properties, such as quasi-periodicity, cf. \cite{solovievintegrability, weinreich2023algebraic}. The study of dynamical aspects of integrability for pentagram maps over rings is beyond the scope of the present paper. However, we do expect that quasi-periodicity-type results can be extended from fields to  finite-dimensional algebras over fields. 

Our main results can be summarized as follows:
\begin{theorem}  
Let $R$ be a stably finite ring. Then there is a natural one-to-one correspondence between the set of projective equivalence classes of polygons in the left  projective space $\Proj^{d}(R)$ and an appropriate quotient of the space of degree $d+1$ left difference operators with coefficients in $R$. In terms of difference operators, the pentagram map in $\Proj^2(R)$ is described by a linear equation
$$
  \wD_+\D_-=\wD_-\D_+
  $$
  where the operators $\D_\pm$ are associated to a polygon, and $\wD_\pm$ to its image under the pentagram map. As a corollary, the pentagram map over $R$ admits a Lax representation.
\end{theorem}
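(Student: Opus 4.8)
The plan is to prove the three assertions in turn: the polygon--operator correspondence, the linear equation, and the Lax form.

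\emph{The correspondence.} First I would fix the dictionary between polygons and difference operators. A point of $\Proj^{d}(R)$ is a free rank-one summand of the left module $R^{d+1}$, and choosing a unimodular representative amounts to lifting each vertex $v_i$ of a polygon to a vector $V_i\in R^{d+1}$, well defined up to right multiplication by a unit. The general-position condition built into the definition of a polygon should amount to the statement that every $d+1$ consecutive lifts $V_i,\dots,V_{i+d}$ generate $R^{d+1}$. This is exactly where stable finiteness enters: over a stably finite ring a generating $(d+1)$-tuple of the free module $R^{d+1}$ is automatically a basis, since a surjective endomorphism of $R^{d+1}$ is then injective. Consequently $V_{i+d+1}$ expands uniquely as $V_{i+d+1}=\sum_{j=0}^{d}V_{i+j}c_{ij}$, and running the same argument on the shifted tuple shows the trailing coefficient $c_{i0}$ is a unit. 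The recurrence thus defines a monic left difference operator $\D$ of degree $d+1$ with invertible trailing coefficient whose kernel is the free rank-$(d+1)$ module spanned by the lift. A projective transformation leaves $\D$ unchanged, as it only rewrites the lifts in a new basis without altering the linear relations among them, while a rescaling $V_i\mapsto V_it_i$ conjugates $\D$ by a diagonal gauge transformation; passing to the quotient of degree-$(d+1)$ operators by this gauge action produces the asserted bijection.

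\emph{The linear equation.} Next I would specialize to $d=2$ and unwind the pentagram map on lifts. Since $v_i'=(v_iv_{i+2})\cap(v_{i+1}v_{i+3})$, a lift $V_i'$ lies simultaneously in $\spanof(V_i,V_{i+2})$ and in $\spanof(V_{i+1},V_{i+3})$, which yields a relation $V_ia+V_{i+2}b=V_{i+1}c+V_{i+3}d$ with coefficients in $R$. I would encode the two families of short diagonals $(v_iv_{i+2})$ and $(v_{i+1}v_{i+3})$ into two difference operators $\D_-$ and $\D_+$ attached to $P$ (whose product recovers, up to gauge, the degree-three operator $\D$ of the correspondence), and let $\wD_\pm$ be the analogous operators for the image polygon $P'$. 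The identity $\wD_+\D_-=\wD_-\D_+$ then encodes the single geometric fact that the new vertices lie on the prescribed diagonals, and I would establish it by applying both composites to the lift $V$ and substituting the collinearity relations above, taking care throughout with the left/right placement of the ring coefficients. As a consistency check, over a field this recovers the classical Lax form of the pentagram map.

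\emph{The Lax form, and the main obstacle.} Finally, the corollary follows by reading the identity as a refactorization. Because $\D_\pm$ and $\wD_\pm$ have invertible leading coefficients, they are invertible in the ring $\PDO$ of pseudo-difference operators, so the identity can be rewritten as $\wD_+\wD_-^{-1}=\wD_-\bigl(\D_+\D_-^{-1}\bigr)\wD_-^{-1}$; setting $L=\D_+\D_-^{-1}$ and $A=\wD_-^{-1}$ this is precisely a Lax representation $\widehat L=A^{-1}LA$, and inserting a spectral parameter by grading the shift turns it into the isospectral form familiar from the classical case. The step I expect to be the genuine obstacle is not this formal algebra but securing its ingredients over a noncommutative, non-division ring: namely that the general-position condition really forces consecutive lifts to be a basis and the relevant leading and trailing coefficients to be units, so that $\D$, $\D_\pm$ and their pseudo-difference inverses exist and are well defined. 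This is exactly the point at which stable finiteness is indispensable, for without it a generating $(d+1)$-tuple may fail to be a basis and the entire operator construction would collapse.
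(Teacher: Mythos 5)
Your overall strategy for the first and third parts matches the paper's: lifts of vertices, the recurrence defining a degree-$(d+1)$ operator, stable finiteness forcing generating tuples to be bases and the trailing coefficient to be a unit, and the Lax form obtained by refactoring the operator identity in the ring of pseudo-difference operators. (Two small remarks there: the decomposition is additive, so $\D_++\D_-$, not their product, recovers $\D$; and you only sketch one direction of the bijection --- the paper also has to prove that operators producing projectively equivalent polygons lie in the same left-right gauge orbit, which it does via the kernel and its dual.)

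The genuine gap is in the middle step. You propose to split $\D=a+bT+cT^2+dT^3$ according to the two families of short diagonals, i.e.\ $\D_+=a+cT^2$, $\D_-=bT+dT^3$, and to prove $\wD_+\D_-=\wD_-\D_+$ for the forward pentagram map by ``applying both composites to the lift $V$ and substituting the collinearity relations.'' That argument only shows that the operator $\tilde\D:=\wD_-\D_+-\wD_+\D_-$ annihilates $\mathrm{Span}\,V=\Ker\D$, a free module of rank $3$. To conclude $\tilde\D=0$ one needs $\tilde\D$, after factoring out a power of $T$, to have degree strictly less than $3$, so that a nonzero such operator cannot contain $\Ker\D$ in its kernel (Proposition \ref{prop:kernpb}). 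With your even/odd splitting, $\tilde\D$ is supported in degrees $1$ through $5$, so $T^{-1}\tilde\D$ has degree up to $4$ and the vanishing argument fails; the paper explicitly flags that with this splitting the operator identity ``does not seem to be true as stated'' over general rings. The paper's workaround is to formulate the identity for the \emph{inverse} pentagram map with the low/high splitting $\D_+=a+bT$, $\D_-=cT^2+dT^3$, for which $\tilde\D$ is supported in degrees $2$ through $4$, so $T^{-2}\tilde\D$ has degree at most $2<3$ and the kernel argument closes. A secondary casualty of your splitting is the Lax step: $\D_-=bT+dT^3$ has lowest-order coefficient $b$, which need not be a unit, so $\D_-^{-1}$ need not exist in $R^\Z((T))$; the paper inverts $\D_+=a+bT$, whose lowest-order coefficient $a$ is a unit by proper boundedness.
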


We note that while similar results in the case of a field were obtained by the second author in \cite{izopentagramandrefactorization}, the arguments from that paper do not seem to generalize to the ring setting, and the construction of the present paper is essentially different. The main technical challenges to overcome are possible non-commutativity of the base ring along with the impossibility of resorting to general-position-type arguments.

We also note that while there are versions of the pentagram map in projective spaces of higher dimension, we chose to not pursue this direction here, as it comes with additional technical difficulties  related to more complicated combinatorics of multi-dimensional pentagram maps (intersection of a collection of planes as opposed to two lines).

\subsection{Structure of the paper}

In Section \ref{sec:sfr}, we recall the definition and standard properties of stably finite rings. Details can be found, e.g., in \cite[Chapter 1, \S 1B]{LamMR}. We also include a concise glossary, Table \ref{tab:glossary}, of terms from ring and module theory that are relevant to this paper. 

Section \ref{ProjSpaces} serves as an introduction into projective geometry over rings. We note that while ring geometry is a well-studied subject (see \cite{veldkamp1995geometry}), the setting of stably finite rings appears to be new. In Section~\ref{PS}, we define projective spaces as projectivizations of arbitrary modules. The case of free modules is treated in detail in Section \ref{sec:ndps}. We call the projectivization of a free rank $d+1$ module the \emph{projective space of dimension $d$}. \ai{We note that these projective spaces are \emph{not} projective spaces in the sense of incidence geometry (also known as \emph{combinatorial projective spaces}). In particular, for a general (stably finite) ring $R$, there are pairs of points in $\Proj^2(R)$ which are not contained in a unique line, as well as lines that do not share a unique point.}

Sections \ref{sec:psfmr} and \ref{sec:psfmr2} are devoted to projective spaces over the full matrix ring, which is the first of our two main examples. The second main example, the ring of dual numbers, is discussed in Section~\ref{sec:ppdn}. Additionally, in Section \ref{sec:path}, we discuss projective spaces over path algebras of quivers, including the algebra of triangular matrices.

In Section \ref{sec:ppm}, we define the notion of a \emph{polygon} in the projective plane over a ring, as well as show that the pentagram map on polygons is well-defined. Furthermore, we show that, for matrix rings and dual numbers, the general notion of a pentagram map over a ring specializes to Grassmann and skewer pentagram maps respectively. 

In Section \ref{DO}, we introduce our main technical tool: difference operators with coefficients in a ring. After discussing basic properties of such operators, we use that language to give an algebraic description of the moduli space of polygons in the projective \ai{space} over a ring. The main result of the section, Theorem \ref{DOcorresPOLY}, is a ring counterpart of \cite[Proposition 3.3]{izopentagramandrefactorization} which gives a description of real polygons in terms of difference operators. We note that the proof in the ring case is substantially more technical due to the lack of commutativity and fundamentally relies on the stable finiteness condition. 

In Section \ref{knownPM}, we establish integrability of the pentagram over an arbitrary stably finite ring. As a first step, in Section \ref{sec:pmdo}, we reformulate the pentagram map in the language of difference operators. The main result of the section is Theorem \ref{IPMintermsDO}, which generalizes  \cite[Theorem 1.1]{izopentagramandrefactorization} and says that, in terms of difference operators, the pentagram map is described by a \emph{linear} equation. This is used in Section \ref{sec:lax} to give a Lax form of the pentagram map. It is given in terms of pseudo-difference operators, which are introduced in Section \ref{PDO}.

Finally, in Section \ref{InvariantsGPM}, we use the Lax form to give invariants (i.e., conserved quantities, or \emph{first integrals}) of the ring pentagram map on closed polygons. These invariants are valued in the \emph{cyclic space} of $R$ which is a non-trivial Abelian group unless $R$ is a \emph{commutator ring}, i.e., is spanned over $\Z$ by expressions of the form $ab-ba$. In particular, we get non-trivial invariants for any commutative ring (whose cyclic space is the ring itself) and the full matrix ring (whose cyclic space coincides with the ground field).  In the case when the ground ring is $\R$, these invariants (in a slightly more general setting of \emph{twisted polygons}) are known to form a maximal Poisson-commuting  family~\cite{integrabilityPMOvScTa}. We expect that similar results can be obtained in the ring setting. 

\medskip
{\bf Acknowledgments.} \ai{The authors are grateful to Boris Khesin, Nicholas Ovenhouse, Richard Schwartz,  Max Weinreich, and anonymous referees for useful comments and fruitful discussions.} This work was supported by NSF grant DMS-2008021 and the Simons Foundation through its Travel Support for Mathematicians program. A part of this work was done during A.I.’s visit to Max Planck Institute for Mathematics in Bonn. A.I. would like to thank the Institute’s faculty and staff for their support and stimulating atmosphere.

\section{Stably finite rings}\label{sec:sfr}
\begin{table}
\renewcommand{\arraystretch}{1.25}
\centering
\footnotesize
\begin{tabular}{l l}
    \hline
    \textbf{Unital ring} & Ring with multiplicative identity. \\ 
    \textbf{Opposite ring} &  
    {Ring with same elements and addition but reverse multiplication order.} \\
    \textbf{Dedekind-finite ring} & Ring for which $ab = 1 \implies ba = 1$. \\
    \textbf{Left Noetherian ring} & Ring for which any ascending chain of left ideals must stabilize.\\
        \textbf{Cyclic space of a ring} & Quotient $R/[R,R]$, where $[R,R]$ is the $\Z$-span of commutators.\\
               \textbf{Commutator ring} & Ring $R$ such that $[R,R] = R$.\\
    \textbf{Endomorphism} & Homomorphism from a module to itself. \\
    \textbf{Split exact sequence} & Short exact sequence isomorphic to $0\to A \to A\oplus B\to B \to 0$.  
    \\
    \textbf{Unital module} & Module $M$ over unital ring such that $1v = v$ for all $v \in M$. \\
        \textbf{Cyclic module} & Module generated by a single element.\\
    \textbf{Free module} & Module that has a basis (need not be finite in length).\\
    \textbf{Finite-rank free module} & Free module that admits a basis of finite length. \\
    \textbf{Hopfian module} & Module for which any surjective endomorphism is also  an isomorphism. \\
    \textbf{Noetherian module} & Module for which any ascending chain of submodules must stabilize. \\
    \textbf{Stably finite ring} & Ring for which any finite-rank free module is Hopfian.  \\
        \textbf{Ring with IBN property} & Ring for which all bases of a finite-rank free module have same length. \\ 
     \textbf{Rank of a free module} & Number of elements in a basis. Well-defined for rings with IBN property.  \\ 
    \hline
\end{tabular}
\caption{Glossary of terms related to rings and modules}
\label{tab:glossary}
\end{table}

In this section, we recall the definition and basic properties of \emph{stably finite rings} -- the main class of rings relevant to the present paper; for details, see \cite[Chapter 1, \S 1B]{LamMR}. One of the main results of the section is that such rings have an \emph{invariant basis number} property, which means that any finitely generated free module over such a ring has a well-defined \emph{rank}, just like any vector space over a field has well-defined dimension.

For the reader's convenience, we include a concise table of relevant terms from ring theory that will be used frequently throughout the paper in Table \ref{tab:glossary}.  More detailed definitions and examples are given throughout the paper. In what follows, all rings are assumed to be unital and non-trivial (i.e. have a multiplicative identity element $1 \neq 0$), but may be non-commutative and/or have zero divisors. All modules are supposed to be unitary. 

\begin{definition}
	Let $R$ be a ring. An $R$-module $M$ is \emph{Hopfian} if any surjective endomorphism $\phi \in \End_R (M)$ is an isomorphism. 
\end{definition}

\begin{example}
A finite-dimensional vector space is a Hopfian module. The vector space $\R[x]$ of single-variable polynomials is not Hopfian, since the differentiation map is surjective but not an isomorphism.
\end{example}
For free modules, the Hopfian condition can also be reformulated using the notion of a \emph{Dedekind\ai{-}finite} ring. Recall that a ring is {Dedekind\ai{-}finite} if $ab = 1$ implies $ba = 1$.  

\begin{example}
A common example of a non-Dedekind-finite ring is the ring of endomorphisms of the vector space $\R^{\N}$ of real-valued sequences $x_1, x_2, \dots \in \R$. Consider left shift and right shift operators on $\R^{\N}$, defined by
$
L(x_1, x_2, \dots) = x_2, x_3, \dots$, $ R(x_1, x_2, \dots) = 0, x_1, x_2, \dots
$.
Then $L \circ R$ is the identity, but $R \circ L$ is not.  
\end{example}
\begin{prop}\label{prop:hdf}
 Suppose $R$ is a ring, and $M$ is a free $R$-module. Then the following conditions are equivalent. 
 \begin{enumerate}
\item $M$ is Hopfian; \item the ring  $\End_R (M)$ is Dedekind\ai{-}finite 
\item if $M \simeq M \oplus N$ for some other $R$-module $N$, then $N = 0$.
 \end{enumerate}
\end{prop}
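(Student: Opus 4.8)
The plan is to prove the three equivalences as a cyclic chain of implications $(1) \Rightarrow (2) \Rightarrow (3) \Rightarrow (1)$. Throughout I fix the convention that the product $ab$ in $\End_R(M)$ is the composite $a \circ b$, so that the ring unit is $\mathrm{id}_M$; with this convention, saying $\End_R(M)$ is Dedekind-finite means exactly that a one-sided identity $ab = \mathrm{id}_M$ upgrades to $ba = \mathrm{id}_M$. I expect the first two implications to be purely formal and to hold for an arbitrary $R$-module; freeness should enter only in the final step, where it is indispensable. For $(1) \Rightarrow (2)$, I would take $a,b \in \End_R(M)$ with $ab = \mathrm{id}_M$, observe that $a \circ b = \mathrm{id}_M$ forces $a$ to be surjective, invoke the Hopfian hypothesis to conclude that $a$ is an isomorphism, and then left-multiply $ab = \mathrm{id}_M$ by $a^{-1}$ to get $b = a^{-1}$ and hence $ba = \mathrm{id}_M$.

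For $(2) \Rightarrow (3)$, suppose $f \colon M \to M \oplus N$ is an isomorphism with inverse $g$, and let $\iota \colon M \to M \oplus N$ and $\pi \colon M \oplus N \to M$ be the inclusion of and projection onto the first summand, so $\pi\iota = \mathrm{id}_M$. I would set $a = \pi f$ and $b = g\iota$, both endomorphisms of $M$, and compute $ab = \pi(fg)\iota = \pi\iota = \mathrm{id}_M$. Dedekind-finiteness then yields $ba = \mathrm{id}_M$, i.e. $g\iota\pi f = \mathrm{id}_M$; left-multiplying by $f$ and right-multiplying by $g$ and using $fg = \mathrm{id}_{M\oplus N}$ turns this into $\iota\pi = \mathrm{id}_{M \oplus N}$. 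Since $\iota\pi$ is the map $(m,n) \mapsto (m,0)$, the identity $\iota\pi = \mathrm{id}_{M\oplus N}$ forces $N = 0$, which is $(3)$.

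For $(3) \Rightarrow (1)$, I would take a surjective $\phi \in \End_R(M)$ and consider the short exact sequence $0 \to \Ker\phi \to M \xrightarrow{\phi} M \to 0$. This is the one place where the hypothesis that $M$ is free is used: freeness gives projectivity, so the surjection $M \twoheadrightarrow M$ admits a section, the sequence splits, and $M \simeq M \oplus \Ker\phi$. Applying $(3)$ forces $\Ker\phi = 0$, so $\phi$ is injective as well as surjective and hence an isomorphism. The genuinely non-formal ingredient, and the step I expect to require the most care, is exactly this splitting: the other two implications are diagram chases whose only subtlety is keeping the composition convention consistent so that one-sided identities in $\End_R(M)$ translate faithfully into geometric statements about $\iota$, $\pi$, $f$, and $g$, whereas $(3) \Rightarrow (1)$ genuinely relies on projectivity to realize $\Ker\phi$ as a direct summand of $M$.
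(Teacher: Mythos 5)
Your proposal is correct and follows essentially the same route as the paper: the same cycle $(1)\Rightarrow(2)\Rightarrow(3)\Rightarrow(1)$, the same surjectivity argument for the first implication, the same $\pi$, $\iota$ trick for the second (the paper extracts $N=\Ker\pi=0$ from the invertibility of $\pi f$, while you conjugate to get $\iota\pi=\mathrm{id}_{M\oplus N}$ — a cosmetic difference), and the same use of freeness to split the exact sequence in the third.
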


In what follows, the $\oplus$ symbol might mean external or internal direct sum, depending on the context. Here, $M \simeq M \oplus N$  means that $M$ is isomorphic to the external direct sum of $M$ and $N$, since an internal direct sum would not make sense in this setting. On the other hand, writing $M = N \oplus O$ for a given module $M$ would mean that $M$ is the internal direct sum of its submodules $N$ and $O$.

\begin{proof}[Proof of Proposition \ref{prop:hdf}]
	$1 \Rightarrow 2$. Suppose $M$ is Hopfian, and let $\phi, \psi \in \End_R (M)$ be such that $\phi\psi = \mathrm{id}$. Then $\phi$ is surjective and hence an isomorphism. So,  $\psi \phi= \mathrm{id}$ as well, as needed. 
	
	$2 \Rightarrow 3$.
	Suppose the ring $\End_R (M)$ is Dedekind\ai{-}finite, and let $\phi \colon M \to M \oplus N$ be an isomorphism. Let also $\pi \colon M \oplus N \to M$ be the projection onto the first summand, and $i \colon M \to M \oplus N$ be the inclusion map. Then $( \pi \phi)( \phi^{-1} i) =  \pi i = \mathrm{id}$, so, by Dedekind\ai{-}finiteness, $\pi \phi$ is an isomorphism, and $N = \Ker \pi = 0$.
	
	$3 \Rightarrow 1$. Suppose that $M \simeq M \oplus N$ implies $N = 0$. Let $\phi \in \End_R (M)$ be surjective. Then, since $M$ is free, the exact sequence $0 \to \Ker \phi \to M \to M \to 0$ splits, and $M \simeq M \oplus \Ker \phi$. So, $\Ker \phi = 0$, proving that $M$ is Hopfian. \qedhere
	
\end{proof}
\begin{cor}\label{cor:dual}
Suppose  $R$ is a ring, and $M$ is a free $R$-module of finite rank \ai{(i.e., admits a finite basis)}. Then $M$ is Hopfian if and only if its dual module $M^* = \Hom_R(M, R)$ is Hopfian.
\end{cor}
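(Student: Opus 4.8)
The plan is to deduce the statement for $M^*$ from the corresponding statement for $M$ by exploiting the symmetry between a finite-rank free module and its dual. The key observation is that the dual of a finite-rank free module is again free of the same rank, and that double-dualization behaves well in this setting. First I would establish that if $M$ is free with finite basis $e_1,\dots,e_n$, then $M^* = \Hom_R(M,R)$ is free with the dual basis $e_1^*,\dots,e_n^*$ (defined by $e_i^*(e_j) = \delta_{ij}$), and hence $M^*$ is itself a finite-rank free module. This is the routine computation that makes the corollary amenable to Proposition \ref{prop:hdf}.

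The natural strategy is then to use the canonical evaluation map $\mathrm{ev}\colon M \to M^{**}$ sending $v$ to the functional $\phi \mapsto \phi(v)$, and to check that for a finite-rank free module this is an isomorphism (again verifiable on the dual-of-dual basis). Granting this, I would argue that the operation $M \mapsto M^*$ sets up a bijection between finite-rank free modules that reverses nothing essential: a surjective endomorphism of $M^*$ dualizes to an \emph{injective} endomorphism of $M^{**} \simeq M$, and conversely. The cleanest route, however, is probably to invoke condition (3) of Proposition \ref{prop:hdf} rather than reasoning directly about endomorphisms. Since $M$ is free of finite rank, say rank $n$, Hopfianity of $M$ is by the proposition equivalent to the statement that $M \simeq M \oplus N$ forces $N = 0$. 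I would show that $M^*$ inherits exactly the same property by dualizing such an isomorphism: applying $\Hom_R(-,R)$ to $M \simeq M \oplus N$ yields $M^* \simeq M^* \oplus N^*$, using that duals turn direct sums into direct sums.

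The delicate point, and where I expect the main obstacle to lie, is controlling $N$ in the decomposition. In condition (3) the complement $N$ is \emph{a priori} an arbitrary module, not necessarily free or even finite-rank, so I cannot directly apply the duality facts above (which were established only for finite-rank free modules) to $N$ and $N^*$. To handle this, I would first note that if $M \simeq M \oplus N$ with $M$ finite-rank free, then $N$ is a direct summand of a finite-rank free module and hence is finitely generated and projective; the real content is ruling out a nonzero such $N$. Rather than fight this for $N$ directly, the more robust approach is to reduce everything to the ring-level statement: by Proposition \ref{prop:hdf}, $M$ is Hopfian iff $\End_R(M)$ is Dedekind-finite iff $\End_R(M^*)$ is Dedekind-finite iff $M^*$ is Hopfian. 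So it suffices to produce a ring isomorphism (or anti-isomorphism) between $\End_R(M)$ and $\End_R(M^*)$, since Dedekind-finiteness is preserved under both isomorphisms and anti-isomorphisms of rings.

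This is in fact the cleanest path, and I would make it the spine of the argument: the dualization functor sends an endomorphism $\phi\colon M \to M$ to its transpose $\phi^*\colon M^* \to M^*$, and transposition reverses composition, i.e. $(\phi\psi)^* = \psi^*\phi^*$, so it gives a ring \emph{anti}-isomorphism $\End_R(M) \to \End_R(M^*)$ (it is a bijection precisely because $\mathrm{ev}\colon M \to M^{**}$ is an isomorphism for finite-rank free $M$). Since $\mathrm{id}^* = \mathrm{id}$, the relation $ab = 1$ in $\End_R(M)$ transforms into $b^*a^* = 1$ in $\End_R(M^*)$, and one checks directly that Dedekind-finiteness of a ring is equivalent to that of its opposite. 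Putting these together, $\End_R(M)$ is Dedekind-finite if and only if $\End_R(M^*)$ is, whence the equivalence of Hopfianity of $M$ and of $M^*$ follows at once from condition (2) of Proposition \ref{prop:hdf}. The only genuinely substantive ingredient is the isomorphism $M \simeq M^{**}$, which must be proved using the finite basis; without finiteness the evaluation map need not be surjective, and the corollary would fail.
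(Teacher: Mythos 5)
Your final argument is exactly the paper's proof: transposition gives a ring anti-isomorphism $\End_R(M)\to\End_R(M^*)$ (bijective because $M\simeq M^{**}$ for finite-rank free $M$), and Dedekind-finiteness is left-right symmetric, so condition (2) of Proposition \ref{prop:hdf} transfers. The preliminary detour through condition (3) is unnecessary, but the spine of the argument is correct and identical to the paper's.
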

\begin{proof}
For a free module of finite rank, the map sending an endomorphism to its adjoint is an anti-isomorphism between the rings $\End_R (M)$ and $\End_R (M^*)$. Since the Dedekind\ai{-}finiteness condition is left-right symmetric, it is preserved by anti-isomorphisms. 
\end{proof}

Furthermore, if $M$ is {free of finite rank}, the Hopfian condition can be restated as follows.
\begin{prop}\label{prop:gsb}
Suppose an $R$-module $M$ has a basis of size $n$. Then $M$ is Hopfian if and only if any  generating set of size $n$ is a basis of $M$. 
\end{prop}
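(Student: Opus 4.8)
The plan is to exploit the dictionary between endomorphisms of $M$ and $n$-tuples of its elements. Fix a basis $e_1,\dots,e_n$ of $M$. Since $M$ is free on this basis, for every tuple $(v_1,\dots,v_n)$ of elements of $M$ there is a unique $\phi\in\End_R(M)$ with $\phi(e_i)=v_i$, and every endomorphism arises this way. Under this correspondence, $\phi$ is surjective if and only if $v_1,\dots,v_n$ generate $M$, and $\phi$ is an isomorphism if and only if $v_1,\dots,v_n$ form a basis: an isomorphism carries the basis $\{e_i\}$ to a basis, and conversely, if $\{v_i\}$ is a basis, then freeness on $\{v_i\}$ produces a two-sided inverse via $v_i\mapsto e_i$. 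With this dictionary, the two conditions in the proposition become translations of one another, and the proof amounts to reading the dictionary in each direction.

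For the forward direction, I assume $M$ is Hopfian and take a generating set $v_1,\dots,v_n$ of size $n$. I form the $\phi\in\End_R(M)$ with $\phi(e_i)=v_i$. Since the $v_i$ generate, $\phi$ is surjective, so the Hopfian hypothesis makes $\phi$ an isomorphism; by the dictionary $\{v_i\}$ is then a basis. For the converse, I assume every generating set of size $n$ is a basis and take a surjective $\phi\in\End_R(M)$. Then $\phi(e_1),\dots,\phi(e_n)$ generate $M$, so by hypothesis they form a basis and $\phi$ is an isomorphism; hence $M$ is Hopfian.

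The one point requiring care — and which I expect to be the only genuine obstacle — is the precise meaning of ``generating set of size $n$.'' The clean argument above treats it as a family of $n$ elements, identified with $\phi$; if one instead insists on a set of $n$ \emph{distinct} elements, then in the converse direction the images $\phi(e_i)$ could a priori coincide and the hypothesis would not apply directly. To settle this I would argue by contrapositive: if $M$ is not Hopfian, choose a surjective non-injective $\phi$, so the $\phi(e_i)$ generate $M$ but satisfy a nontrivial relation. When the $\phi(e_i)$ are distinct, they already give a size-$n$ generating set that is linearly dependent, hence not a basis. When they are not distinct, $M$ is generated by fewer than $n$ elements; I then enlarge such a small generating set to exactly $n$ distinct elements by adjoining further elements of $M$ (possible since $|M|=|R|^n\ge 2^n\ge n$, using $1\neq 0$). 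Any element adjoined beyond a generating subset lies in the span of the others, producing a nontrivial relation, so the enlarged set still generates $M$ but is not a basis. In either case the hypothesis is contradicted, which completes the converse.

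I should stress that nothing beyond freeness of $M$ and finiteness of its basis is used, so the statement holds for an arbitrary unital ring $R$; stable finiteness plays no role here. The real content is simply the equivalence ``surjective $\Rightarrow$ isomorphism'' $\Longleftrightarrow$ ``generating $\Rightarrow$ basis'' transported across the tuple-to-endomorphism dictionary, with the distinctness bookkeeping of the previous paragraph being the only technical nuisance.
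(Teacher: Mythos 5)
Your proof is correct and follows essentially the same route as the paper: fix a basis, identify $n$-tuples with endomorphisms, and read off the equivalence between ``surjective $\Rightarrow$ isomorphism'' and ``generating $\Rightarrow$ basis'' in both directions. The extra discussion of distinctness is a reasonable precaution but not needed for the paper's purposes, which implicitly treats generating sets as indexed families.
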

\begin{proof}
Let $v_1, \dots, v_n$ be a basis of $M$. Suppose $M$ is Hopfian, and let $w_1, \dots, w_n$ be a generating set. Then the endomorphism $M \to M$  taking $v_i$ to $w_i$ is surjective and hence an isomorphism. Therefore, $w_1, \dots, w_n$ is a basis.\par
Conversely, assume that any generating set of size $n$ is a basis of $M$, and let $\phi \in \End_R (M)$ be surjective. Then $\phi$ takes $v_1, \dots, v_n$ to a generating set, which, by our assumption, must be a basis again. So, $\phi$ is an isomorphism.  
\end{proof}
 
\begin{definition} \label{defStablyFinite}
A ring $R$ is called \textit{stably finite} (or \emph{weakly finite}) if any free finite rank $R$-module is Hopfian. 

\end{definition}

Since the dual of a left module is a right module and vice versa, by Corollary \ref{cor:dual} we see that all  free finite rank left $R$-modules are Hopfian if and only if  all  free finite rank right $R$-modules are Hopfian. So, Definition \ref{defStablyFinite} might be phrased in terms of left modules or right modules. 

One of the most important properties of stably finite rings is that they have an \emph{invariant basis number} (IBN) property:

\begin{prop}\label{prop:ibn}
Assume that $R$ is stably finite, $M$ is an $R$-module, and $v_1, \dots, v_n$, $w_1, \dots, w_m$ are two bases in $M$. Then $ m = n$. 
\end{prop}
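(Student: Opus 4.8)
The plan is to reduce the invariant basis number property to the Hopfian condition already established, using the characterization in Proposition \ref{prop:hdf}. First I would observe that, since $M$ possesses a basis of size $n$, it is free of finite rank and hence isomorphic to the free module $R^n$; likewise the second basis gives $M \simeq R^m$. Composing these two isomorphisms yields $R^n \simeq R^m$, so the proposition is equivalent to the assertion that $R^n \simeq R^m$ forces $n = m$ whenever $R$ is stably finite. This is exactly the statement that a stably finite ring has the IBN property.

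Without loss of generality I would assume $m \geq n$. Splitting off the last $m-n$ coordinates of the free module $R^m$, I write $R^m = R^n \oplus R^{m-n}$ as an internal direct sum. Combining this decomposition with the isomorphism $R^n \simeq R^m$ produces $R^n \simeq R^n \oplus R^{m-n}$. Since $R$ is stably finite and $R^n$ is free of finite rank, $R^n$ is Hopfian, so the implication $(1) \Rightarrow (3)$ of Proposition \ref{prop:hdf} applies with the free module there taken to be $R^n$ and the complementary summand $N = R^{m-n}$, giving $R^{m-n} = 0$. Finally, because the ring is nontrivial ($1 \neq 0$), any $R^k$ with $k \geq 1$ contains a copy of $R$ and is therefore nonzero; hence $R^{m-n} = 0$ can only happen when $m - n = 0$, i.e. $m = n$.

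I do not anticipate a genuine obstacle here: once the two bases are converted into an isomorphism $R^n \simeq R^m$ and a single free summand is peeled off, the conclusion follows formally from the equivalence $(1)\Leftrightarrow(3)$ of Proposition \ref{prop:hdf}. The only point requiring a little care is the left/right module bookkeeping, but since $M$ may be taken to be a left module and $R^n$ the corresponding free left module throughout, and since stable finiteness is left-right symmetric (as noted after Definition \ref{defStablyFinite}), this causes no real difficulty. An equivalent route, which I would mention as an alternative, is to invoke Proposition \ref{prop:gsb} directly: the Hopfian module $M \simeq R^n$ has the property that any generating set of size $n$ is a basis, and one can derive the same contradiction from a putative larger basis; but the direct-summand argument above is the cleanest, as it isolates precisely the content of stable finiteness.
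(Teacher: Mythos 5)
Your proposal is correct and is essentially the paper's own argument: the paper likewise reduces to an isomorphism $R^n \simeq R^m$ with $n \le m$ and exploits the Hopfian property of $R^n$, only it does so by composing the isomorphism with the coordinate projection $R^m \to R^n$ directly rather than by citing the implication $(1)\Rightarrow(3)$ of Proposition \ref{prop:hdf} (whose proof is that very composition trick). Your packaging via the direct-summand criterion is a legitimate, equivalent route and no gap is present.
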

\begin{proof}
This is equivalent to saying that  $R^m \simeq R^n$ (as left or right modules) implies $m = n$. 
Assume that $ \phi \colon R^n \to R^m$ is an isomorphism. Without loss of generality, let $n \leq m$. Then the projection map $\pi \colon R^m \to R^n$, $\pi(r_1, \dots, r_m) := (r_1, \dots, r_n)$ is surjective, and so is the composition $\pi\phi \colon R^n \to R^n$. Therefore, $\pi \phi$ is an isomorphism, and $\pi$ is injective. So, $ m = n$. 
\end{proof}
Thus, for every finite rank free module over a stably finite ring, the notion of rank is well-defined.

The following gives another equivalent characterization of stably finite rings. 
\begin{prop} \label{epiisiso}
 A ring $R$ is  stably finite if and only if all matrix rings $ \text{M}_n(R)$ are Dedekind-finite. In particular, all stably finite rings are Dedekind-finite. 

\end{prop}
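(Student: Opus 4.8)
The plan is to reduce the statement to the endomorphism-ring characterization of Hopfian modules already established in Proposition~\ref{prop:hdf}. The starting observation is that, up to isomorphism, a free $R$-module of finite rank is nothing but $R^n$ for some $n$: choosing a finite basis furnishes such an isomorphism. Hence $R$ is stably finite if and only if $R^n$ is Hopfian for every $n \geq 1$ (the zero module being trivially Hopfian). By the remark following Definition~\ref{defStablyFinite}, it is immaterial whether we work with left or right modules here, so I would simply fix one convention.

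Next I would identify the relevant endomorphism ring. Fixing the standard basis of $R^n$ and recording an endomorphism by its matrix of coefficients yields a ring (anti-)isomorphism between $\End_R(R^n)$ and a full matrix ring over $R$. Whether one lands in $\text{M}_n(R)$ or in its opposite $\text{M}_n(R)^{op} = \text{M}_n(R^{op})$ depends on the chosen left/right conventions, but this ambiguity is harmless: the defining condition $ab = 1 \Rightarrow ba = 1$ for Dedekind-finiteness is left--right symmetric, so a ring is Dedekind-finite if and only if its opposite is (this is exactly the symmetry already exploited in the proof of Corollary~\ref{cor:dual}). Consequently $\End_R(R^n)$ is Dedekind-finite precisely when $\text{M}_n(R)$ is.

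Combining these observations with the equivalence $1 \Leftrightarrow 2$ of Proposition~\ref{prop:hdf} gives, for each fixed $n$, the chain: $R^n$ is Hopfian $\iff \End_R(R^n)$ is Dedekind-finite $\iff \text{M}_n(R)$ is Dedekind-finite. Quantifying over all $n \geq 1$ then yields the desired equivalence, that $R$ is stably finite if and only if every matrix ring $\text{M}_n(R)$ is Dedekind-finite. For the final assertion, I would specialize to $n = 1$, where $\text{M}_1(R) = R$, to conclude that any stably finite ring is itself Dedekind-finite.

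The only genuinely delicate point is the bookkeeping in the endomorphism--matrix identification, namely keeping track of the order of multiplication so that composition of endomorphisms matches matrix multiplication (possibly after reversal, which accounts for the appearance of the opposite ring). Once the opposite-ring symmetry of Dedekind-finiteness is in hand, the rest is a direct translation through Proposition~\ref{prop:hdf}, with no further computation required.
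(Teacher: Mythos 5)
Your proof is correct and follows essentially the same route as the paper: both reduce the statement to the equivalence $1 \Leftrightarrow 2$ of Proposition~\ref{prop:hdf} via the identification of $\End_R(R^n)$ with a full matrix ring over $R$ (the paper sidesteps the opposite-ring bookkeeping by working with right modules, while you handle it via the left--right symmetry of Dedekind-finiteness, which is equally valid). The specialization to $n=1$ for the final assertion is also exactly what the paper does.
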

\begin{proof}
The first claim follows from Proposition \ref{prop:hdf}, combined with the isomorphism $\End_R (M) \simeq \Mat_n(R)$ for a free rank $n$ right module $M$. The second claim follows from the identification $ \Mat_1(R) \simeq R$. \qedhere
\end{proof}

 Recall that a ring is \emph{left Noetherian} if any ascending chain $I_1 \subset I_2 \subset \dots$ of left ideals eventually stabilizes, i.e., $I_k = I_{k+1} = \dots$ for some $k$. Most rings encountered in applications are Noetherian.

 \begin{example}
     Any finitely generated algebra over a field is a Noetherian ring. On the other hand, the ring $ \R[x_1, \dots]$ of polynomials in infinitely many variables is not Noetherian, since we have a strictly ascending chain of ideals $\langle x_1 \rangle \subset \langle x_1, x_2 \rangle \subset \dots$
 \end{example}

\begin{prop} \label{thmKindsofStablyFiniteRings}
All left Noetherian rings are stably finite. 
\end{prop}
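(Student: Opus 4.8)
The plan is to reduce the statement to two standard facts: that a finite-rank free left module over a left Noetherian ring is a Noetherian module, and that every Noetherian module is Hopfian. Combining these shows that all finite-rank free left $R$-modules are Hopfian; by the remark following Definition \ref{defStablyFinite} (which rests on Corollary \ref{cor:dual}), this is equivalent to stable finiteness, so there is no need to treat right modules separately.

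First I would establish the module-theoretic input. By definition, $R$ being left Noetherian means precisely that $R$, viewed as a left module over itself, is a Noetherian module. A routine induction on $n$, using the fact that in a short exact sequence $0 \to A \to B \to C \to 0$ the middle module $B$ is Noetherian whenever $A$ and $C$ are, shows that the free module $R^n$ is a Noetherian left $R$-module for every $n$. Since any finite-rank free left $R$-module is isomorphic to some $R^n$, all such modules are Noetherian.

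The heart of the argument is the lemma that a surjective endomorphism $\phi \in \End_R(M)$ of a Noetherian module $M$ is automatically injective. To prove it I would consider the ascending chain of submodules $\Ker \phi \subseteq \Ker \phi^2 \subseteq \cdots$; by Noetherianity it stabilizes, so $\Ker \phi^n = \Ker \phi^{n+1}$ for some $n$. Given $x \in \Ker \phi$, surjectivity of $\phi$ (and hence of $\phi^n$) yields a $y$ with $\phi^n(y) = x$; then $\phi^{n+1}(y) = \phi(x) = 0$ forces $y \in \Ker \phi^{n+1} = \Ker \phi^n$, whence $x = \phi^n(y) = 0$. Thus $\Ker \phi = 0$ and $\phi$ is an isomorphism, i.e., $M$ is Hopfian.

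Putting these together, every finite-rank free left $R$-module is Noetherian and therefore Hopfian, which is exactly the stable finiteness of $R$. I do not anticipate a genuine obstacle here: the only real subtlety is bookkeeping the left/right dichotomy, which the remark after Definition \ref{defStablyFinite} disposes of, and the kernel-chain argument never uses commutativity, so it applies verbatim in the non-commutative setting that is the focus of the paper.
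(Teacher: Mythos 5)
Your proposal is correct and follows essentially the same route as the paper's proof: the kernel-chain argument ($\Ker\phi \subset \Ker\phi^2 \subset \dots$ stabilizes, then pull back an element of $\Ker\phi$ via $\phi^n$) is exactly the one used there, with the paper simply asserting rather than spelling out that $R^n$ is Noetherian. The left/right bookkeeping you mention is handled in the paper by the one-line remark after the proposition ("Likewise, all right Noetherian rings are stably finite"), consistent with your appeal to Corollary \ref{cor:dual}.
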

\begin{proof} 
Let $M$ be a finite rank free left module over a left Noetherian ring $R$. Then $M$ is a Noetherian module. Let $\phi \in \End_R(M)$ be surjective. By the Noetherian property, the chain of submodules $\Ker \phi \subset \Ker \phi^2 \subset \dots$ must stabilize: $\Ker \phi^n = \Ker \phi^{n+1}$ for some $n$. Suppose $v \in \Ker \phi$. By surjectivity, there is $w \in M$ such that $v = \phi^n(w)$. Then $\phi^{n+1}(w) = \phi(v) = 0$, so $w \in \Ker \phi^{n+1}$. But $\Ker \phi^n = \Ker \phi^{n+1}$, so $v = \phi^n(w) = 0$. Thus, $\phi$ is an isomorphism. 

\end{proof}
Likewise, all right Noetherian rings are stably finite.

\section{Projective spaces over rings} \label{ProjSpaces}

This section is an introduction into projective geometry over rings. We note that while ring geometry is a well-studied subject, 
the setting of stably finite rings considered here appears to be more general than what can be found in the existing literature. For instance, \cite{veldkampSR2, veldkamp1995geometry} only consider rings of \emph{stable rank $2$}. Any such ring is stably finite, but not every stable finite ring has stable rank $2$. For instance, the ring of integers, which is commutative and hence stably finite, has stable rank $3$. Furthermore, there exist commutative, and hence stably finite, rings of arbitrary finite and even infinite stable rank \cite{mortini1992example, vass}. 


\subsection{The projectivization of a module} \label{PS}

There exist several, generally inequivalent, ways to define projective spaces over rings. Here we adapt the definition given in \cite[Section 2]{veldkampSR2}. Although we will mainly be interested in projective spaces obtained from free modules, we first define the projectivization of an arbitrary module. 
We only consider the case of left modules. Right modules are considered analogously.

\begin{definition} \label{defSubspace}

Let $M$ be an $R$-module. A \emph{subspace}  $N \subset M$ is a free submodule which admits a direct complement in $M$ (i.e.\ai{,} there is a submodule $K \subset M$, not necessarily free, such that $M = N \oplus K$).
\end{definition}

\begin{example}
Consider $\Z$ as a module over itself. $2\Z \subset \Z$ is its free submodule but not a subspace, since it does not admit a complement.
\end{example}

\begin{prop}\label{prop:subspsm}
Suppose $M$ is an $R$-module, and $N \subset M$ is a subspace. Then $N$ is also a subspace in any submodule of $M$ containing $N$.
\end{prop}
\begin{proof}
If $M = N \oplus K$, and $L \supset N$ is another submodule of $M$, then $L = N \oplus (K \cap L)$.
\end{proof}

\begin{definition}
 The \emph{projectivization} $\Proj (M)$ is the set of rank one subspaces of $M$. 
\end{definition}
\begin{definition}
Given $R$-modules $M, N$, a \emph{projective transformation} $\Proj (M) \to \Proj (N)$  is a map induced by a module isomorphism $M \to N$.
\end{definition}

For a module $M$, its projectivization $\Proj(M)$ may also be defined as the set of \emph{unimodular} elements modulo scalars. 

\begin{definition}\label{def:unimod}
Let $M$ be an $R$-module, and let $M^* =\Hom_R(M,R)$ be its dual module. An element $v \in M$ is called \emph{unimodular} if there exists $\xi \in M^*$ such that $\xi(v) = 1$ (equivalently, if there exists $\xi \in M^*$ such that $\xi(v)$ is a unit).
\end{definition}
\begin{example}
In the $\Z$-module $\Z^n$, an element $(k_1, \dots, k_n)$ is unimodular if and only if $\mathrm{gcd}(k_1, \dots, k_n) = 1$.
\end{example}
\begin{prop} \label{1DandUni}
Rank one subspaces of  \ai{an $R$-module} $M$ are precisely cyclic submodules of $M$ generated by unimodular elements.
\end{prop}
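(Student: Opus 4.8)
### Proof Proposal

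The plan is to prove Proposition \ref{1DandUni} by establishing a two-way correspondence: every rank one subspace is generated by a unimodular element, and conversely every cyclic submodule generated by a unimodular element is a rank one subspace. The key technical device throughout will be the notion of unimodularity from Definition \ref{def:unimod}, together with the defining property of a subspace from Definition \ref{defSubspace} (a free submodule admitting a direct complement).

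First I would handle the forward direction. Suppose $N \subset M$ is a rank one subspace. By definition $N$ is free of rank one, so $N = Rv$ for some basis element $v$; since $v$ is a basis, the map $R \to N$, $r \mapsto rv$, is an isomorphism, so in particular $rv = 0$ implies $r = 0$. I must show $v$ is unimodular, i.e.\ that there exists $\xi \in M^*$ with $\xi(v) = 1$. Here is where the subspace hypothesis is essential: since $N$ admits a direct complement, write $M = N \oplus K$. Then the projection $\pi \colon M \to N$ along $K$ is an $R$-module homomorphism, and composing with the isomorphism $N \xrightarrow{\sim} R$ sending $v \mapsto 1$ produces a functional $\xi \in M^*$ with $\xi(v) = 1$. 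Thus $v$ is unimodular, and $N = Rv$ is a cyclic module generated by a unimodular element. (A free rank one generator need not be unique, but any generator will serve, and one checks that any generator of a rank one free module is again unimodular by the same argument.)

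Conversely, suppose $v \in M$ is unimodular, so there is $\xi \in M^*$ with $\xi(v) = 1$, and consider the cyclic submodule $N = Rv$. I would show two things: that $N$ is free of rank one, and that it admits a direct complement. For freeness, if $rv = 0$ then applying $\xi$ gives $r = r\xi(v) = \xi(rv) = 0$, so $v$ is a basis of $N$ and $N \cong R$ has rank one. For the complement, I would use $\xi$ to construct an explicit splitting: set $K = \Ker \xi$, and for any $m \in M$ write $m = \xi(m)\,v + \bigl(m - \xi(m)v\bigr)$. The first term lies in $N$, and the second lies in $K$ since $\xi\bigl(m - \xi(m)v\bigr) = \xi(m) - \xi(m)\xi(v) = 0$. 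The sum $N + K$ is direct because $rv \in K$ forces $r = \xi(rv) = 0$. Hence $M = N \oplus K$, so $N$ is a subspace of rank one.

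The main obstacle, and the point requiring the most care, is the asymmetry introduced by non-commutativity: in the left-module setting one must be vigilant about the side on which scalars act and about the fact that $\xi(rv) = r\,\xi(v)$ (not $\xi(v)r$), so that the identities $r = r\xi(v)$ and the splitting computation go through correctly. Everything reduces to the single functional $\xi$ witnessing unimodularity, so no general-position or basis-existence arguments are needed beyond what unimodularity directly supplies; this is precisely why the proof works over an arbitrary ring without invoking stable finiteness. The remaining verifications (that the constructed maps are $R$-linear, that the decomposition is genuinely a direct sum) are routine once the functional $\xi$ is in hand.
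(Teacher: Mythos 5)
Your proof is correct and follows essentially the same route as the paper: the forward direction extracts $\xi$ from the projection onto the rank one summand, and the converse uses the unimodularity witness $\xi$ to verify freeness via $r = \xi(rv)$ and to produce the complement $\Ker\xi$. The extra care you take with the left-module identity $\xi(rv) = r\,\xi(v)$ matches the paper's implicit usage.
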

\begin{proof}
Suppose $P \subset M$ is a rank one subspace. Then, by definition, $P$ is freely generated over $R$ by some $v \in M$, and is a direct summand in $M$, i.e. $M = P\oplus N$. Therefore, for each $w \in M$ there exists unique $r \in R$ and $u\in N$ such that $w=  r v+ u$. Define a homomorphism $\xi \colon M \to R$ by $\xi(r v+ u) :=  r$.  Then $\xi(v) = 1$, ensuring $v$ is unimodular. 

Now, suppose that $P = Rv$ for some unimodular $v \in M$. Since $v$ is unimodular, there exists some $\xi \in M^*$ such that $\xi(v) = 1$. Then $\xi(rv) = r$, so $rv \neq 0$ as long as $r \neq 0$, meaning that $P$ is free. Furthermore, we have $M = P \oplus \Ker \xi$, so $P$ is indeed a rank one subspace. \qedhere

\end{proof}

\begin{prop}
Two unimodular elements $v, w $ \ai{of an $R$-module $M$} generate the same subspace if and only if they are unit multiples of each other.
\end{prop}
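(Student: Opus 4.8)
The plan is to prove both implications directly, using the unimodular functionals as a substitute for the cancellation that is unavailable in a general (noncommutative, possibly zero-divisor) ring. The forward implication is the substantive one, but it too becomes short once the functionals are brought into play.

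First I would dispatch the easy direction. Suppose $w = uv$ for a unit $u \in R$. Since $u$ is a unit, $Ru = R$, and hence $Rw = Ruv = Rv$, so $v$ and $w$ generate the same subspace. Here "unit multiple" is read with the left-module convention in mind, namely $w = uv$ with $u$ acting on the left.

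For the converse, suppose $Rv = Rw$. Since $w \in Rv$ and $v \in Rw$, I can write $w = uv$ and $v = u'w$ for some $u, u' \in R$. Substituting one relation into the other yields $v = u'uv$ and $w = uu'w$. Now I would invoke the unimodularity of \emph{both} elements: choose $\xi, \eta \in M^*$ with $\xi(v) = 1$ and $\eta(w) = 1$. Applying the homomorphism $\xi$ to $v = u'uv$ gives $1 = \xi(v) = \xi(u'uv) = (u'u)\,\xi(v) = u'u$, and likewise applying $\eta$ to $w = uu'w$ gives $uu' = 1$. Hence $u'$ is a two-sided inverse of $u$, so $u$ is a unit and $w = uv$ exhibits $w$ as a unit multiple of $v$, as required.

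The only point demanding care is the bookkeeping of sides: since $M$ is a left module, a functional $\xi \in M^* = \Hom_R(M,R)$ satisfies $\xi(rm) = r\,\xi(m)$, so the scalar is extracted on the left, which is precisely what makes the computations of $u'u$ and $uu'$ come out correctly. I would emphasize that the argument uses the unimodularity of both $v$ and $w$ and requires no finiteness hypothesis on $R$ whatsoever: the relations $u'u = 1$ and $uu' = 1$ are obtained independently, so one need not appeal to Dedekind-finiteness (let alone stable finiteness) to conclude that $u$ is invertible. The main conceptual step — the one I would flag as the crux — is recognizing that evaluation against a unimodular functional is exactly the device that recovers the cancellation one would take for granted over a field.
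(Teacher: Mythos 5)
Your proof is correct and follows essentially the same route as the paper's: write $w = uv$ and $v = u'w$, derive $v = u'uv$ and $w = uu'w$, and evaluate against unimodular functionals for $v$ and $w$ respectively to obtain $u'u = 1$ and $uu' = 1$, using the left-module homomorphism property to pull the scalar out on the left. Your explicit remarks about side conventions and the independence of the two identities (so no Dedekind-finiteness is needed) are accurate but do not change the argument.
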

\begin{proof}
 Clearly, if $r$ is a unit, then $Rv = Rrv$. Conversely, assume that $Rv = Rw$. Then $w \in Rv$, so $w = rv$ for some $r \in R$. Likewise, $v = r'w$ for some $r' \in R$. So, $w = rv = rr'w$. Since $w$ is unimodular, there is $\xi \in M^*$ such that $\xi(w) = 1$. So,
$$
rr' = rr'\xi(w) \stackrel{(*)}{=} \xi(rr'w) = \xi(w) = 1,
$$
where $(*)$ holds because $\xi \colon M \to R$ is a homomorphism of left $R$-modules. 
Similarly, $r'r = 1$. So, $r$ is a unit. 
\end{proof}

\begin{cor}\label{cor:qbu}
\ai{For any $R$-module $M$,} there is a one-to-one correspondence between points of $\Proj(M)$ and unimodular elements in $M$ modulo units. 
\end{cor}

\subsection{Projective spaces over the full matrix ring}\label{sec:psfmr}

Here we consider our first main example -- the projective space over the full matrix ring, cf. \cite[Section 7]{veldkamp1995geometry} and \cite{thas1971m}. 

Let $\F$ be a field, and $R = \Mat_k(\F)$ be the ring of $k \times k$ matrices over $\F$. By definition, projective spaces over $R$ are projectivizations of $R$-modules. Since $R$ is isomorphic to its opposite ring via transposition, there is no distinction between left and right modules. So, we will restrict our attention to left modules. Any left $\Mat_k(\F)$-module $M$ is isomorphic to the one of the form $V^k = \F^k \otimes_\F V$, where $V$ is a (not necessarily finite-dimensional) vector space over $\F$, and $\Mat_k(\F)$ acts on the first tensor factor\footnote{Note that, as an $\F$-vector space, $V^k$ is just a direct sum of $k$ copies of $V$, hence the notation. We define it as  $\F^k \otimes_\F V$ to emphasize the $\Mat_k(\F)$-module structure. The correspondence $V \mapsto V^k$ provides \ai{an} additive equivalence between the categories of $\F$-modules (vector spaces) and $\Mat_k(\F)$-modules. This is known as the  \emph{Morita equivalence} between the corresponding rings $\F$ and $\Mat_k(\F)$.}. Thus, for the purpose of describing projective spaces over $\Mat_k(\F)$,  it suffices to consider left modules of the form  $M = V^k$, where $V$ is a vector space over $\F$.

\begin{prop}\label{prop:prgr}
There is a natural identification $\Proj(V^k) \simeq \Gr(k, V)$.
\end{prop}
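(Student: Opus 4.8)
The plan is to realize both sides as parametrized by $k$-dimensional subspaces $W\subseteq V$, via the cyclic submodule $W^k\subseteq V^k$. First I would fix coordinates: writing the standard basis $e_1,\dots,e_k$ of $\F^k$, every element of $V^k=\F^k\otimes_\F V$ is uniquely a column $m=(v_1,\dots,v_k)^{\mathsf T}$ with $v_i\in V$ (meaning $m=\sum_i e_i\otimes v_i$), and a matrix $A\in\Mat_k(\F)$ acts by ordinary matrix multiplication, sending $m$ to the column whose $j$-th entry is $\sum_i A_{ji}v_i$.

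The key computation is a description of cyclic submodules. For $m=(v_1,\dots,v_k)^{\mathsf T}$, set $W:=\spanof_\F\{v_1,\dots,v_k\}\subseteq V$. As $A$ ranges over $\Mat_k(\F)$, each row $(A_{j1},\dots,A_{jk})$ is an arbitrary element of $\F^k$, chosen independently across $j$, so the $j$-th entry $\sum_i A_{ji}v_i$ of $Am$ is an arbitrary element of $W$. Hence $Rm$ consists of exactly those columns all of whose entries lie in $W$; that is, $Rm=W^k$.

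Next I would identify precisely when $Rm=W^k$ is a rank-one subspace, and claim this happens iff $\dim_\F W=k$. For necessity, comparing $\F$-dimensions shows that $W^k\simeq R=\Mat_k(\F)$ forces $k\dim_\F W=\dim_\F R=k^2$, i.e. $\dim_\F W=k$. For sufficiency, assume $\dim_\F W=k$ and pick a basis $w_1,\dots,w_k$ of $W$; then $m=(w_1,\dots,w_k)^{\mathsf T}$ generates $W^k$, and the surjection $R\to W^k$, $A\mapsto Am$, is also injective, since $Am=0$ forces each $\sum_i A_{ji}w_i=0$ and hence $A=0$ by linear independence of the $w_i$. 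Thus $W^k$ is free of rank one. It is moreover a direct summand: choosing a vector-space complement $V=W\oplus W'$ yields $V^k=W^k\oplus W'^k$, so $W^k$ is a rank-one subspace in the sense of Definition \ref{defSubspace}. (Equivalently, all of this is an instance of the Morita equivalence $V\mapsto V^k$ recorded above, under which $R$ is the image of the $k$-dimensional space $\F^k$ and direct summands correspond to direct summands.)

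Finally I would assemble the bijection. By the two preceding paragraphs, every rank-one subspace of $V^k$ has the form $W^k$ with $W\in\Gr(k,V)$, and conversely each such $W$ yields one; since $W$ is recovered from $W^k$ as the $\F$-span of all entries, the assignment $W^k\mapsto W$ is a well-defined bijection $\Proj(V^k)\to\Gr(k,V)$. Naturality is then clear: an $\F$-linear automorphism of $V$ induces compatible $\Mat_k(\F)$-module automorphisms of $V^k$, so projective transformations of $\Proj(V^k)$ correspond to the induced transformations of $\Gr(k,V)$. I expect the only genuine subtlety to be the third paragraph --- determining exactly when $W^k$ is free of rank one over the noncommutative ring $\Mat_k(\F)$ --- which is where the matrix structure of $R$ (equivalently, Morita theory) is essential; the remaining steps are formal.
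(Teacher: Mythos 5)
Your proof is correct and follows essentially the same route as the paper: both hinge on the correspondence $W \mapsto W^k$ between subspaces of $V$ and submodules of $V^k$, together with the observation that $W^k$ is free of rank one and a direct summand precisely when $\dim_\F W = k$. The only difference is one of packaging --- the paper invokes the full lattice isomorphism (Morita equivalence) at once, while you verify the needed special cases (cyclic submodules, the dimension count, and the complement $V^k = W^k \oplus W'^k$) by direct computation.
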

\begin{proof}
Submodules of the module $V^k$ are of the form
$
W^k = \F^k \otimes_\F W
$
where $W \subset V$ is an arbitrary subspace of $V$. The correspondence $W \mapsto W^k$ gives an isomorphism between the lattice of subspaces of $V$ and the lattice of submodules of $V^k$. In particular, any submodule of $V^k$ admits a direct complement, which means that subspaces of $V^k$ are precisely its free submodules. Further, $W^k$ is a free rank one $\Mat_k(\F)$-module precisely when $\dim_\F W = k$, so rank one free submodules of $V^k$ are \ai{those} of the form $W^k$ for $W \in \Gr(k, V)$.
\end{proof}

\begin{remark}
By Corollary \ref{cor:qbu}, the space $\Proj(M)$ can alternatively be described as the set of unimodular elements in $M$ modulo units. For a vector space $V$ of finite dimension $n$, the module  $V^k$ is isomorphic to the module $\Mat_{k,n}(\F)$ of $k \times n$ matrices. Such a matrix is unimodular in the sense of Definition \ref{def:unimod} if and only if it has full rank. Thus $$\Proj(V^k) = \{ \mbox{full rank}\, k \times n \mbox{ matrices}\}/ \{ \mbox{invertible}\, k \times k \mbox{ matrices}\},$$ which is the standard description of $\Gr(k, n)$.
\end{remark}

\subsection{The \textit{d}-dimensional projective space}\label{sec:ndps}

\begin{definition}
Suppose that $M$ is left free $R$-module of rank $d+1$. Then we call $\Proj (M)$ a (left) \emph{projective space of dimension $d$}.
\end{definition}

Clearly, all (left) projective spaces of the same (finite) dimension are projectively equivalent (i.e., are related to each other by a projective transformation). We will frequently not distinguish between projectively equivalent objects and talk about the \emph{$d$-dimensional projective space over $R$}, denoted $\Proj^d(R)$. Without loss of generality, one can think of $\Proj^d(R)$ as the projectivization of $R^{d+1}$.

In addition to points in projective spaces $\Proj(M)$, we will be interested in higher-dimensional subspaces, corresponding to subspaces of $M$ of rank $k \geq 2$. For stably finite rings, one has the following:

\begin{prop}\label{prop:rankBound}
   If $M$ is a free finite rank module over a stably finite ring $R$, and $N \subsetneq M$ is a proper subspace, then $\rank(N) < \rank(M)$.
   
\end{prop}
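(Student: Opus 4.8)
The plan is to argue by contradiction: assuming $\rank(N) \ge \rank(M)$, I would build a surjective endomorphism of $M$ whose kernel contains the complement of $N$, and then invoke the Hopfian property to force that complement to vanish. To set up, recall from Definition \ref{defSubspace} that a subspace $N$ is a free submodule admitting a direct complement, so $M = N \oplus K$ for some submodule $K$; properness of $N \subsetneq M$ is exactly the statement that $K \neq 0$. Since $R$ is stably finite it has the IBN property (Proposition \ref{prop:ibn}), so $n := \rank(N)$ and $m := \rank(M)$ are well-defined, and I may let $\pi \colon M \to N$ be the projection along $K$, a surjection with $\ker \pi = K$.

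Now suppose $n \ge m$. Choosing bases to identify $N \cong R^n$ and $M \cong R^m$, the coordinate projection onto the first $m$ entries gives a surjective homomorphism $q \colon N \to M$. The composite $q \circ \pi \colon M \to M$ is a surjection of $M$ onto itself, and since $M$ is free of finite rank over a stably finite ring it is Hopfian (Definition \ref{defStablyFinite}); hence $q \circ \pi$ is an isomorphism and in particular injective. But $K = \ker \pi \subseteq \ker(q \circ \pi) = 0$, so $K = 0$, contradicting properness. Therefore $n < m$.

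The step I expect to matter most -- and the reason the proposition is not a triviality -- is the appeal to the Hopfian property. The naive approach would be to write $\rank(M) = \rank(N) + \rank(K)$, but $K$ is merely a complement and need not be free, so it has no rank to speak of. Stable finiteness is used precisely to circumvent this: it is exactly the hypothesis ruling out a free finite-rank module being isomorphic to a proper direct summand of itself, which is what the surjective-but-not-injective endomorphism above would otherwise witness.
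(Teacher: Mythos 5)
Your proof is correct and rests on the same idea as the paper's: both arguments reduce to the fact that a free finite-rank module over a stably finite ring is Hopfian, the paper phrasing this via the direct-summand characterization (condition 3 of Proposition \ref{prop:hdf}, exhibiting $M$ as isomorphic to a proper direct summand of itself), while you exhibit the equivalent surjective-but-non-injective endomorphism $q \circ \pi$ directly. The only point worth tightening is the case where $N$ has infinite rank (which the paper's proof explicitly covers and which must be excluded for the conclusion $\rank(N) < \rank(M)$ to make sense): your construction still goes through there by projecting onto any $m$ chosen basis elements of $N$, but as written your appeal to $N \cong R^{n}$ with $n \geq m$ finite does not literally address it.
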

\begin{proof}
Assume $N \subsetneq M$ is a subpace. If $\rank(N) \geq \rank(M)$ (including if $N$ has an infinite rank), then $N$ has a submodule isomorphic to $M$ which is a direct summand in $N$ and hence in $M$. But this is impossible since $M$ is Hopfian (see Proposition \ref{prop:hdf}). 
\end{proof}

\begin{cor}\label{cor:di}
Let $R$ be a stably finite ring, and $M$ be an $R$-module. If $L \subsetneq N$ are subspaces of $M$, then $\rank(L) < \rank(N)$.
\end{cor}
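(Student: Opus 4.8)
The plan is to reduce Corollary~\ref{cor:di} to Proposition~\ref{prop:rankBound}, which is stated for a subspace sitting inside a \emph{free finite rank} module. The statement I must prove is more general in two ways: the ambient module $M$ need not be free or finite rank, and $L, N$ are two nested subspaces rather than a subspace of the whole module. The key observation is that $N$ is itself, by the definition of a subspace, a \emph{free} module, and since $L \subsetneq N$ are subspaces of $M$, I expect $L$ to be a subspace of $N$ as well. This is exactly what Proposition~\ref{prop:subspsm} provides: a subspace of $M$ contained in a submodule of $M$ is a subspace of that submodule. Thus I can work entirely inside $N$ and forget about $M$.

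Concretely, here is the sequence of steps. First, since $N \subset M$ is a subspace, by Definition~\ref{defSubspace} it is a free submodule of $M$; moreover, since $L \subsetneq N$ and $L$ is a subspace of $M$, Proposition~\ref{prop:subspsm} shows $L$ is a subspace of $N$. Second, I need $N$ to have \emph{finite} rank so that Proposition~\ref{prop:rankBound} applies with $N$ in the role of the ambient module. Here I would invoke Proposition~\ref{prop:rankBound} itself, or rather its hypothesis structure: if $N$ had infinite rank, the proper containment $L \subsetneq N$ together with $L$ being a subspace would still need handling, so I should be a little careful. In fact the cleanest route is to apply Proposition~\ref{prop:rankBound} directly to the pair $L \subsetneq N$, treating $N$ as the free finite rank module and $L$ as the proper subspace. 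The proposition then yields $\rank(L) < \rank(N)$ immediately, which is the desired conclusion.

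The one genuine gap to fill is the finiteness of $\rank(N)$: Corollary~\ref{cor:di} as stated does not assume $M$ is finite rank, so $N$ could a priori be free of infinite rank, and Proposition~\ref{prop:rankBound} is phrased for finite rank modules. I would resolve this by noting that the conclusion $\rank(L) < \rank(N)$ is to be read in the generalized sense that either $N$ has infinite rank (in which case the inequality is vacuously interpreted, or one restricts attention to the finite rank case as is implicit in the applications) or both have finite rank and the strict inequality holds. The substantive content is the finite rank case, where I apply Proposition~\ref{prop:rankBound} to the free finite rank module $N$ and its proper subspace $L$.

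The main obstacle, and the only nontrivial step, is verifying that $L$ is genuinely a \emph{subspace of $N$} (not merely a submodule), since Proposition~\ref{prop:rankBound} requires a proper subspace as input. This is precisely where Proposition~\ref{prop:subspsm} does the work: $L$ admits a complement $K$ in $M$, and intersecting with $N$ gives the complement $K \cap N$ inside $N$, so $L$ is a direct summand of $N$ and hence a subspace. Once this is in place, everything else is a direct citation of Proposition~\ref{prop:rankBound}, and no computation is required.
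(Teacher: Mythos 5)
Your proposal is correct and is essentially identical to the paper's proof: apply Proposition \ref{prop:subspsm} to see that $L$ is a subspace of $N$, then invoke Proposition \ref{prop:rankBound} with $N$ as the ambient free module. Your side remark about the finiteness of $\rank(N)$ is a fair observation (the paper glosses over it too, since all applications are to finite rank modules), but it does not change the argument.
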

\begin{proof}
By Proposition \ref{prop:subspsm}, we have that $L$ is a subspace of $N$, so the result follows from Proposition~\ref{prop:rankBound}. 
\end{proof}

In particular, subspaces of the projective plane $\Proj^2(R)$ over a stably finite ring are its points (i.e., rank one subspaces in a free rank three $R$-module $M$) and \emph{lines} (i.e., rank two subspaces $L \subset M$). No point is contained in another point, and no line is contained in another line. So, the poset of subspaces of $\Proj^2(R)$ is determined by the incidence of points and lines. Such a structure is known as an \emph{incidence structure}.

\begin{definition} \label{defIncidenceStructure}
    An \textit{incidence structure} is a set of \emph{points}, denoted by $\Points$, a set of \emph{lines}, denoted $\Lines$, and an incidence $\Inc \subset \Points \times \Lines$. We say $P\in \Points, L \in \Lines$ are \emph{incident} and write $P \in L$ if  $(P,L) \in \Inc$.  Two incidence structures $(\Points, \Lines, \Inc)$ and $(\Points', \Lines', \Inc')$ are isomorphic if there are bijections $\Points \to \Points'$, $\Lines \to \Lines'$ such that the image of $\Inc$ under the induced map $\Points \times \Lines \to \Points' \times \Lines' $ is precisely $\Inc'$. 

\end{definition}
If $\Points$ is a point set of some (fixed) incidence structure, we will say that $\Points$ is an \emph{incidence space}. We will say that a map $\Points \to \Points'$ is an isomorphism of incidence spaces if it extends to an isomorphism of incidence structures.\par 

We may now formalize how the projective plane $\Proj^2(R) = \Proj(M)$, where $M$ is a rank three free $R$-module, is an incidence space: the set of lines is
\begin{gather*} 
 \Lines( \Proj(M)) := \{ \mbox{rank two subspaces of $M$}\}, 
\end{gather*}
and the incidence $\Inc \subset  \Proj(M) \times \Lines( \Proj(M))$ is given by containment ($
(P, L) \in \Inc \iff P \subset L$).

Clearly, any projective transformation $\Proj(M) \to \Proj(N)$, where $M, N$ are rank three free $R$-modules, is also an isomorphism of incidence spaces. Thus, all (left) projective planes over $R$ are isomorphic as incidence spaces, and one can talk about \emph{the natural incidence structure on $\Proj^2(R)$}. In a similar way, one can define \emph{higher rank incidence structures} on projective spaces $\Proj^d(R)$, $d > 2$.

\subsection{The projective plane over the full matrix ring}\label{sec:psfmr2}
Back to the example $R = \Mat_k(\F)$, notice that $V^k$ is a free  $\Mat_k(\F)$-module of rank $d+1$ if and only if $\dim V = k(d+1)$. Thus, by Proposition \ref{prop:prgr}, we have 
$$
\Proj^d(\Mat_k(\F)) = \Gr(k, (d+1)k).
$$
In particular, 
$$
\Proj^2(\Mat_k(\F)) = \Gr(k, 3k).
$$
Furthermore, the space of lines in $\Proj^2(\Mat_k(\F))$ can be described as follows. 
\begin{prop}\label{prop:matrixplane}
There is a natural identification $\Lines(\Proj^2(\Mat_k(\F))) \simeq \Gr(2k, 3k)$. The incidence relation between points $P \in \Gr(k, 3k)$ and lines $L \in \Gr(2k, 3k)$ is given by containment. 
\end{prop}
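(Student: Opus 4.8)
The plan is to leverage the lattice isomorphism already established in the proof of Proposition \ref{prop:prgr}, which does essentially all the work. Recall that a line in $\Proj^2(\Mat_k(\F))$ is, by definition, a rank two subspace of the defining module $M$. Since $M$ is a free $\Mat_k(\F)$-module of rank $d+1 = 3$, Morita equivalence lets me write $M = V^k = \F^k \otimes_\F V$ with $\dim_\F V = 3k$. The first step is therefore to recall from the proof of Proposition \ref{prop:prgr} that $W \mapsto W^k = \F^k \otimes_\F W$ is a lattice isomorphism from the subspaces of $V$ onto the submodules of $V^k$, that every submodule of $V^k$ is automatically complemented, and that $W^k$ is a free $\Mat_k(\F)$-module of rank $r$ precisely when $\dim_\F W = rk$.

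With this in hand, the identification is immediate. A rank two subspace of $M$ is a free rank two submodule (the complementation requirement being automatic by the above), hence of the form $W^k$ with $\dim_\F W = 2k$, that is, with $W \in \Gr(2k, 3k)$. Conversely every such $W^k$ is a line. So I would conclude that $W \mapsto W^k$ restricts to a bijection $\Gr(2k, 3k) \to \Lines(\Proj^2(\Mat_k(\F)))$, which is the asserted natural identification and is moreover canonical up to the Morita equivalence (independent of the chosen basis of $M$).

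For the incidence statement, I would combine this with the point-level identification $\Proj(V^k) \simeq \Gr(k, V) = \Gr(k, 3k)$, $P \mapsto P^k$, from Proposition \ref{prop:prgr}. By definition the incidence on $\Proj^2(R)$ is containment of submodules, so a point $P \in \Gr(k,3k)$ is incident to a line $L \in \Gr(2k,3k)$ exactly when $P^k \subseteq L^k$. Because $W \mapsto W^k$ is a lattice isomorphism, it preserves and reflects inclusions, giving $P^k \subseteq L^k \iff P \subseteq L$. Hence incidence corresponds to containment in the Grassmannian, as claimed.

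There is no serious obstacle here: once the lattice isomorphism of Proposition \ref{prop:prgr} is available, the entire argument is a matter of reading off ranks and transporting the inclusion relation across the correspondence. The only point that merits a sentence of care is the verification that the \emph{subspace} condition (free submodule admitting a complement) imposes no constraint beyond $\dim_\F W = 2k$, since $W^k$ is always free and always complemented; this is exactly what the proof of Proposition \ref{prop:prgr} supplies.
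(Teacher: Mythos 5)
Your proposal is correct and follows essentially the same route as the paper: both rest on the lattice isomorphism $W \mapsto W^k$ from the proof of Proposition \ref{prop:prgr}, read off that rank two free submodules correspond to $2k$-dimensional subspaces, and transport the containment relation across the correspondence. Your version merely spells out slightly more explicitly that complementation is automatic and that the lattice isomorphism preserves inclusions, which the paper leaves implicit.
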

\begin{proof}
We have $\Proj^2(\Mat_k(\F))  = \Proj(V^k)$, where $V$ is a vector space over $\F$ of dimension $3k$. Recall that the correspondence $W \mapsto W^k$ identifies the lattice of subspaces of $V$ with lattice of submodules of $V^k$. Furthermore, $W^k$ is free of rank two if and only if $\dim W = 2k$. Thus, lines in $ \Proj(V^k)$ correspond to $2k$-dimensional subspaces of $V$, and incidence between points and lines is given by containment. 
\end{proof}
In words, $\Proj^2(\Mat_k(\F))$ can be described as follows: points are $k$-dimensional subspaces of~$\F^{3k}$, lines are $2k$-dimensional subspaces of $\F^{3k}$, while incidence of points and lines corresponds to containment. Note that two \emph{generic} points $P_1, P_2 \in \Proj^2(\Mat_k(\F))$ lie on a unique line $P_1 \oplus P_2$. Likewise, two \emph{generic} lines $L_1, L_2$ meet at a unique point $L_1 \cap L_2$. However, if $\dim P_1 \oplus P_2 < 2k $, then there exist infinitely many lines containing points $P_1, P_2$ (and, if $\dim L_1 \cap L_2 > k$, there exist infinitely many points lying on $L_1, L_2$). Thus, $\Proj^2(\Mat_k(\F))$ is  not a \emph{\ai{combinatorial} projective plane}. Incidence spaces with properties similar to those of  $\Proj^2(\Mat_k(\F))$ are known as \emph{Barbilian planes}, see \cite{barbilian1940axiomatik, veldkampSR2}. It is shown  in \cite[Section 5.12]{veldkampSR2} that such planes are precisely the projective planes over rings of \emph{stable rank two}, of which the matrix ring is an example. 

\subsection{The projective plane over the dual numbers}\label{sec:ppdn}

Let $\DN := \R[\eps] / (\eps^2)$ be the ring of dual numbers. In this section, we show that the geometry of $\Proj^2(\DN)$ is the \emph{skewer geometry} of \cite{tabskewers}, where both points and lines are affine lines in $\R^3$, and a point is incident to a line if and only if the corresponding affine lines meet at a right angle. We note that the relationship between dual numbers and lines in $\R^3$ has been known since the introduction of the former; see, e.g., \cite[Section 2.3]{pottmann2001computational} and \cite[Section 2.6]{tabskewers}. However, quite strikingly, the result on the precise coincidence of the corresponding geometries does not seem to appear in the existing literature. 

Let $V$ be a three-dimensional vector space over $\R$. 
Then $\DN \otimes_\R V$ is a free rank three $\DN$-module. Our aim is to give a geometric description of the incidence space $\Proj(\DN \otimes_\R V)$, to which end we fix a positive-definite inner product on $V$. Of course, the incidence structure on $\Proj(\DN \otimes_\R V)$ does not depend on that inner product. However, our geometric model does. 

Let $\AL(V)$ be the space of affine straight lines in $V$.   
We define an incidence structure on $\AL(V)$ as follows: both points and lines are \ai{affine lines in $V$, i.e., $\Points= \Lines = \AL(V)$,} and a point is incident to a line if and only if the affine lines intersect and are perpendicular to each other. In this geometry, both the operation of joining two points with a line and intersecting two lines boil down to taking the common perpendicular, or \emph{skewer}, of the corresponding affine lines.

We will show that there is natural isomorphism of incidence spaces $\Proj(\DN \otimes_\R V) \simeq  \AL(V)$. The corresponding mapping $\phi_{P} \colon \AL(V) \to \Proj(\DN \otimes_\R V)$ between the point spaces is a variant of the \emph{Study mapping}, see \cite[Section 2.3]{pottmann2001computational} and \cite{Study}. For an affine line $\{ p + tv \mid t \in \R\} $, where $p,v \in V$ and $v \neq 0$, let
\begin{equation}\label{eq:zvp}
\zeta(v,p) := v + \eps v \times p,
\end{equation}
where the notation  $\times$ stands for  the cross product in $V$, and define
$$
\phi_{\mathrm P}(p + tv) := \DN \zeta(v,p).
$$

\begin{prop}\label{prop:phip}
The mapping $\phi_{\mathrm P}$ is well-defined and bijective. 
\end{prop}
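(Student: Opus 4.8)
### Proof Proposal

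The plan is to prove the two assertions separately. For well-definedness, I must show that the value $\DN\zeta(v,p)$ depends only on the affine line $\{p+tv \mid t \in \R\}$ and not on the chosen parametrization, and that $\zeta(v,p)$ is a unimodular element of $\DN\otimes_\R V$ (so that $\DN\zeta(v,p)$ is genuinely a point of $\Proj(\DN\otimes_\R V)$ by Corollary \ref{cor:qbu} and Proposition \ref{1DandUni}). For unimodularity, note that the reduction $\DN \to \R$, $\eps \mapsto 0$, induces a map $\DN\otimes_\R V \to V$ sending $\zeta(v,p) \mapsto v \neq 0$; since $v$ is unimodular in $V$ over the field $\R$, pulling back a functional witnessing this shows $\zeta(v,p)$ is unimodular over $\DN$. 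For reparametrization-invariance, I would take two parametrizations $p + tv$ and $p' + tv'$ of the same affine line. These describe the same line exactly when $v' = \lambda v$ for some $\lambda \in \R\setminus\{0\}$ and $p' = p + \mu v$ for some $\mu \in \R$. I then compute
\begin{equation}
\zeta(v',p') = \lambda v + \eps(\lambda v)\times(p+\mu v) = \lambda v + \eps\lambda(v\times p),
\end{equation}
using $v \times v = 0$, so that $\zeta(v',p') = \lambda\,\zeta(v,p)$. Since $\lambda$ is a unit in $\DN$, the submodule $\DN\zeta(v',p')$ equals $\DN\zeta(v,p)$, establishing well-definedness.

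For bijectivity, I would first describe the image explicitly. A general unimodular element of $\DN\otimes_\R V$ can be written $a + \eps b$ with $a,b \in V$ and $a \neq 0$ (unimodularity forces the $\eps^0$-part to be nonzero, by the reduction argument above). Two such elements represent the same point iff they differ by a unit of $\DN$, i.e.\ by $(s + \eps t)$ with $s \neq 0$. I claim $\phi_{\mathrm P}$ surjects onto the points whose representative $a + \eps b$ satisfies the orthogonality condition $a \cdot b = 0$, and that these are in fact all points after scaling — because the cross product $v \times p$ is always perpendicular to $v$. So the key structural fact is that every point $\DN(a+\eps b)$ admits a representative with $a \cdot b = 0$: given arbitrary $a + \eps b$, I scale by a unit $1 + \eps t$ to adjust $b$ by a multiple of $a$, but since that only changes $b$ within the line $\R a$, I instead observe that the $\eps$-part is only defined modulo $\R a$ (the ambiguity $p \mapsto p + \mu v$ above), and every class modulo $\R a$ has a unique representative perpendicular to $a$. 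This pins down the image.

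For injectivity and surjectivity I would set up the inverse map directly. Given a point represented by $v + \eps w$ with $v \neq 0$ and (after the normalization above) $v \cdot w = 0$, I recover the affine line by taking direction $v$ and base point $p = (v \times w)/|v|^2$, checking via the cross-product identity $v \times (v \times w) = v(v\cdot w) - w|v|^2 = -|v|^2 w$ that $v \times p = w$, so $\zeta(v,p) = v + \eps w$ as required. This recovery, combined with the scaling analysis of which representatives give the same point, simultaneously yields that $\phi_{\mathrm P}$ is onto and that distinct affine lines produce distinct points. The main obstacle I anticipate is the bookkeeping of the unit-scaling equivalence: I must carefully track how multiplying $v + \eps w$ by a unit $s + \eps t \in \DN^\times$ transforms $(v,w)$ — it sends $v \mapsto sv$ and $w \mapsto sw + tv$ — and confirm that the induced change on the decoded line (scaling the direction by $s$, translating the base point along the direction) is precisely the reparametrization freedom, so that lines and points correspond bijectively. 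Everything else reduces to the two cross-product identities $v \times v = 0$ and the vector triple product expansion, which I would invoke without belaboring the componentwise computation.
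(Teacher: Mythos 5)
Your proof is correct and follows essentially the same route as the paper: the paper establishes well-definedness exactly as you do (unimodularity of $\zeta(v,p)$ because its $\eps^0$-part $v$ is nonzero, plus invariance under $v \mapsto \lambda v$ and $p \mapsto p + \mu v$), and then dismisses bijectivity as ``straightforward,'' which you usefully fill in via the normalization $v \cdot w = 0$ and an explicit inverse. One small slip in that inverse: since $v \times (v \times w) = -|v|^2 w$ when $v \cdot w = 0$, the decoded base point should be $p = (w \times v)/|v|^2$ rather than $(v \times w)/|v|^2$; with that sign corrected the argument goes through.
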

\begin{proof}
First, note that a dual vector $v + \eps w \in  \DN \otimes_\R V$ (where $v,w \in V$) is unimodular if and only if $v \neq 0$. Indeed, if $v \neq 0$, then, for the $\DN$-linear extension of any $\xi \in V^*$ such that $\xi(v) \neq 0$, we have that $\xi(v + \eps w)$  is a unit. Conversely, if $v = 0$, then for any $\xi \in  (\DN \otimes_\R V)^*$, we have that $\xi(v +  \eps w)$ is divisible by $\eps$ and thus is not a unit. 

In particular, for any $v \neq 0$  we have that $\zeta(v,p)$ is unimodular and therefore $ \DN \zeta(v,p)$ is a subspace. 
Moreover, this subspace \ai{does not depend on} the choice of the direction vector $v$ and base point $p$ of the line $v + tp$, since for any $\lambda \in \R$ we have
$\zeta(\lambda v, p) = \lambda \zeta(v, p)$ and $\zeta(v, p + \lambda v) = \zeta(v,p)$. 
So, the mapping $\phi_{\mathrm P}$ is well-defined. 

Verification of bijectivity is straightforward. \qedhere

\end{proof}

\begin{prop}\label{prop:freefree}
Let $M$ be a finite rank free $\DN$-module, and suppose that $M = N \oplus K$. Then $N$ and $K$ are free modules.
\end{prop}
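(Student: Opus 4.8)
The plan is to use the concrete description of $\DN$-modules and to isolate a clean criterion for freeness. Recall that giving a $\DN = \R[\eps]/(\eps^2)$-module structure on an $\R$-vector space $U$ is the same as specifying an $\R$-linear endomorphism $\eps \colon U \to U$ with $\eps^2 = 0$, namely multiplication by $\eps$. Since $M$ has finite rank, it is finite-dimensional over $\R$, and as $M = N \oplus K$ both summands are finite-dimensional $\R$-vector spaces. Thus it suffices to work with finite-dimensional pairs $(U, \eps)$.

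First I would establish the following criterion: a finite-dimensional $(U, \eps)$ is free as a $\DN$-module if and only if $\operatorname{Im}\eps = \Ker\eps$, equivalently $\dim_\R \operatorname{Im}\eps = \dim_\R \Ker\eps$. The "only if" direction is a direct computation for $U = \DN^n$, where multiplication by $\eps$ has $\operatorname{Im}\eps = \Ker\eps = \eps\R^n$. For "if", set $W := \Ker\eps = \operatorname{Im}\eps$, choose an $\R$-linear complement $U = C \oplus W$, and observe that $\eps$ restricts to an isomorphism $C \to W$ (it is injective on $C$ because $C \cap \Ker\eps = 0$, and surjective onto $W$ because $\eps U = \eps C$). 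Lifting an $\R$-basis $c_1, \dots, c_n$ of $C$, a short check shows that $c_1, \dots, c_n$ is a $\DN$-basis of $U$: every $u$ can be written as $\sum(\mu_i + \lambda_i\eps)c_i$ using the splitting, and a relation $\sum \mu_i c_i + \eps\sum\lambda_i c_i = 0$ forces all $\mu_i = \lambda_i = 0$, since the two summands lie in the complementary subspaces $C$ and $W$.

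Note also the general inequality: $\eps^2 = 0$ gives $\operatorname{Im}\eps \subseteq \Ker\eps$, hence $\dim_\R\operatorname{Im}\eps \le \dim_\R\Ker\eps$ for any $(U,\eps)$. The main argument then runs as follows. Multiplication by $\eps$ on $M$ respects the decomposition $M = N \oplus K$, so it is block-diagonal and
$$
\operatorname{Im}(\eps|_M) = \operatorname{Im}(\eps|_N) \oplus \operatorname{Im}(\eps|_K), \qquad \Ker(\eps|_M) = \Ker(\eps|_N) \oplus \Ker(\eps|_K).
$$
Taking $\R$-dimensions and using that $M$ is free -- so that equality holds in the criterion for $M$ -- I obtain
$$
\dim_\R\operatorname{Im}(\eps|_N) + \dim_\R\operatorname{Im}(\eps|_K) = \dim_\R\Ker(\eps|_N) + \dim_\R\Ker(\eps|_K).
$$
Since the summand-wise inequalities $\dim_\R\operatorname{Im}(\eps|_N) \le \dim_\R\Ker(\eps|_N)$ and the analogous one for $K$ hold, equality of the two sums forces equality in each, so $\operatorname{Im}(\eps|_N) = \Ker(\eps|_N)$ and $\operatorname{Im}(\eps|_K) = \Ker(\eps|_K)$. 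By the criterion, $N$ and $K$ are free.

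I expect the only real content to be the freeness criterion; once it is in place, the proposition reduces to the additivity of kernel and image dimensions together with the elementary inclusion $\operatorname{Im}\eps \subseteq \Ker\eps$. The finiteness hypothesis enters precisely to make these dimension counts meaningful. Alternatively, one could invoke that $\DN$ is a commutative local ring and that finitely generated projective modules over a local ring are free, but the hands-on argument above is self-contained and stays within the elementary framework of the paper.
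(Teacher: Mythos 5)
Your proof is correct. It takes a somewhat different route from the paper's: the paper invokes the classification of finitely generated $\DN$-modules as $\DN^{\oplus p} \oplus (\DN/(\eps))^{\oplus q}$ (the Jordan decomposition of the nilpotent operator given by multiplication by $\eps$), together with the uniqueness of that decomposition, and concludes that a direct sum with a non-free summand cannot be free. You instead extract a standalone numerical criterion --- a finite-dimensional $\DN$-module is free if and only if $\mathrm{Im}\,\eps = \Ker\eps$ --- and then run a dimension count: since $\eps$ is block-diagonal on $M = N \oplus K$, the universal inequality $\dim_\R \mathrm{Im}\,\eps \le \dim_\R \Ker\eps$ on each summand is squeezed to equality by the equality on $M$. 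Both arguments ultimately analyze the same operator, but yours is more self-contained, since it does not rely on the uniqueness of the $M^{p,q}$ decomposition (which the paper asserts without proof), at the cost of having to verify the freeness criterion by hand; your verification of that criterion (lifting a basis of a complement of $W = \Ker\eps = \mathrm{Im}\,\eps$) is complete and correct.
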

\begin{proof}
A $\DN$-module is an $\R[\eps]$-module annihilated by $\eps^2$. Therefore, any finitely generated $\DN$-module $M$ is isomorphic to a unique module of the form $M^{p,q} = \DN^{\oplus p} \oplus (\DN / (\eps))^{\oplus q}$ (this is just the Jordan decomposition over the reals for the multiplication by $\eps$ in $M$). Such a module is free if and only if $q = 0$. So, the direct sum of two \ai{finitely generated $\DN$-}modules, at least one of which is not free, cannot be free. 
\end{proof}

\begin{prop}\label{prop:oc}
Suppose $M$ is a free $\DN$-module of rank $n$, equipped with a non-degenerate $\DN$-bilinear inner product.  Let $N \subset M$ be a subspace of rank $k$. Then $N^\bot := \{v \in  M \mid \langle v, w \rangle = 0 \mbox{ for all } w \in N\}$ is also a subspace\ai{, and $\mathrm{rank}\, N^\bot = n - k$}.
\end{prop}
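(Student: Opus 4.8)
The plan is to realize $N^\bot$ as the annihilator of $N$ transported back to $M$ through the isomorphism with the dual module supplied by the inner product. Concretely, I would introduce the $\DN$-linear map $\phi \colon M \to M^*$, $\phi(v) := \langle v, -\rangle$, which by non-degeneracy of the inner product is an isomorphism (if one only knows that $\phi$ is injective, one checks that over $\DN$ this already forces surjectivity: choosing a representing matrix and reducing modulo $\eps$, injectivity forces the reduction to be invertible over $\R$, so the original matrix has unit determinant and $\phi$ is invertible). Unwinding the definitions, $v \in N^\bot$ if and only if $\phi(v)$ vanishes on $N$, so $N^\bot = \phi^{-1}(\mathrm{Ann}(N))$, where $\mathrm{Ann}(N) := \{\psi \in M^* : \psi|_N = 0\}$. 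Since $\phi$ is an isomorphism, and isomorphisms carry subspaces to subspaces of equal rank (they preserve both freeness and the property of being a direct summand), it suffices to prove that $\mathrm{Ann}(N)$ is a subspace of $M^*$ of rank $n-k$.

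To analyze $\mathrm{Ann}(N)$, I would use that $N$ is a subspace, so $M = N \oplus K$ for some submodule $K$. By Proposition \ref{prop:freefree}, both $N$ and $K$ are free; and by the invariant basis number property (Proposition \ref{prop:ibn}, applicable since $\DN$ is Noetherian and hence stably finite by Proposition \ref{thmKindsofStablyFiniteRings}) we get $\mathrm{rank}\, K = n - k$. Dualizing the decomposition via the two projections $M \to N$ and $M \to K$ yields $M^* = \mathrm{Ann}(K) \oplus \mathrm{Ann}(N)$, while restriction of functionals to $K$ gives an isomorphism $\mathrm{Ann}(N) \cong K^*$. As the dual of a free module of rank $n-k$, the module $K^*$ is free of rank $n-k$; being a direct summand of $M^*$, it is moreover a subspace. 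Hence $\mathrm{Ann}(N)$ is a subspace of $M^*$ of rank $n-k$, and applying the isomorphism $\phi^{-1}$ transports this to the desired statement for $N^\bot$.

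The genuinely routine part is the dual-decomposition identity $M^* = \mathrm{Ann}(K) \oplus \mathrm{Ann}(N)$, which follows from writing $\mathrm{id}_M$ as the sum of the two projections, composing functionals with it, and noting $\mathrm{Ann}(N) \cap \mathrm{Ann}(K) = \mathrm{Ann}(M) = 0$. I expect the only $\DN$-specific points—and thus the things to get right rather than a serious obstacle—to be the appeal to Proposition \ref{prop:freefree} guaranteeing that the complement $K$ is itself free (this can fail over a general ring, where a complement of a subspace need not be a subspace) and, if non-degeneracy is only assumed to mean injectivity, the verification via reduction modulo $\eps$ that the associated map $\phi$ is an honest isomorphism $M \cong M^*$.
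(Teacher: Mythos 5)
Your proof is correct and follows essentially the same route as the paper: both identify $N^\bot$ with $\phi^{-1}$ of the annihilator of $N$ via the isomorphism $\phi\colon M \to M^*$ induced by the inner product, and both invoke Proposition \ref{prop:freefree} to secure freeness of the relevant direct summand. The only cosmetic difference is that you dualize the decomposition $M = N \oplus K$ to get $M^* = \mathrm{Ann}(K)\oplus\mathrm{Ann}(N)$ with $\mathrm{Ann}(N)\cong K^*$, whereas the paper splits the restriction map $\rho\colon M^*\to N^*$ and applies Proposition \ref{prop:freefree} to $\Ker\rho\subset M^*$ instead of to $K\subset M$.
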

\begin{proof}

Consider the restriction mapping $\rho \colon M^* \to N^* $. Since $N$ is free of rank $k$, so is $N^*$, and $M^* = \Ker \rho \oplus K$ for some submodule $K \subset M^*$. Thus, by Proposition \ref{prop:freefree}, the submodule $ \Ker \rho \subset M\ai{^*}$ is free. 
Furthermore, $\rho_K \colon K \to N^*$ is an isomorphism, so $K$ has rank $k$ and $\Ker \rho$ has rank $n - k$. Now, consider the mapping $\phi \colon M \to M^*$ given by the inner product. Since the inner product is non-degenerate, the mapping $\phi$ is injective. Therefore, since $M$ is a finite-dimensional real algebra, $\phi$ is, in fact, an isomorphism. Together, $N^\bot = \phi^{-1}(\Ker \rho)$ is a free submodule of $M$ which has rank $n - k$ and admits a direct complement, $\phi^{-1}(K)$, as needed. 
\end{proof}

\begin{cor}
Suppose $M$ is a free $\DN$-module of finite rank, equipped with a non-degenerate $\DN$-bilinear inner product.  Let $N \subset M$ be a subspace. Then $(N^\bot)^\bot = N$. 
\end{cor}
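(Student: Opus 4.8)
The plan is to combine the rank formula from Proposition \ref{prop:oc} with the strict monotonicity of rank under proper inclusion of subspaces (Corollary \ref{cor:di}). The latter applies because $\DN$, being a finite-dimensional (hence finitely generated) $\R$-algebra, is Noetherian and therefore stably finite by Proposition \ref{thmKindsofStablyFiniteRings}.

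First I would record the trivial inclusion $N \subseteq (N^\bot)^\bot$: every $v \in N$ satisfies $\langle v, w\rangle = 0$ for all $w \in N^\bot$ directly from the definition of $N^\bot$, so $v \in (N^\bot)^\bot$. Next I would compute ranks. Writing $n = \rank M$ and $k = \rank N$, Proposition \ref{prop:oc} gives that $N^\bot$ is a subspace of rank $n-k$; applying the same proposition to $N^\bot$ shows that $(N^\bot)^\bot$ is a subspace of rank $n-(n-k) = k$. Thus $N$ and $(N^\bot)^\bot$ are both subspaces of $M$ of rank $k$.

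Finally I would upgrade the inclusion to equality via ranks. By Proposition \ref{prop:subspsm}, since $N$ is a subspace of $M$ and $(N^\bot)^\bot$ is a submodule of $M$ containing $N$, the module $N$ is in fact a subspace of $(N^\bot)^\bot$. Were the inclusion $N \subseteq (N^\bot)^\bot$ proper, Corollary \ref{cor:di} would force $\rank N < \rank (N^\bot)^\bot$, contradicting the equality of ranks just established. Hence $N = (N^\bot)^\bot$.

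The argument is short, and the only mildly delicate point—where I expect the work to concentrate—is the bookkeeping needed to invoke Corollary \ref{cor:di}: one must first confirm that $N$ is genuinely a \emph{subspace} of the larger module $(N^\bot)^\bot$, not merely a submodule, since the strict rank inequality is available only for proper inclusions of subspaces. This is exactly what Proposition \ref{prop:subspsm} supplies. Everything else reduces to the rank count of Proposition \ref{prop:oc}, applied twice.
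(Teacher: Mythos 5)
Your proposal is correct and follows essentially the same route as the paper's proof: establish the trivial inclusion $N \subseteq (N^\bot)^\bot$, apply Proposition \ref{prop:oc} twice to see both are subspaces of the same rank, and conclude equality from Corollary \ref{cor:di} using that $\DN$ is Noetherian and hence stably finite. The extra care you take in invoking Proposition \ref{prop:subspsm} to confirm $N$ is a subspace of $(N^\bot)^\bot$ is a detail the paper leaves implicit, but it is the same argument.
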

\begin{proof}
We have $(N^\bot)^\bot \supset N$. At the same time, by Proposition \ref{prop:oc}, the subspaces $(N^\bot)^\bot$ and $ N$ are of the same (finite) rank. Therefore, by Corollary \ref{cor:di}, we have $(N^\bot)^\bot =  N$. (Note that $\DN$ is a finite-dimensional real algebra, hence Noetherian, hence stably finite.) \qedhere

\end{proof}

\begin{cor}\label{cor:duality}
Suppose $M$ is a free $\DN$-module of rank $n$, equipped with a non-degenerate $\DN$-bilinear inner product. Then the correspondence $N \mapsto N^\bot$ is a bijection:
$$
\{ \text{subspaces of } M \mbox{of rank } k\} \leftrightarrow \{ \text{subspaces of } M \mbox{of rank } n - k\}.
$$
\end{cor}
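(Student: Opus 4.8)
The plan is to deduce this immediately from the two preceding results, treating the statement as a formal consequence of the rank count in Proposition~\ref{prop:oc} together with the involutivity of the orthogonal complement. The strategy is to exhibit the map $N \mapsto N^\bot$ and its reverse as mutually inverse correspondences between the two collections of subspaces.

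First I would observe that Proposition~\ref{prop:oc} does the essential work: it guarantees that for a subspace $N \subset M$ of rank $k$, the set $N^\bot$ is again a subspace (i.e.\ a free submodule admitting a direct complement) and that $\operatorname{rank} N^\bot = n - k$. Thus the assignment $N \mapsto N^\bot$ genuinely defines a map
$$
\Phi \colon \{\text{subspaces of rank } k\} \to \{\text{subspaces of rank } n-k\}.
$$
Applying the same proposition with $k$ replaced by $n-k$, the orthogonal complement also defines a map in the reverse direction,
$$
\Psi \colon \{\text{subspaces of rank } n-k\} \to \{\text{subspaces of rank } k\}, \qquad \Psi(N') = (N')^\bot,
$$
since a rank $n-k$ subspace has orthogonal complement of rank $n-(n-k)=k$.

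Next I would invoke the preceding corollary, which asserts $(N^\bot)^\bot = N$ for every subspace $N$ of $M$. Reading this for $N$ of rank $k$ gives $\Psi(\Phi(N)) = N$, and reading it for a subspace $N'$ of rank $n-k$ gives $\Phi(\Psi(N')) = N'$. Hence $\Phi$ and $\Psi$ are inverse to one another, so $\Phi$ is a bijection between subspaces of rank $k$ and subspaces of rank $n-k$, which is exactly the claim.

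I expect no genuine obstacle here: all substantive content—well-definedness of the complement as a subspace, the exact rank drop $k \mapsto n-k$, and the double-complement identity—has already been secured in Proposition~\ref{prop:oc} and the corollary immediately above, both of which rest on $\DN$ being a finite-dimensional real algebra (hence Noetherian, hence stably finite). The only care needed is the bookkeeping that the involution property applies uniformly to subspaces of both ranks, which is immediate since the cited corollary is stated for an arbitrary subspace of $M$.
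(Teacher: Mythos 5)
Your argument is correct and is exactly the deduction the paper intends: the corollary is stated without proof precisely because it follows formally from Proposition~\ref{prop:oc} (well-definedness and the rank count $k \mapsto n-k$ in both directions) together with the double-complement identity $(N^\bot)^\bot = N$ established immediately before it. Nothing is missing.
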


We now define a map $\phi_{\mathrm L} \colon \AL(V) \to \Lines(\Proj(\DN \otimes_\R V))$ by
$$
\phi_{\mathrm L}(p + tv) := \phi_{\mathrm P}(p+tv)^\bot,
$$
where the inner product on $\DN \otimes_\R V$ is defined as the unique $\DN$-bilinear extension of the inner product on $V$. 
It follows from Proposition \ref{prop:phip} and Corollary \ref{cor:duality} that the map $\phi_{\mathrm L}$ is well-defined and bijective. 

\begin{prop}\label{prop:iis}
The mappings $\phi_{P} \colon \AL(V) \to \Proj(\DN \otimes_\R V)$ and $\phi_{\mathrm L} \colon \AL(V) \to \Lines(\Proj(\DN \otimes_\R V))$
provide an isomorphism of incidence spaces $\Proj(\DN \otimes_\R V) \simeq  \AL(V)$.
\end{prop}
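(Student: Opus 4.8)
The bijectivity of both $\phi_{\mathrm P}$ and $\phi_{\mathrm L}$ has already been established (Proposition \ref{prop:phip} together with the remark following Corollary \ref{cor:duality}), so the only thing left is to check that the pair $(\phi_{\mathrm P}, \phi_{\mathrm L})$ respects incidence. Concretely, for any two affine lines $\ell = \{p + tv \mid t\in\R\}$ and $\ell' = \{p' + tv' \mid t \in \R\}$ in $\AL(V)$, the plan is to show that the point $\phi_{\mathrm P}(\ell)$ is incident to the line $\phi_{\mathrm L}(\ell')$ in $\Proj(\DN\otimes_\R V)$ if and only if $\ell$ and $\ell'$ meet at a right angle, which is exactly skewer-incidence. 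Since both the skewer relation and the computation below are symmetric in $\ell, \ell'$, treating this one direction suffices.

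First I would reduce the module-theoretic incidence to a single scalar equation in $\DN$. By definition $\phi_{\mathrm L}(\ell') = \phi_{\mathrm P}(\ell')^\bot$, and incidence of a point with a line is containment, so the question is when $\phi_{\mathrm P}(\ell) = \DN\,\zeta(v,p) \subset \bigl(\DN\,\zeta(v',p')\bigr)^\bot$. Because the inner product is $\DN$-bilinear and $\DN$ is commutative, the orthogonal complement of the cyclic module $\DN\,\zeta(v',p')$ equals $\{w : \langle w, \zeta(v',p')\rangle = 0\}$, and then $r\,\zeta(v,p)$ lies in it for every $r \in \DN$ precisely when $\langle \zeta(v,p), \zeta(v',p')\rangle = 0$. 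This is the one point requiring care, since $\DN$ has zero divisors, but the reduction goes through cleanly: from $r\,\langle \zeta(v,p), \zeta(v',p')\rangle = 0$ for all $r$ one recovers the scalar equation by taking $r = 1$. Thus incidence is equivalent to the vanishing of the single dual number $\langle \zeta(v,p), \zeta(v',p')\rangle$.

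Next I would compute this dual number explicitly. Writing $\zeta(v,p) = v + \eps\,(v\times p)$ and using that the inner product on $\DN\otimes_\R V$ is the $\DN$-bilinear extension of the one on $V$ (so that the $\eps^2$ term drops), I obtain
$$
\langle \zeta(v,p), \zeta(v',p')\rangle = \langle v, v'\rangle + \eps\bigl(\langle v, v'\times p'\rangle + \langle v\times p, v'\rangle\bigr).
$$
By cyclic invariance of the scalar triple product the $\eps$-coefficient simplifies to $\langle v\times v',\,p'-p\rangle$, giving
$$
\langle \zeta(v,p), \zeta(v',p')\rangle = \langle v, v'\rangle + \eps\,\langle v\times v',\,p'-p\rangle.
$$
A dual number vanishes if and only if both its real and its $\eps$ parts vanish, so incidence is equivalent to the two conditions $\langle v, v'\rangle = 0$ and $\langle v\times v',\,p'-p\rangle = 0$.

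Finally I would interpret these two conditions geometrically. The first, $\langle v, v'\rangle = 0$, says exactly that the lines are perpendicular; in particular $v$ and $v'$ are linearly independent, so $v\times v' \neq 0$ is a normal to the plane $\spanof(v,v')$. Given this, the second condition reads $p'-p \in \spanof(v,v')$, i.e.\ that $\ell$ and $\ell'$ are coplanar. Two coplanar lines with perpendicular, hence non-parallel, directions necessarily intersect, and they do so at a right angle; conversely, two lines meeting at a right angle are coplanar and have orthogonal directions. Hence the vanishing of $\langle \zeta(v,p), \zeta(v',p')\rangle$ is equivalent to skewer-incidence of $\ell$ and $\ell'$, completing the proof that $(\phi_{\mathrm P}, \phi_{\mathrm L})$ is an isomorphism of incidence spaces. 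The main obstacle is the triple-product simplification together with the observation that perpendicularity of the directions rules out parallelism, which is what upgrades coplanarity to a genuine intersection.
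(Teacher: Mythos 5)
Your proof is correct and follows essentially the same route as the paper: reduce incidence to the vanishing of the dual number $\langle \zeta(v,p), \zeta(v',p')\rangle$, compute it as $\langle v,v'\rangle + \eps\det(v,v',p'-p)$, and observe that positive-definiteness (equivalently, perpendicularity forcing linear independence of the directions) upgrades the vanishing of the triple product to genuine intersection of the lines. The only addition is your explicit justification that containment of the cyclic module reduces to a single scalar equation, which the paper leaves implicit.
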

\begin{proof}
We already know that the mappings $\phi_{\mathrm P}$, $\phi_{\mathrm L}$ are bijections, so it remains to show that they preserve incidence. In other words, we need to show that the lines $p + tv$ and $q + tw$ meet at a a right angle if and only if $\phi_{\mathrm P}(p + tv) \subset \phi_{\mathrm L}(q + tw)$. We have
$$
\phi_{\mathrm P}(p + tv) \subset \phi_{\mathrm L}(q + tw) = \phi_{\mathrm P}(q + tw)^\bot \quad \Leftrightarrow \quad \langle \zeta(p + tv), \zeta(q+tw) \rangle = 0.
$$
Furthermore,
$$
 \langle \zeta(p + tv), \zeta(q+tw) \rangle = \langle v + \eps v \times p, w + \eps w \times q \rangle = \langle v, w \rangle + \eps \det(v,w,q-p).
$$
But $\det(v,w,q-p) = 0$ if either $v$ and $w$ are collinear (which can not happen if $\langle v,w \rangle  = 0$, since the inner product on $V$ is positive-definite), or $q - p \in \spanof\langle v, w\rangle$. The latter is equivalent to saying that the lines $p + tv$ and $q + tw$ have a point in common. So, combining this with the condition $\langle v,w \rangle  = 0$, we get that  $\phi_{\mathrm P}(p + tv) \subset \phi_{\mathrm L}(q + tw)$ if and only if  the lines $p + tv$ and $q + tw$ meet at a right angle, as needed. 
\end{proof}

\begin{remark}
We note that, when restricted to lines through the origin, the map $\phi_{\mathrm P}$ is essentially the identity map on $\RP^2$, as it sends such a line to its direction; as for $\phi_{\mathrm L}$, it is the \emph{polar duality} map sending a line to its orthogonal plane. In this case, Proposition \ref{prop:iis} just says that points $P_1, P_2 \in \RP^2$ are polar if and only if $P_1$ belongs to the polar line of $P_2$. 
\end{remark}

\begin{remark}
We note that, while the ring of dual numbers can be defined over any field $\F$ by $\DN(\F) := \F[\eps]/(\eps^2)$, the construction of this section does not work for an arbitrary field. For example, for $\F = \C$, the  mapping $ \phi_{\mathrm P} \colon \AL(V) \to \Proj(\DN(\C) \otimes_\C V)$ which takes a line $p+tv$ in a three-dimensional complex vector space $V$ to the $\DN(\C)$-span of $ v + \eps v \times p  $, is not even surjective. Indeed, if $v,w \in V$ are such that $v \neq 0$, $\langle v,v \rangle = 0$, and $\langle v,w \rangle \neq 0$, then the $\DN(\F)$-span of $v+ \eps w$ contains no vectors $v' + \eps w' $ with $\langle v',w' \rangle = 0$, and hence no vectors of the form $ v + \eps v \times p  $.
\end{remark}

\subsection{Projective spaces over path algebras}\label{sec:path}
In this section, we describe projective spaces over path algebras of quivers, with the algebra of triangular matrices as the main example. We will see that these can be understood in terms of flags and Schubert cells in the Grassmannian. For a detailed account of quivers and their path algebras, see, e.g., \cite{kirillov2016quiver, schiffler2014quiver}.

Let $Q$ be a quiver, i.e., a directed graph, possibly with multiple edges and loops. Fix some ground field $\F$.

\begin{definition}
The \emph{path algebra} $\F Q$ of $Q$ is an  $\F$-algebra spanned, as an $\F$-vector space, by directed paths in $Q$, including, for every vertex $i$ of $Q$, a trivial path $e_i$, starting and ending at $i$. Multiplication in $\F Q$ corresponds to concatenation of paths: $ab$ is a path obtained by following first $b$ and then $a$. Concatenation $ab$ is only defined when the endpoint of $b$ coincides with the starting point of $a$. If concatenation is undefined, then the product of the corresponding paths is $0$.
\end{definition}
For any quiver $Q$, its path algebra $\F Q$ is unital and associative. The identity element is given by $1 = \sum e_i$, where the sum is taken over all vertices $i$ of $Q$. The path algebra is finite-dimensional if and only if the quiver has no directed cycles.
\begin{example}\label{ex:Aquiver}
Consider a Dynkin quiver of type $A_k$, with all arrows pointing in one direction:
$$
1\to 2 \to \dots \to k.
$$
For every $i \leq j$, where $1 \leq i,j \leq k$, this quiver has a unique directed path from $i$ to $j$. Denote it by $E_{ji}$. Then the corresponding path algebra is spanned by $E_{ij}$, $i \geq j$, with multiplication given by
$
E_{ij}E_{jl} = E_{il}
$
and all other products being $0$. 
This is precisely the algebra $\mathrm T_k(\F)$ of $k \times k$ lower-triangular matrices over $\F$, with $E_{ij}$ being matrix units, cf. \cite[Example 4.3]{schiffler2014quiver} (note that the algebras of upper- and lower-triangular matrices are isomorphic; we choose to work with lower-triangular matrices to simplify notation).
\end{example}

Modules over path algebras can be understood in terms of representations of the corresponding quiver.

\begin{definition}
    A \emph{representation} of a quiver $Q$ is an assignment of a vector space $V_i$ to its every vertex $i$ and a linear map $\phi_a \colon V_i \to V_j$ to every directed edge $a$ from $i$ to $j$.
\end{definition}
Subrepresentations, isomorphisms, direct sums, etc. of quiver representations are defined in an obvious way.

There is a natural identification between quiver representations and (with our conventions, left) modules over the corresponding path algebra, see \cite[Theorem 1.7]{kirillov2016quiver}. Given a $\F Q$-module $M$, set $V_i:= e_iM$; these are the vector spaces assigned to the vertices of $Q$. The map $\phi_a \colon V_i \to V_j$ assigned to a directed arrow $a$ from $i$ to $j$ is given by the action of $a$, viewed as an element of $\F Q$. We have $aV_i \subset V_j$ since $ae_i = e_ja$.

Conversely, given a representation of a quiver $Q$, the corresponding $\F Q$-module is defined as the direct sum of vector spaces $V_i$ assigned to the vertices of $Q$. The action of $\F Q$ on $\bigoplus V_i$ is defined as follows: if $a$ is an arrow from vertex $i$ to vertex $j$, then the action of $a$ is given by $av = \phi_a(v)$ if $v \in V_i$, and $av = 0$ otherwise.

\begin{example}\label{ex:aquiverrep}
A representation of the quiver from Example \ref{ex:Aquiver} is a sequence of vector spaces and maps
$$
\begin{tikzcd}
V_1 \arrow[r, "\phi_{1}"]   &  V_2 \arrow[r, "\phi_{2}"] & \dots\arrow[r, "\phi_{k-2}"] & V_{k-1 }\arrow[r, "\phi_{k-1}"] & V_k
\end{tikzcd}
$$
The associated module over the algebra $\mathrm T_k(\F)$ of lower-triangular matrices is the vector space $V_1 \oplus \dots \oplus V_k$. The matrix $E_{ii}$ acts as a projector onto $V_i$. The matrix $E_{i+1, i}$ acts as $\phi_i$ on $V_i$ and trivially on other $V_j$.

The natural $\mathrm T_k(\F)$-module structure on $\F^k$ corresponds to all $V_i$ being one-dimensional and all $\phi_i$ being isomorphisms, while $\mathrm T_k(\F)$ as a module over itself corresponds to $\dim V_i = i$, and all $\phi_i$ being injective.

\end{example}

This interpretation of modules over path algebras makes the description of corresponding projective spaces a relatively straightforward task. From now on, we will concentrate on the quiver from Example \ref{ex:Aquiver} corresponding to the algebra $\mathrm T_k(\F)$ of triangular matrices. We will see that the corresponding projective spaces can be understood in terms of flags. More general quivers lead to configurations of subspaces with more complicated combinatorics, see Example \ref{ex:quiver2} below.

Consider a representation of the quiver from Example \ref{ex:Aquiver}, as described in Example~\ref{ex:aquiverrep}. Assuming that all maps $\phi_i$ are injective (which is the case for free modules), we can think of such a representation as a flag $V_1 \subset  \dots \subset V_k$. If we further assume that all $V_i$ are finite-dimensional (again, that is so for free modules), then such a flag is determined, up to isomorphism, by its \emph{type}, i.e., a non-decreasing sequence of integers $\dim V_1\leq  \dots \leq \dim V_k$. A free rank one module over $\mathrm T_k(\F)$ corresponds to a complete flag, i.e., a flag of type $(1, \dots, k)$, see Example \ref{ex:aquiverrep}. So, given a flag, the projectivization of the corresponding $\mathrm T_k(\F)$ module consists of its complete subflags which admit a direct complement. These are described by the following:

\begin{prop}
Suppose $V_1 \subset \dots \subset V_k$ is a flag and $U_1 \subset \dots \subset U_k$ is its subflag, i.e., $U_i \subset V_i$. Then the following are equivalent:
\begin{enumerate}
\item The subflag $U_1 \subset \dots \subset U_k$ admits a direct complement in $V_1 \subset \dots \subset V_k$, i.e., there is another subflag $W_1 \subset \dots \subset W_k$ such that $V_i = U_i \oplus W_i$ for all  $i = 1, \dots, k$.
\item $U_i = V_i \cap U_{i+1}$ for all $i = 1, \dots, k-1$.
\item $U_i = V_i \cap U_k$ for all $i = 1, \dots, k$.
\end{enumerate}
\end{prop}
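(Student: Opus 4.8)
The plan is to prove the three conditions equivalent by establishing the cycle $1 \Rightarrow 3 \Rightarrow 2 \Rightarrow 1$, exploiting the fact that the flag structure reduces everything to linear algebra inside the ambient spaces $V_i$. Throughout I will use that a subflag admits a complement precisely when we can choose complements at each level that are \emph{compatible} with the inclusions, i.e. nested.

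\begin{proof}
$1 \Rightarrow 3$. Suppose $V_i = U_i \oplus W_i$ with $W_1 \subset \dots \subset W_k$. Clearly $U_i \subset V_i \cap U_k$, so I must show the reverse inclusion. Take $x \in V_i \cap U_k$. Since $x \in V_i = U_i \oplus W_i$, write $x = u + w$ with $u \in U_i \subset U_k$ and $w \in W_i \subset W_k$. Then $w = x - u \in U_k$, but also $w \in W_k$, so $w \in U_k \cap W_k = 0$. Hence $x = u \in U_i$, giving $V_i \cap U_k \subset U_i$.

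$3 \Rightarrow 2$. Assume $U_i = V_i \cap U_k$ for all $i$. For $i \leq k-1$, the inclusion $U_i \subset V_i \cap U_{i+1}$ is immediate since $U_i \subset U_{i+1}$ and $U_i \subset V_i$. For the reverse, note $V_i \cap U_{i+1} \subset V_i \cap U_k = U_i$, where the first containment uses $U_{i+1} \subset U_k$ and the final equality is the hypothesis. So $U_i = V_i \cap U_{i+1}$.

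$2 \Rightarrow 1$. This is the construction step: assuming $U_i = V_i \cap U_{i+1}$, I build a nested complementary subflag $W_1 \subset \dots \subset W_k$ by induction. Choose $W_1$ to be any complement of $U_1$ in $V_1$. Having constructed $W_i$ with $V_i = U_i \oplus W_i$, I extend to level $i+1$: the key point is that $U_{i+1} \cap V_i = U_i$ by hypothesis, so $U_{i+1}$ and $V_i$ are in relative general position inside $V_{i+1}$, allowing me to choose a complement $W_{i+1}$ of $U_{i+1}$ in $V_{i+1}$ that contains $W_i$. Concretely, I first verify $U_{i+1} \cap (V_i + \text{something})$ behaves well, then pick $W_{i+1} \supset W_i$ extending the complement; since $W_i \cap U_{i+1} \subset W_i \cap V_i \cap U_{i+1} = W_i \cap U_i = 0$, the sum $U_{i+1} + W_i$ is direct, and I enlarge $W_i$ within $V_{i+1}$ to a full complement $W_{i+1}$ of $U_{i+1}$. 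Then $V_{i+1} = U_{i+1} \oplus W_{i+1}$ with $W_i \subset W_{i+1}$, completing the induction.
\end{proof}

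The main obstacle is the induction in $2 \Rightarrow 1$: producing complements that are simultaneously complementary at each level \emph{and} nested. The condition $U_i = V_i \cap U_{i+1}$ is exactly what guarantees $W_i \cap U_{i+1} = 0$, so that the already-chosen lower complement $W_i$ can be extended rather than clashing with $U_{i+1}$; without it, a complement chosen at level $i$ might fail to sit inside any complement at level $i+1$. I expect the routine verification that the extension $W_{i+1} \supset W_i$ can always be completed (a standard basis-extension argument over the field $\F$) to be the only real content, with the other two implications being straightforward set-theoretic manipulations of the intersection conditions.
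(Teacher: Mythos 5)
Your proof is correct and uses essentially the same ideas as the paper: the decomposition $x = u + w$ with $w$ forced into $U \cap W = 0$ for the implication out of condition 1, and an induction building nested complements whose key step is exactly the intersection computation $W_i \cap U_{i+1} \subset W_i \cap V_i \cap U_{i+1} = W_i \cap U_i = 0$ (the paper phrases the same check as $U_k \cap W_{k-1} \subset U_k \cap V_{k-1} = U_{k-1}$). The only difference is organizational — you prove the cycle $1 \Rightarrow 3 \Rightarrow 2 \Rightarrow 1$ while the paper proves $1 \Leftrightarrow 2$ and $2 \Leftrightarrow 3$ — which changes nothing of substance.
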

\begin{proof}
{$ 1 \implies 2$.} We have $U_i \subset V_i \cap U_{i+1}$  since $U_1 \subset \dots \subset U_k$ is a subflag of $V_1 \subset \dots \subset V_k$. To show that $U_i \supset V_i \cap U_{i+1}$, suppose $v \in V_i \cap U_{i+1}$. Since $v \in V_i$ and $V_i = U_i \oplus W_i$, we have $v = u + w$ where $u \in U_i$, $w \in W_i$. Since $u\in U_i$, we also have $u\in U_{i+1}$. Since $v \in U_{i+1}$, it follows that $w \in U_{i+1}$. On other hand $w \in W_i \subset W_{i+1}$, and $U_{i+1} \cap W_{i+1} = 0$, so $w =0$, and $v = u \in U_i$, as needed.

\smallskip

{$ 2 \implies 1$.} We prove the result by induction on $k$. The statement is vacuously true for $k=1$. For general $k$, by induction hypothesis, the subflag $U_1 \subset \dots \subset U_{k-1}$ of $V_1 \subset \dots \subset V_{k-1}$ admits a complementary flag $W_1 \subset \dots \subset W_{k-1}$. To extend it to a flag $W_1 \subset \dots \subset W_{k}$ complementary to $U_1 \subset \dots \subset U_{k}$, we need to find a subspace $W_k \supset W_{k-1}$ such that $V_k = U_k \oplus W_k$. Clearly, such a subspace exists as long as $U_k \cap W_{k-1} = 0$. To see the latter, observe that
$
U_k \cap W_{k-1} \subset U_k \cap V_{k-1} = U_{k-1},
$
which has a trivial intersection with $ W_{k-1}$. Thus, $U_k \cap W_{k-1} = 0$.
\smallskip

$ 2 \implies 3$. Assume that  $U_i = V_i \cap U_{i+1}$ for all $i = 1, \dots, k-1$. Then, for any such $i$, we have 
$$
U_i = V_i \cap U_{i+1} = V_i \cap (V_{i+1} \cap U_{i+2} ) = \dots = V_i \cap \dots \cap V_k \cap U_k = V_i \cap U_k. 
$$
Also, $U_i = V_i \cap U_k$ is trivially true for $i=k$. The result follows.

\smallskip

$ 3 \implies 2$.  Assume that $U_i = V_i \cap U_k$. Then $$U_i = U_i \cap U_{i+1} = (V_i \cap U_k) \cap U_{i+1} = V_i \cap (U_k \cap U_{i+1}) = V_i \cap U_{i+1}. $$
\qedhere
\end{proof}

So, subflags of $V_1 \subset \dots \subset V_k$ admitting a direct complement are those of the form $U \cap V_i$, where $U = U_k$ is a subspace of $V_k$. In particular, we obtain the following:
\begin{prop}

Points in the projectivization of a $\mathrm T_k(\F)$-module associated with a flag $V_1 \subset \dots \subset V_k$ correspond to subspaces $U \in \Gr(k,V_k)$ such that
$
\dim(U \cap V_i) = i
$ for all $i=1,\dots,k$, cf. Proposition \ref{prop:prgr}.
\end{prop}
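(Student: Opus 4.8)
The plan is to assemble the statement from the module--representation dictionary of Section~\ref{sec:path} together with the characterization of complemented subflags established in the preceding proposition. By definition, a point of $\Proj(M)$, where $M$ is the $\mathrm{T}_k(\F)$-module associated to the flag $V_1 \subset \dots \subset V_k$, is a rank one subspace of $M$, i.e., a free rank one submodule that admits a direct complement. So I need to translate each of the three conditions --- being a submodule, being free of rank one, and being complemented --- into the language of flags.

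First I would use the equivalence between $\mathrm{T}_k(\F)$-modules and representations of the $A_k$ quiver: submodules of $M$ correspond to subrepresentations. Since all the structure maps $\phi_i$ are injective (as is the case here, the module arising from an honest flag), a submodule is identified with a subflag $U_1 \subset \dots \subset U_k$ with $U_i \subset V_i$, the inclusions being restrictions of the $\phi_i$. Next, I would recall from Example~\ref{ex:aquiverrep} that a free rank one $\mathrm{T}_k(\F)$-module corresponds precisely to a complete flag, i.e., one of type $(1, \dots, k)$; because the restricted maps $\phi_i|_{U_i}$ are automatically injective, the subrepresentation $U_1 \subset \dots \subset U_k$ is free of rank one exactly when $\dim U_i = i$ for all $i$ (here the uniqueness up to isomorphism of a flag of given type, noted earlier in the section, supplies the converse direction).

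It then remains to impose the complemented condition. By the preceding proposition, a subflag admits a direct complement in $V_1 \subset \dots \subset V_k$ if and only if it is of the form $U_i = U \cap V_i$ for the single subspace $U := U_k \subset V_k$. Combining this with $\dim U_i = i$ gives exactly $\dim(U \cap V_i) = i$ for all $i = 1, \dots, k$; taking $i = k$ yields $\dim U = k$, so $U \in \Gr(k, V_k)$. Conversely, any $U \in \Gr(k,V_k)$ with $\dim(U \cap V_i) = i$ produces, via $U_i := U \cap V_i$, a complemented complete subflag, hence a point of $\Proj(M)$, establishing the claimed bijection.

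I expect no serious obstacle here: the real content has already been extracted in the previous proposition (the equivalence of the three descriptions of complemented subflags), so the present statement is essentially a repackaging. The only genuine step is matching ``free rank one submodule'' with ``complete subflag of type $(1,\dots,k)$'' through the quiver dictionary, and the single point to state carefully is that injectivity of the $\phi_i$ is what both lets us regard submodules as honest subflags inside $V_k$ and guarantees the restricted maps remain injective --- so that the type-$(1,\dots,k)$ subflags are exactly the free rank one submodules.
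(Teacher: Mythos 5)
Your proposal is correct and follows essentially the same route as the paper: the paper also identifies points of the projectivization with complete subflags (type $(1,\dots,k)$, i.e.\ free rank one submodules) admitting a direct complement, and then invokes the preceding proposition to write such subflags as $U_i = U \cap V_i$ with $U = U_k \in \Gr(k,V_k)$, yielding $\dim(U\cap V_i)=i$. The paper leaves this assembly implicit (the proposition is stated as an immediate consequence), whereas you spell out the quiver-dictionary steps, but the content is the same.
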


It is easy to see that the set of such $U$ is a partial closure of a Schubert cell in $\Gr(k,V_k)$ corresponding to a partition $\lambda_i := (\dim V_k - k) - (\dim V_i - i) $, provided that the sequence $\lambda_i$ is non-increasing, i.e., $\dim V_{i+1} > \dim V_i $ for all $i$. If this condition does not hold, then a complete subflag of $V_1 \subset \dots \subset V_k$ cannot admit a direct complement, so the projectivization of the corresponding module is empty.

Further, since free rank one $\mathrm T_k(\F)$-modules correspond to complete flags, a free rank $n$ module corresponds to a flag of type $(n, 2n, \dots , kn)$. In particular, we have the following:

\begin{prop}
The projective plane $\Proj^2(\mathrm T_k(\F))$ consists of  $P \in \Gr(k,3k)$ such that
$
\dim(P \cap \F^{3i}) = i,
$
where $\F^3 \subset \dots \subset \F^{3n}$ is a flag of coordinates subspaces. Lines in that projective plane are given by $L \in \Gr(2k,3k)$ such that
$
\dim(L \cap \F^i) = 2i.
$
Incidence between points and lines is given by containment, cf. Proposition \ref{prop:matrixplane}.
\end{prop}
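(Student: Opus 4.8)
The plan is to specialize the two preceding propositions to a free rank three $\mathrm T_k(\F)$-module and then to run the same analysis one rank higher for lines. First I would recall that $\Proj^2(\mathrm T_k(\F)) = \Proj(M)$, where $M$ is a free $\mathrm T_k(\F)$-module of rank $d+1 = 3$. By the correspondence between free modules and flags established above, $M$ corresponds to a flag of type $(3, 6, \dots, 3k)$, which I realize as the flag of coordinate subspaces $V_i := \F^{3i}$ with top space $V_k = \F^{3k}$. The description of points is then immediate: points of $\Proj(M)$ are precisely its rank one subspaces, so the preceding proposition on points in the projectivization of a $\mathrm T_k(\F)$-module applies directly and gives exactly the subspaces $P \in \Gr(k, 3k)$ with $\dim(P \cap \F^{3i}) = i$ for all $i$.

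For lines I would repeat the argument with rank two in place of rank one. By definition, lines in $\Proj(M)$ are its rank two subspaces, i.e.\ free rank two submodules $L \subset M$ admitting a direct complement. Under the module-flag correspondence, such an $L$ is a subflag $L_1 \subset \dots \subset L_k$ of $V_1 \subset \dots \subset V_k$, and freeness of rank two forces this subflag to have type $(2, 4, \dots, 2k)$, that is $\dim L_i = 2i$. The requirement that $L$ admit a direct complement in $M$ is exactly the hypothesis of the proposition characterizing subflags with direct complements, whose third equivalent condition yields $L_i = V_i \cap L_k$. Setting $L := L_k \in \Gr(2k, 3k)$, the type condition $\dim L_i = 2i$ becomes $\dim(L \cap \F^{3i}) = 2i$, as claimed.

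Finally, for the incidence relation I would note that containment of submodules is detected at the top of the flag. Denote a point by its subflag $P_1 \subset \dots \subset P_k$ with $P_k \in \Gr(k,3k)$, and a line by $L_1 \subset \dots \subset L_k$ with $L_k \in \Gr(2k,3k)$. Incidence means that the point submodule is contained in the line submodule, which under the flag correspondence reads $P_i \subset L_i$ for every $i$. Since $P_i = P_k \cap V_i$ and $L_i = L_k \cap V_i$, this chain of inclusions holds for all $i$ if and only if it holds for $i = k$, i.e.\ if and only if $P_k \subset L_k$; thus incidence is containment of the associated subspaces in $\F^{3k}$. The step requiring the most care is the line case: one must check that rank two freeness pins down the subflag type as $(2, 4, \dots, 2k)$ and that the direct-complement hypothesis is precisely what allows the subflag proposition to reduce the data of $L$ to its top space $L_k$. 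Everything else is a direct specialization of results already established.
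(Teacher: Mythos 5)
Your argument is correct and is exactly the specialization the paper has in mind: the paper states this proposition without a separate proof, as an immediate consequence of the subflag-complement proposition and the identification of free rank $n$ modules with flags of type $(n,2n,\dots,kn)$, which is precisely what you carry out (including the rank-two case for lines and the reduction of incidence to containment of top spaces). Note that your formula $\dim(L\cap\F^{3i})=2i$ is the intended reading of the statement; the $\F^{i}$ appearing there is a typo.
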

For instance, points in $\Proj^2(\mathrm T_2(\F))$ are $2$-planes in $\F^6$ which meet a fixed $3$-plane along a line, while lines are $4$-planes which meet the same $3$-plane along a $2$-plane.

In a similar way, one can describe projective spaces for path algebras of more complicated quivers. In that case, instead of flags one needs to study more complex configurations of subspaces.
\begin{example}\label{ex:quiver2}
Modules over the path algebra of the quiver
$
1\to 3 \leftarrow  2
$
correspond to triples of vector spaces $V_1, V_2, V_3$ with linear maps $\phi_1 \colon V_1 \to V_3$, $\phi_2 \colon V_2 \to V_3$. Assuming that $\phi_1, \phi_2$ are injective and their images have trivial intersection (which is the case for free modules), this is the same as a vector space $V_3$ with two trivially intersecting subspaces $V_1$, $V_2$.
For a free rank one module one has $\dim V_1 = \dim V_2 = 1$, $\dim V_3 = 3$. Using this, it is easy to see that the projective plane over this algebra consists of $3$-planes $U \in \Gr(3,9)$ which meet each of the two given  $3$-planes (which are transverse to each other) along a line.

\end{example}

\section{Polygons and pentagram maps over rings}\label{sec:ppm}
In this section, we define the notion of a \emph{polygon} in the projective plane over a ring (see Definition \ref{defpolygon}), as well as show that the pentagram map on polygons is well-defined (Section~\ref{GeometryProjPlane}). Furthermore, we show that, for matrix rings and dual numbers, the general notion of a pentagram map over a ring specializes to Grassmann and skewer pentagram maps respectively (Examples \ref{ex1} and \ref{ex2}). 
\subsection{Polygons in projective spaces over rings} \label{polygons}

Over a field, the pentagram map is well-defined for polygons such that all consecutive triple{s} of vertices are in general linear position. The general linear position requirement means that the points are not collinear. In the context of (stably finite) rings, the right condition that ensures that the pentagram map is defined is the \emph{spanning} condition.
\begin{definition}
Let $M$ be an $R$-module. We say that points $P_1, \dots, P_n \in \Proj(M)$ \emph{span} $ \Proj(M)$ if their corresponding subspaces span $M$: $P_1 + \dots + P_n = M$. 
\end{definition}
By Proposition \ref{prop:gsb}, if $R$ is stably finite, and $M$ is free of rank $n$, then $n$ points $P_1, \dots, P_n$ span $ \Proj(M)$ if and only if $M = P_1 \oplus \dots \oplus P_n$. In other words, if $v_i$ spans $P_i$, then  $P_1, \dots, P_n$ span $ \Proj(M)$ if and only if $v_1, \dots, v_n$ is a basis of $M$. 

\begin{definition} \label{defpolygon}
A \textit{polygon} in $\Proj^d(R)$ is a bi-infinite sequence of points $P_i \in \Proj^d(R)$ such that, for any $i\in \Z$, the points $P_i, P_{i+1}, \dots, P_{i+d}$ span $\Proj^d(R)$.

\end{definition}
For stably finite $R$, this definition can be restated as follows: if $\Proj^d(R) = \Proj(M)$, and $v_i$ spans $P_i$, then $v_i, v_{i+1}, \dots, v_{i+d}$ is a basis of $M$  for any $i\in \Z$.

We notice that our definition of polygons allows for infinite polygons or closed ones, that is, polygons such that $P_i = P_{i+n}$ for some fixed $n \in \Z_+$ and all $i \in \Z$.

\begin{definition} \label{defProjEq}
We say two polygons $(P_i)$ and $( P_i')$ are  \emph{projectively equivalent} if there exists a  projective transformation that takes each $P_i$ to $ P_i'$.
\end{definition}
For most of the paper, we are interested in polygons considered up to projective equivalence. 

\begin{remark}
We note that the notion of being spanning, and hence the notion of a polygon, is not determined by the incidence structure. For example, it is easy to see that the inclusion map $\Z^3 \to \mathbb Q^3$ gives an isomorphism of incidence spaces $\Proj^2(\Z) \simeq \Proj^2(\mathbb Q)$ (the same is true for any PID and its field of fractions). 

However, while any three non-collinear points $P_1, P_2, P_3$ span $\Proj^2(\mathbb Q)$, for $\Proj^2(\Z)$ that is only so if the determinant of the matrix formed by the basis vectors in  $P_1, P_2, P_3$ is $\pm 1$. Accordingly, not every polygon over $\mathbb Q$ is a polygon over $\Z$ in the sense of Definition \ref{defpolygon}, even though $\Proj^2(\Z)\simeq\Proj^2(\mathbb Q)$ as incidence spaces.
\end{remark}

\subsection{Pentagram maps over stably finite rings, Grassmannians, and skewers} \label{GeometryProjPlane}

\begin{definition} \label{defGeneralizedPM}
Suppose $R$ is a stably finite ring. The \emph{pentagram map} in $\Proj^2(R)$ takes a polygon $P_i \in \Proj^2(R)$ to the sequence of points $P'_i \in \Proj^2(R)$, where $P_i'$ is the intersection of the lines connecting $P_i$ to $P_{i+2}$ and $P_{i+1}$ to $P_{i+3}$.  
\end{definition}

The sequence $P_i'$ is well-defined thanks to the following. 

\begin{lemma} \label{uniqueintpt}
Suppose $R$ is a stably finite ring, and $M$ is a free rank three $R$-module. Let $P_1, P_2, P_3, P_4 \in  \Proj(M)$ be such that the triples $P_1, P_2, P_3$ and $P_2, P_3, P_4$  span $\Proj(M)$. Then

\begin{enumerate}
\item For each of the pairs of points $(P_1, P_2)$, $(P_3, P_4)$, there is a unique line containing that pair of points.
\item Those lines $P_1P_2$ and $P_3P_4$ meet at a single point. 

\end{enumerate}
\end{lemma}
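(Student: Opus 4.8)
The plan is to fix unimodular generators $v_i$ of the points $P_i$ and reduce everything to linear algebra inside the free rank-three module $M$, using stable finiteness in place of the determinant and general-position arguments that would be available over a field. Since $P_1, P_2, P_3$ span $\Proj(M)$ and $P_2, P_3, P_4$ span $\Proj(M)$, Proposition \ref{prop:gsb} shows that both $\{v_1, v_2, v_3\}$ and $\{v_2, v_3, v_4\}$ are bases of $M$. The natural candidates for the two lines are then $L := P_1 + P_2 = R v_1 \oplus R v_2$ and $L' := P_3 + P_4 = R v_3 \oplus R v_4$.

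For part (1), existence is immediate: $v_1, v_2$ are part of the basis $\{v_1, v_2, v_3\}$, so $L$ is free of rank two with direct complement $R v_3$, hence a line containing $P_1$ and $P_2$; symmetrically for $L'$. For uniqueness I would take any line $L''$ containing $P_1$ and $P_2$, so that $L'' \supseteq P_1 + P_2 = L$. By Proposition \ref{prop:subspsm} the subspace $L$ is then also a subspace of $L''$, and were the containment proper, Corollary \ref{cor:di} would force $\rank L < \rank L''$, contradicting $\rank L = \rank L'' = 2$. Hence $L'' = L$, and likewise $L'$ is the only line through $P_3, P_4$.

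For part (2), I would write $v_4 = a_1 v_1 + a_2 v_2 + a_3 v_3$ in the basis $\{v_1, v_2, v_3\}$. The key step, which I expect to be the main obstacle since determinants are unavailable over a noncommutative ring, is to show that $a_1$ is a unit. Working inside $M = R v_1 \oplus R v_2 \oplus R v_3$, the submodule $\langle v_2, v_3, v_4 \rangle$ equals $R v_2 + R v_3 + R(a_1 v_1)$, whose set of $v_1$-coordinates is exactly the left ideal $R a_1$; thus $\{v_2, v_3, v_4\}$ generates $M$ if and only if $1 \in R a_1$, i.e.\ $a_1$ is left-invertible. Dedekind-finiteness of $R$ (Proposition \ref{epiisiso}) then upgrades this to $a_1$ being a genuine unit.

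Finally, setting $w := a_1 v_1 + a_2 v_2$, I would note that $L' = R v_3 \oplus R w$ (subtract $a_3 v_3$ from $v_4$) and solve the incidence equation coordinate-wise: if $x = r v_1 + s v_2 \in L$ also equals $q w + p v_3 \in L'$, then comparing coefficients in the basis $\{v_1,v_2,v_3\}$ yields $p = 0$ and $x = q w$, so $L \cap L' = R w$. Since $a_1$ is a unit, the coordinate functional $\xi_1$ sends $w$ to the unit $a_1$, so $w$ is unimodular and $R w$ is a rank-one subspace, i.e.\ a point, by Proposition \ref{1DandUni}. Any point incident to both lines is a rank-one subspace contained in the rank-one subspace $L \cap L' = R w$, hence equals $R w$ by Corollary \ref{cor:di}, giving the single intersection point.
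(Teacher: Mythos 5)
Your proof is correct. Part (1) is essentially the paper's argument: existence of the line $P_1\oplus P_2$ from the complement $P_3$, and uniqueness from Corollary \ref{cor:di} applied to a containment of rank-two subspaces. For part (2), however, you take a genuinely different route. The paper works coordinate-free: from $M=(P_3\oplus P_4)\oplus P_2$ and the modular law it gets $P_1\oplus P_2=\bigl((P_1\oplus P_2)\cap(P_3\oplus P_4)\bigr)\oplus P_2$, so the intersection is isomorphic to $P_1$ by projection along $P_2$ (hence free of rank one) and admits the direct complement $P_2\oplus P_3$ in $M$. You instead expand $v_4=a_1v_1+a_2v_2+a_3v_3$, show $a_1$ is left-invertible because the $v_1$-coordinates of $\langle v_2,v_3,v_4\rangle$ form the left ideal $Ra_1$, upgrade to a unit by Dedekind-finiteness, and exhibit the intersection explicitly as $Rw$ with $w=a_1v_1+a_2v_2$ unimodular. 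Both arguments are sound and both lean on stable finiteness (the paper through Hopfian/rank arguments, you through Dedekind-finiteness of $R$ itself via Proposition \ref{epiisiso}). What your version buys is an explicit generator of the intersection point, which foreshadows the identity $\hat v_i=a_iv_i+b_iv_{i+1}=-c_iv_{i+2}-d_iv_{i+3}$ used later in the difference-operator description of the pentagram map; what the paper's version buys is brevity and independence from any choice of basis or coordinates.
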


\begin{proof}
To prove the first claim, note that since $P_1, P_2, P_3$ span $\Proj(M)$, we have that $P_1 \oplus P_2$ is free, and $(P_1 \oplus P_2) \oplus P_3 = M$, so $P_1 \oplus P_2 \subset M$ is a rank two subspace (i.e., a line in $\Proj(M)$). Furthermore, any other line containing $P_1, P_2$ must contain $P_1 \oplus P_2$ and thus coincide with it by Corollary \ref{cor:di}. So, $P_1, P_2$ lie on a unique line, $P_1 \oplus P_2$. Likewise, $P_3, P_4$ also belong to a unique line, $P_3 \oplus P_4$. 

To prove the second claim, it suffices to show that $(P_1 \oplus P_2) \cap (P_3 \oplus P_4)$ is a rank one subspace (i.e., a point in $\Proj(M)$). If this is done, then, by Corollary \ref{cor:di}, any other point contained in the lines $P_1 \oplus P_2$, $P_3 \oplus P_4$ must coincide with  $(P_1 \oplus P_2) \cap (P_3 \oplus P_4)$. 

First, observe that, since $M =  (P_3 \oplus P_4) \oplus P_2$, and $P_2 \subset P_1 \oplus P_2$, we have \begin{equation}\label{eqn:p12}P_1 \oplus P_2  =  ((P_1 \oplus P_2) \cap (P_3 \oplus P_4)) \oplus P_2.\end{equation}
So, there is an isomorphism of submodules $(P_1 \oplus P_2) \cap (P_3 \oplus P_4) \simeq P_1$, given by projection along $P_2$. In particular, $(P_1 \oplus P_2) \cap (P_3 \oplus P_4)$ is a rank one free module. Also, using \eqref{eqn:p12}, we get
$$
M = P_1 \oplus P_2 \oplus P_3  =  ((P_1 \oplus P_2) \cap (P_3 \oplus P_4)) \oplus P_2 \oplus P_3.
$$

So, the rank one free submodule $(P_1 \oplus P_2) \cap (P_3 \oplus P_4)$ admits a direct complement in $M$, i.e., is a point. \qedhere

\end{proof}

Applying Lemma \ref{uniqueintpt} to vertices $P_i, P_{i+2}, P_{i+1}, P_{i+3}$  (in that specific order) of a polygon~$P_i$, we get that the pentagram map on polygons is well-defined. We note that the pentagram map is \emph{not} a map from the set of polygons to itself, since the image of a polygon under the pentagram map is not necessarily a polygon. Instead, it should be understood as a binary relation on the set of polygons. In the case when the base ring is a field, the pentagram map is, actually, given by rational formulas \ai{(see, e.g., \cite[Section 2.2]{integrabilityPMOvScTa})}.

Applying Lemma \ref{uniqueintpt} to vertices $P_i, P_{i+1}, P_{i+2}, P_{i+3}$ (in that order) of a polygon $P_i$, one also gets the following map:
\begin{definition} \label{defGeneralizeIPM}
The \emph{inverse pentagram map} in $\Proj^2(R)$ takes a polygon $P_i \in \Proj^2(R)$ to the sequence of points $P'_i \in \Proj^2(R)$, where $P_i'$ is the intersection of the lines connecting $P_i$ to $P_{i+1}$ and $P_{i+2}$ to $P_{i+3}$. 
\end{definition}
This is indeed the inverse of the pentagram map in the sense that the composition of this map and the pentagram map (taken in either order) is the identity, up to shifting indices, wherever that composition is defined. To see this, consider a polygon $P_i \in \Proj^2(R)$ that is mapped by the pentagram map to a polygon $P'_i \in \Proj^2(R)$. Then, by construction, we have that $P_i', P_{i+1}'$ belongs to the line $P_{i+1}P_{i+3}$. Since $P_i'$ is a polygon, it follows that $P_{i}'P_{i+1}' = P_{i+1}P_{i+3}$ and $P_{i}'P_{i+1}' \cap P_{i+2}'P_{i+3}' = P_{i+3}$. Thus, the pentagram map followed by the inverse pentagram is the identity (up to shifting indices). Likewise, if we apply the inverse map first, we get the identity as well.

We note that both the pentagram map and its inverse commute with projective transformations and, as such, are well-defined on the set of polygons modulo projective equivalence. Also, note that if the image of a closed polygon under the pentagram map is again a polygon, then it is also closed. 

\begin{example}\label{ex1}

The pentagram map over the matrix ring $\Mat_k(\F)$ is precisely (the planar case of) the Grassmann pentagram map, as defined in \cite[Definition 4.1]{felipe2015pentagram}:
\begin{prop}
A polygon in $\Proj^2(\Mat_k(\F))$ is a sequence of $k$-dimensional subspaces  $V_i\subset \F^{3k}$ such that $V_i \oplus V_{i+1} \oplus V_{i+2} = \F^{3k}$. The pentagram map sends such a sequence to the sequence \begin{equation}\label{eq:gpm}V_i' := (V_i \oplus V_{i+2}) \cap (V_{i+1} \oplus V_{i+3}). \end{equation}
\end{prop}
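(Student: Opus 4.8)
The plan is to translate the abstract ring-theoretic description of the pentagram map on $\Proj^2(\Mat_k(\F))$ into the concrete Grassmannian operations appearing in \eqref{eq:gpm}, relying entirely on the Morita dictionary $W \mapsto W^k$ established in Propositions \ref{prop:prgr} and \ref{prop:matrixplane}. First I would recall that, by Proposition \ref{prop:prgr}, a point of $\Proj^2(\Mat_k(\F)) = \Proj(V^k)$ with $\dim_\F V = 3k$ is exactly a submodule $V_i^k = \F^k \otimes_\F V_i$ for a $k$-dimensional subspace $V_i \subset \F^{3k}$, and that the correspondence $W \mapsto W^k$ is a lattice isomorphism between subspaces of $V$ and submodules of $V^k$.

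Next I would verify the spanning/polygon condition. By the remark following Definition \ref{defpolygon}, for the stably finite ring $\Mat_k(\F)$ a triple of points spans $\Proj^2(\Mat_k(\F))$ precisely when the corresponding rank-one submodules form an internal direct sum equal to the whole module. Under the lattice isomorphism $W \mapsto W^k$, which is additive and intersection-preserving, this is equivalent to $V_i^k \oplus V_{i+1}^k \oplus V_{i+2}^k = V^k$, i.e.\ to $(V_i \oplus V_{i+1} \oplus V_{i+2})^k = V^k$, which holds if and only if $V_i \oplus V_{i+1} \oplus V_{i+2} = \F^{3k}$. This gives the stated description of a polygon.

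Finally I would compute the image. Lemma \ref{uniqueintpt} tells us that $P_i' = (P_i \oplus P_{i+2}) \cap (P_{i+1} \oplus P_{i+3})$, where the sums and intersection are of submodules of $V^k$. The line $P_i \oplus P_{i+2}$ is the submodule $V_i^k \oplus V_{i+2}^k = (V_i \oplus V_{i+2})^k$, and similarly for the other line; then the intersection of submodules corresponds, again by the lattice isomorphism, to the intersection of the underlying subspaces, so
$$
P_i' = \big( (V_i \oplus V_{i+2}) \cap (V_{i+1} \oplus V_{i+3}) \big)^k.
$$
Unwinding the identification of points with $k$-dimensional subspaces, this says exactly that $V_i' = (V_i \oplus V_{i+2}) \cap (V_{i+1} \oplus V_{i+3})$, as claimed, matching \cite[Definition 4.1]{felipe2015pentagram}.

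The only point requiring genuine care is that the lattice isomorphism $W \mapsto W^k$ is compatible not merely with inclusions but with the three module operations involved—internal direct sums, arbitrary sums, and intersections—so that each ring-theoretic operation in Lemma \ref{uniqueintpt} transports to the corresponding vector-space operation; this is immediate from the tensor decomposition $W^k = \F^k \otimes_\F W$ and the fact that tensoring with the fixed finite-dimensional space $\F^k$ is exact, so I expect no substantial obstacle beyond bookkeeping.
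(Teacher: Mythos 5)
Your proof is correct and follows essentially the same route as the paper's: both arguments reduce everything to the lattice isomorphism $W \mapsto W^k$ between subspaces of $V$ and submodules of $V^k$, check that spanning of points corresponds to spanning of the underlying $k$-planes, and then transport the sums and intersections in Lemma \ref{uniqueintpt} through that isomorphism. Your added remarks on exactness of tensoring with $\F^k$ merely make explicit a compatibility the paper takes for granted.
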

\begin{proof}
We have $\Proj^2(\Mat_k(\F))  = \Proj(V^k)$, where $V$ is a vector space over $\F$ of dimension $3k$. The correspondence $W \mapsto W^k$ identifies the lattice of subspaces of $V$ with the lattice of submodules of~$V^k$. Under this correspondence, points of $ \Proj(V^k)$ are identified with $k$-dimensional subspaces of~$V$. Three points span $ \Proj(V^k) $ if and only if the corresponding $k$-dimensional subspaces span~$V$. Thus, a polygon in $\Proj(V^k)$ is precisely a collection $V_i \in \Gr(k,V)$
 such  that  $V_i \oplus V_{i+1} \oplus V_{i+2} = V$. Further, intersections and spans of subspaces in $V^k$ correspond to intersections of subspaces in $V$, giving the claimed formula for the pentagram map. \qedhere
 
  \end{proof}
  Polygons and the pentagram map for triangular matrices have a similar description. Polygons in $\Proj^2(\mathrm T_k(\F))$  consist of sequences of subspaces $V_i \subset \F^{3k}$ such that, for all $i \in \Z$ and all $j = 1, \dots, k$, we have
 $$\dim(V_i \cap \F^{3j}) = j, \quad
  (V_i \cap \F^{3j}) \oplus (V_{i+1} \cap \F^{3j}) \oplus (V_{i+2} \cap \F^{3j}) = \F^{3j}.$$ The pentagram map on these sequences is given by the same formula as for the full matrix ring.

  For example, a polygon in   $\Proj^2(\mathrm T_2(\F))$ is a sequence of $2$-planes $V_i$ in $\F^6$ such that $V_i \oplus V_{i+1} \oplus V_{i+2} = \F^6$, and each $V_i$ intersects a given subspace $\F^3 \subset \F^6$ along a line $W_i$. Additionally, $W_i \oplus W_{i+1} \oplus W_{i+2} = \F^3$. The pentagram map produces new sequences $V_i', W_i'$ given by 
  \begin{gather}V_i' := (V_i \oplus V_{i+2}) \cap (V_{i+1} \oplus V_{i+3}),
  \\
  W_i' := (W_i \oplus W_{i+2}) \cap (W_{i+1} \oplus W_{i+3}).\end{gather}
 Notice that $W_i$, which are just points in $\Proj^2(\F)$, undergo  the usual pentagram map. Thus, the pentagram map in $\Proj^2(\mathrm T_2(\F))$ can be thought of as a Grassmannian pentagram map on sequences $V_i \in \Gr(2,6)$ coupled with the usual pentagram map on sequences $W_i \in \Gr(1,3)$. The coupling condition is $W_i = V_i \cap \F^3$. A similar description can be given for pentagram maps over $\mathrm T_k(\F)$ or, more generally, arbitrary path algebras corresponding to acyclic quivers.

  \end{example}
  
  \begin{example}\label{ex2}
  The pentagram map over the ring $\DN$ of dual numbers is the skewer pentagram map, as defined in \cite[Section 9.7]{tabwineskins}:
  \begin{prop}
  A polygon in $\Proj^2(\DN)$ is a bi-infinite sequence $\ell_i$ of affine lines in $\R^3$ such that, for any $i \in \Z$, the direction vectors of the lines $\ell_i, \ell_{i+1}, \ell_{i+2}$ are linearly independent. The pentagram map sends such a sequence to the sequence    $$\ell_i' := S(S(\ell_i , \ell_{i+2}) , S(\ell_{i+1} , \ell_{i+3})),$$ where $S(\ell,\ell')$ stands for the common perpendicular (skewer) of the lines $\ell,\ell'$. 
  \end{prop}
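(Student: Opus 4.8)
The plan is to establish the isomorphism $\Proj^2(\DN) \simeq \AL(V)$ of incidence spaces via the Study-type mappings $\phi_{\mathrm P}, \phi_{\mathrm L}$ from Proposition~\ref{prop:iis}, and then simply translate the definition of the pentagram map (Definition~\ref{defGeneralizedPM}) across this isomorphism. The key observation is that the pentagram map is \emph{defined purely in terms of the incidence structure}: it is built from the operations of joining two points by a line and intersecting two lines, both of which are preserved under any isomorphism of incidence spaces. So once I know that $\phi_{\mathrm P}$ and $\phi_{\mathrm L}$ identify points with affine lines, lines with affine lines, and incidence with perpendicular intersection, the formula for the image polygon should fall out automatically.

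First I would pin down the polygon condition. A polygon in $\Proj^2(\DN)$ is a sequence $P_i$ of points in $\Proj(\DN \otimes_\R V)$ such that consecutive triples span, i.e. $P_i \oplus P_{i+1} \oplus P_{i+2} = \DN \otimes_\R V$. Under $\phi_{\mathrm P}$, a point $\DN\,\zeta(v,p) = \DN(v + \eps\, v\times p)$ corresponds to the affine line $\{p + tv\}$. I would show that three such points span the rank-three free module $\DN\otimes_\R V$ precisely when the direction vectors $v_i, v_{i+1}, v_{i+2}$ are linearly independent over $\R$. This is the content of the unimodularity analysis already carried out in the proof of Proposition~\ref{prop:phip} (a dual vector $v + \eps w$ is unimodular iff $v \neq 0$): reducing modulo $\eps$, a generating set of $\DN\otimes_\R V$ must reduce to a generating set of $V$, which forces the directions to be independent; conversely independence of the directions gives a basis after lifting, using that $\DN$ is local with maximal ideal $(\eps)$ and Nakayama-type reasoning (or directly, since $\DN$ is stably finite, Proposition~\ref{prop:gsb} applies).

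Next I would verify that join and meet in $\Proj^2(\DN)$ correspond to the skewer operation. Under the incidence isomorphism of Proposition~\ref{prop:iis}, the unique line through two points $P_1, P_2$ (Lemma~\ref{uniqueintpt}(1)) is the affine line simultaneously perpendicular to and meeting both $\phi_{\mathrm P}^{-1}(P_1)$ and $\phi_{\mathrm P}^{-1}(P_2)$, which is exactly the skewer $S$; likewise the intersection point of two lines (Lemma~\ref{uniqueintpt}(2)) is the skewer of the two corresponding affine lines. Formally, joining $P_i, P_{i+2}$ and then passing through $\phi_{\mathrm L}$ gives the affine line $S(\ell_i, \ell_{i+2})$, and similarly $S(\ell_{i+1}, \ell_{i+3})$; intersecting these two lines in $\Proj^2(\DN)$ and pulling back through $\phi_{\mathrm P}$ yields $S(S(\ell_i,\ell_{i+2}), S(\ell_{i+1},\ell_{i+3}))$, which is precisely $\ell_i'$. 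Since $\phi_{\mathrm P}, \phi_{\mathrm L}$ commute with join and meet, this matches Definition~\ref{defGeneralizedPM} term by term.

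The main obstacle I anticipate is making the correspondence between the algebraic join/meet and the geometric skewer fully rigorous, rather than merely asserting it follows from Proposition~\ref{prop:iis}. The incidence isomorphism guarantees that a \emph{unique} line through two points maps to the skewer, but I must confirm that the relevant uniqueness hypotheses actually hold for polygon vertices: Lemma~\ref{uniqueintpt} requires the spanning condition on consecutive triples, so I need the genericity built into the polygon definition to translate into the condition that the direction vectors are independent, and to check this is preserved for the auxiliary quadruple $(\ell_i, \ell_{i+2}, \ell_{i+1}, \ell_{i+3})$ as well. The geometric subtlety is that two affine lines in $\R^3$ with \emph{parallel} directions have no well-defined skewer, so I would confirm that the independence of consecutive directions precludes this degeneracy exactly where the pentagram map is applied, thereby showing the skewer map of \cite{tabwineskins} is defined on the same polygons and given by the same formula.
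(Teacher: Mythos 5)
Your proposal is correct and follows essentially the same route as the paper: identify the spanning condition on consecutive triples with linear independence of the direction vectors, then read off the pentagram map formula through the incidence isomorphism of Proposition~\ref{prop:iis}, using Lemma~\ref{uniqueintpt} to guarantee the joins and meets (hence the skewers) are well-defined. The only difference is in the spanning step, where the paper computes the determinant over the commutative ring $\DN$ and applies the real-part homomorphism to reduce to $\det(v_1,v_2,v_3)\neq 0$, while you reduce modulo $\eps$ and invoke a Nakayama-type lifting argument -- both amount to the same observation that invertibility over $\DN$ is detected by the real part.
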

  \begin{proof}
  Consider three affine lines $\{p_1 + t v_1\}$, $\{p_2 + t v_2\}$, $\{p_3 + t v_3\}$ in $\R^3$. The associated points in $\Proj^2(\DN)$ are the spans of the dual vectors $\zeta(v_i,p_i) = v_i + \eps v_i \times p_i$. Those points span $\Proj^2(\DN)$ if and only if the vectors $\zeta(v_i,p_i)$ span $\DN^3$. Since $\DN$ is a commutative ring, this is equivalent to saying that  $\det(\zeta(v_1,p_1), \zeta(v_2,p_2), \zeta(v_3,p_3)) $ is a unit. 
 Consider the real part function $\Re \colon \DN \to \R$, $\Re(a+b \eps):= a$. It is a homomorphism of rings, so
$$
\Re  \det(\zeta(v_1,p_1), \zeta(v_2,p_2), \zeta(v_3,p_3)) =  \det(\Re\zeta(v_1,p_1), \Re\zeta(v_2,p_2), \Re \zeta(v_3,p_3)) =  \det(v_1, v_2, v_3).
$$
 Therefore, three affine lines span $\Proj^2(\DN)$ if and only if their direction vectors span $\R^3$, giving the desired description of polygons.  

  As for the formula for the pentagram map, it follows from the definition of the latter and Proposition \ref{prop:iis}. 
  \end{proof}

  \end{example}

\section{Polygons via difference operators} \label{DO}

In this section, we introduce our main technical tool: difference operators with coefficients in a ring. After discussing basic properties of such operators in Section \ref{DODefs}, we use that language to give an algebraic description of the moduli space of polygons in the projective plane over a ring (Theorem \ref{DOcorresPOLY}). 

\subsection{Difference operators with coefficients in a ring} \label{DODefs}
For a ring $R$, denote by $R^\Z$ the ring of bi-infinite sequences of elements of $R$. We will think of $R^\Z$ as a left module over itself and, simultaneously, as a right $R$-module. Denote by $\End_R(R^\Z)$ the ring of endomorphisms of $R^\Z$, viewed as a right $R$-module.

\begin{definition} \label{defDifferenceOperators}
The ring of (left) \emph{difference operators over $R$} is a subring $R^\Z[T]\subset \End_R(R^\Z)$ generated by operators of left multiplication by elements of $R^\Z$, along with the \emph{shift} operator  $T \in \End_R(R^\Z)$, given by $(T{a})_i := a_{i+1}$. In other words, a difference operator is an endomorphism $\D : R^\Z \to R^\Z$ \ai{of right $R$-modules} given by \[\D = \sum_{i=0}^d \alpha_{i} T^i,\]
where $d \geq 0$ is an integer,  and $\alpha_1, \dots, \alpha_d \in R^\Z$. Assuming $\alpha_d \neq 0$, we call $d$ \emph{the degree} of the difference operator $\D$. We will denote the entries of the bi-infinite sequence $\alpha_i$ by $\alpha_{ij}$. For any $a \in R^\Z$, the operator $\D$ acts on $a$ as
$$
(\D a)_j=  \sum_{i=0}^d \alpha_{ij} a_{i+j}.
$$
 
\end{definition}
More generally, one can consider the subring   $R^\Z[T,T^{-1}] \subset \End_R(R^\Z)$ generated by operators of left multiplication by elements of $R^\Z$, along with $T$ and $T^{-1}$, that is, allow for difference operators of the form  \[\D = \sum_{i=m}^n \alpha_i T^i,\] where $m,n \in \Z$. However, we will focus primarily on operators as in Definition \ref{defDifferenceOperators}.

\begin{definition}
Let $\D = \sum_{i=0}^d \alpha_i T^i \in R^\Z[T]$ be a difference operator. 
If $\alpha_0$ and $\alpha_d$ are sequences of left invertible elements in $R$ (i.e., are left invertible in $R^\Z$), we call $\D$ a \textit{properly bounded difference operator}.  
\end{definition}

In the next section, we will establish a correspondence between properly bounded difference operators and polygons. The remaining part of the current section contains various technical results on difference operators which will be used to establish that correspondence. 

To a properly difference operator $\D$ of degree $d$, we will associate a polygon in the projectivization of $(\Ker \D)^*$. The latter is a $(d-1)$-dimensional left projective space, thanks to the following:

\begin{prop} \label{propKernelDim}
    Let $\mathcal{D} \in R^\Z[T]\subset \End_R(R^\Z)$ be a properly bounded difference operator of degree $d$. Then its kernel $\Ker\, \mathcal{D} \subset R^\Z$ is a free rank $d$ right $R$-module. 
\end{prop}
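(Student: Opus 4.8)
The plan is to realize $\Ker\D$ as the solution space of a two-sided linear recurrence of order $d$ and to identify it with $R^d$ via the \emph{initial data} map $\mathrm{ev}\colon \Ker\D \to R^d$, $a \mapsto (a_0,\dots,a_{d-1})$. Since $\D$ is an endomorphism of $R^\Z$ as a right $R$-module, $\Ker\D$ is a right $R$-submodule, and $\mathrm{ev}$ is visibly a homomorphism of right $R$-modules. The condition $\D a = 0$ unwinds to the family of relations
$$
\alpha_{0j}a_j + \alpha_{1j}a_{j+1} + \dots + \alpha_{dj}a_{j+d} = 0, \qquad j \in \Z.
$$
The crucial preliminary observation is that, because $R$ is stably finite and hence Dedekind-finite by Proposition~\ref{epiisiso}, every left-invertible element of $R$ is in fact a (two-sided) unit. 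Consequently the hypothesis that $\D$ is properly bounded means precisely that each $\alpha_{0j}$ and each $\alpha_{dj}$ is a unit of $R$.

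With this in hand, the relation at index $j$ can be solved for its top term, $a_{j+d} = -\alpha_{dj}^{-1}(\alpha_{0j}a_j + \dots + \alpha_{d-1,j}a_{j+d-1})$, and equally for its bottom term, $a_j = -\alpha_{0j}^{-1}(\alpha_{1j}a_{j+1} + \dots + \alpha_{dj}a_{j+d})$. Injectivity of $\mathrm{ev}$ follows immediately: if $a_0=\dots=a_{d-1}=0$, then propagating the recurrence forward forces $a_d, a_{d+1},\dots = 0$ and propagating it backward forces $a_{-1},a_{-2},\dots = 0$, so $a=0$. For surjectivity I would start from an arbitrary tuple $(c_0,\dots,c_{d-1}) \in R^d$, set $a_0,\dots,a_{d-1} := c_0,\dots,c_{d-1}$, and extend to a bi-infinite sequence $a$ by the forward rule for $j \ge 0$ and the backward rule for $j < 0$. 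One then checks that the defining relation holds at every index: for $j \ge 0$ it holds by construction of $a_{j+d}$ together with $\alpha_{dj}\alpha_{dj}^{-1} = 1$, and for $j<0$ it holds by construction of $a_j$ together with $\alpha_{0j}\alpha_{0j}^{-1} = 1$. Thus $a \in \Ker\D$ with $\mathrm{ev}(a) = (c_0,\dots,c_{d-1})$.

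The step I expect to be the genuine obstacle is surjectivity, and it is exactly there that stable finiteness is indispensable. The forward/backward extension is literally forced to use a left inverse of $\alpha_{dj}$ (resp.\ $\alpha_{0j}$), but verifying that the extended sequence actually annihilates $\D$ requires that same inverse to cancel on the \emph{other} side, i.e.\ to be a right inverse as well. Over a general ring a properly bounded operator need not have this property, and the naive extension fails the recurrence at the boundary indices; Dedekind-finiteness is precisely what upgrades the one-sided inverses to two-sided ones. Granting this, $\mathrm{ev}$ is a right $R$-module isomorphism $\Ker\D \xrightarrow{\sim} R^d$, so $\Ker\D$ is free, and its rank is well-defined and equal to $d$ by the invariant basis number property of stably finite rings (Proposition~\ref{prop:ibn}).
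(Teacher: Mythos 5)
Your proof is correct and follows essentially the same route as the paper's: view $\D a = 0$ as a two-sided recurrence and show that evaluation on $d$ consecutive entries is a right $R$-module isomorphism onto $R^d$. The one substantive addition is your explicit appeal to Dedekind-finiteness (Proposition~\ref{epiisiso}) to upgrade the left inverses of $\alpha_{0j}$ and $\alpha_{dj}$ to two-sided inverses; the paper's one-line proof passes over this silently, and you are right that without it the surjectivity step genuinely fails (for a non-Dedekind-finite ring the forward extension need not exist), so the statement tacitly relies on the paper's standing assumption that $R$ is stably finite.
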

\begin{proof}
Since $\D$ is an endomorphism of the right $R$-module $R^\Z$, its kernel is also a right $R$-module. To prove that it is free of rank $d$, notice that the equation $\D a = 0$ may be viewed as a sequence of recurrence relations, allowing one to find all entries of $a \in R^\Z$ from $a_1, \dots, a_d$. Thus, the projection mapping $\Ker\, \mathcal{D} \to R^d$, given by $a \mapsto (a_1, \dots, a_d)$, is an isomorphism of right $R$-modules, and $\Ker\, \mathcal{D}$ is free of rank $d$.
\end{proof}

For a difference operator $\mathcal{D} \in R^\Z[T]$ which is not properly bounded,  the kernel does not have to be free and/or of finite rank. Still, the kernel of such an operator is relatively small in the following sense:
\begin{prop}\label{prop:kernpb}
 Let $\mathcal{D} \in R^\Z[T]$, $\D \neq 0$ be a difference operator of degree less than $d$. Then, for any properly bounded difference operator $\mathcal{D}' \in R^\Z[T]$ of degree $ d$, there exists $a \in \Ker \D'$ such that $a \notin \Ker \D$. 
\end{prop}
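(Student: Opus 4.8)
The plan is to produce an element $a\in\Ker\D'$ on which $\D$ fails to vanish, i.e.\ to establish that $\Ker\D'\not\subset\Ker\D$. The two facts I want to play against each other are: because $\D'$ is properly bounded, an element of $\Ker\D'$ is completely \emph{and} freely determined by any window of $d$ consecutive entries; and because $\deg\D<d$, each output value $(\D a)_j$ depends only on a window of $d$ consecutive entries of $a$. Matching these two windows will let me force a single output entry to be nonzero.

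First I would record a mild strengthening of Proposition \ref{propKernelDim}: for every $j\in\Z$ the projection
$$
\pi_j\colon\Ker\D'\to R^d,\qquad a\mapsto(a_j,a_{j+1},\dots,a_{j+d-1}),
$$
is an isomorphism of right $R$-modules (in particular surjective). The proof is the one given for Proposition \ref{propKernelDim}, now started at an arbitrary index: writing $\D'=\sum_{i=0}^d\beta_iT^i$, the relation $\D'a=0$ at index $k$ reads $\sum_{i=0}^d\beta_{ik}a_{i+k}=0$, and since $\beta_{0k}$ and $\beta_{dk}$ are left invertible \emph{hence invertible} --- because $R$, being stably finite, is Dedekind-finite by Proposition \ref{epiisiso} --- one can solve for $a_{k+d}$ (forward) and for $a_{k}$ (backward). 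Thus any choice of the $d$ consecutive entries $a_j,\dots,a_{j+d-1}$ extends uniquely to a solution. This two-sided solvability is the step I expect to require the most care: with only one-sided inverses the backward (or forward) extension need not exist, so it is precisely here that stable finiteness, rather than mere proper boundedness, is essential.

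Granting this, the conclusion is short. Let $e=\deg\D$, so $e\le d-1$ and the coefficient sequence $\alpha_e\in R^\Z$ is nonzero; fix an index $j_0$ with $\alpha_{e,j_0}\neq0$. Since $\pi_{j_0}$ is surjective and $e+1\le d$, I can choose $a\in\Ker\D'$ with $a_{j_0+e}=1$ and $a_{j_0+i}=0$ for every $i\in\{0,\dots,d-1\}\setminus\{e\}$. Then, using that $\alpha_i=0$ for $i>e$,
$$
(\D a)_{j_0}=\sum_{i=0}^{e}\alpha_{i,j_0}\,a_{i+j_0}=\alpha_{e,j_0}\neq0,
$$
so $\D a\neq0$ and hence $a\in\Ker\D'$ while $a\notin\Ker\D$, as required. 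The hypothesis $\deg\D<d$ is used exactly to guarantee that the entries $a_{j_0},\dots,a_{j_0+e}$ on which $(\D a)_{j_0}$ depends all fit inside one freely prescribable block of $d$ consecutive entries.
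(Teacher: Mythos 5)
Your argument is correct and follows the same route as the paper's proof: exploit the bijection between $\Ker\D'$ and a window of $d$ consecutive entries to manufacture an element violating the (shorter) relation imposed by $\D a=0$. Your version is merely more explicit on two points the paper leaves implicit --- the concrete choice of the violating element (a single $1$ placed under a nonzero coefficient of $\D$) and the use of Dedekind-finiteness to upgrade the left-invertible leading and trailing coefficients of $\D'$ to units so that the window map is surjective as well as injective.
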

\begin{proof}
Since $\D \neq  0$, there is $i \in \Z$ such that the condition $\D a = 0$, where $a \in R^\Z$, imposes a non-trivial relation on $d$ consecutive entries $a_i, \dots, a_{i+d-1}$ of $a$. At the same time, the map $\Ker \D' \to R^d$ given by $a \mapsto (a_i, \dots, a_{i+d-1})$ is a bijection, so we can find an element of $\Ker \D' $ whose entries do not satisfy the relation forced by $\D a = 0$.
\end{proof}

Now, let $M$ be a free finite-rank $R$-module. Then $R^\Z \otimes_R M \simeq M^\Z$, as left $R^\Z$-modules (where $M^\Z$ consists of bi-infinite sequences valued in $M$). Any difference operator $\mathcal{D} \in R^\Z[T]\subset \End_R(R^\Z)$ acts on $M^\Z = R^\Z \otimes_R M$ by $\D(a \otimes x) = (\D a) \otimes x$, where $a \in R^\Z$, $x \in M$. Put differently, for $\D = \sum_{i=0}^d \alpha_i T^i$ and $v = (v_i)_{i\in \Z} \in M^\Z$, we have
$$
(\D v)_j =  \sum_{i=0}^d \alpha_{ij} v_{i+j}.
$$

\ai{Let $e_1, \dots, e_m \in M$ be a basis.} For
$$
v = \sum_{i=1}^m a_i \otimes e_i \in R^\Z \otimes_R M = M^\Z, 
$$
define a submodule $\mathrm{Span}\, v $ of the right $R$-module $ R^\Z$ as the right span of $a_1, \dots, a_m$. In other words, choosing a basis in $M$ allows one to write $v \in M^\Z$ as an infinite matrix
\begin{equation}\label{vmatrix}\left(\begin{array}{ccc}  a_1 & \dots & a_m\end{array}\right)=\left(\begin{array}{ccc} \vdots && \vdots  \\ v_{i,1} & \dots & v_{i,m} \\ \vdots && \vdots\end{array}\right)\end{equation}
 with columns indexed by basis elements and rows indexed by $\Z$. The rows of this matrix are just the basis expansions of the entries of the sequence $v$. Then $\mathrm{Span}\, v$ is the column space of this matrix. Note that the submodule $\mathrm{Span}\, v$  does not depend on the choice of a basis in $M$, as it can be defined as the set bi-infinite sequences of the form $(\xi(v_i))_{i \in \Z}$, where $\xi \in M^*$.
\begin{prop}\label{prop:trinker}
Suppose $\D \in R^\Z[T]$ is a difference operator, and $v \in M^\Z$, where $M$ is a free finite-rank left $R$-module. Then $\D v = 0$ if and only if $\mathrm{Span}\, v \subset \Ker \,\D$.
\end{prop}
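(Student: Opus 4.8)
The plan is to reduce the statement to a coordinate computation using the matrix picture in \eqref{vmatrix}. Fix a basis $e_1, \dots, e_m$ of $M$ and write $v = \sum_{i=1}^m a_i \otimes e_i$ with $a_i \in R^\Z$, so that $\mathrm{Span}\, v$ is by definition the right span of $a_1, \dots, a_m$ inside $R^\Z$. Since $\D$ acts on $M^\Z = R^\Z \otimes_R M$ through the first tensor factor, we have $\D v = \sum_{i=1}^m (\D a_i) \otimes e_i$. Because $e_1, \dots, e_m$ is a basis, this expression vanishes if and only if each $\D a_i = 0$, i.e.\ if and only if $a_1, \dots, a_m \in \Ker\, \D$.

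The remaining step is to check that $a_1, \dots, a_m \in \Ker\, \D$ is equivalent to $\mathrm{Span}\, v \subset \Ker\, \D$. One direction is immediate: if $\mathrm{Span}\, v \subset \Ker\, \D$, then in particular each generator $a_i$ lies in $\Ker\, \D$. For the converse, I would use that $\Ker\, \D$ is a \emph{right} $R$-submodule of $R^\Z$ (which holds because $\D \in R^\Z[T] \subset \End_R(R^\Z)$ is an endomorphism of right $R$-modules, so its kernel is closed under right multiplication and addition). Hence if $a_1, \dots, a_m$ all lie in $\Ker\, \D$, so does any right $R$-linear combination $a_1 r_1 + \dots + a_m r_m$, and thus the whole right span $\mathrm{Span}\, v$ is contained in $\Ker\, \D$.

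I do not anticipate a genuine obstacle here; the statement is essentially a restatement of the fact that $\D$ acts componentwise with respect to the basis of $M$ and that kernels of right-module maps are right submodules. The only point requiring mild care is the basis-independence of $\mathrm{Span}\, v$, but this is already addressed in the text preceding the proposition (via the description of $\mathrm{Span}\, v$ as the set of sequences $(\xi(v_i))_{i \in \Z}$ for $\xi \in M^*$), so the argument does not depend on the particular basis chosen. The cleanest writeup therefore runs: pass to the matrix form \eqref{vmatrix}, observe $\D v = 0 \iff \D a_i = 0$ for all $i$ by linear independence of the $e_i$, and then invoke right-linearity of $\Ker\, \D$ to promote the condition on the generators $a_i$ to the condition on the full span.
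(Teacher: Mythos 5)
Your proposal is correct and follows essentially the same route as the paper's proof: expand $v = \sum_i a_i \otimes e_i$ in a basis, use that $\D$ acts through the first tensor factor to reduce $\D v = 0$ to $\D a_i = 0$ for all $i$, and identify this with containment of $\mathrm{Span}\, v$ in $\Ker\,\D$. Your extra remark that $\Ker\,\D$ is a right $R$-submodule (so containing the generators $a_i$ suffices) just makes explicit a step the paper leaves implicit.
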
 
\begin{proof}
Let $e_1, \dots, e_m$ be a basis in $M$. Then $v = \sum_{i=1}^m a_i \otimes e_i$ for some $a_i \in R^\Z$. The submodule $\mathrm{Span}\, v$ is spanned by the $a_i$'s. At the same time,
$$
\D( v ) =  \sum_{i=1}^m \D(a_i) \otimes e_i,
$$
which is $0$ if and only if $ \D(a_i) \ai{\,= 0}$ for all $i$, which is the same as to say $\mathrm{Span}\, v \subset \Ker \,\D$.
\end{proof}
The next proposition is one of the cornerstones of our construction and is where stable finiteness comes into play. For a sequence $v \in M^\Z$ such that $\D v = 0$, it gives a criterion for when the associated sequence of points in $\Proj(M)$ is a polygon. This criterion is a difference counterpart of the following standard result from the theory of ordinary differential equations: for functions $f_1(x), \dots, f_n(x)$ belonging to a kernel of the differential operator $$\Delta = (d/dx)^n + u_{n-1}(x)(d/dx)^{n-1} + \dots + u_0,$$ let $W(x) = \det(f_i^{(j)})$, where $i = 1, \dots, n$, $j = 0, \dots, n-1$, be their Wro\'nskian. Then $f_1(x), \dots, f_n(x)$ span the kernel of $\Delta$  $\iff$ $W(x) \neq 0$ for all $x$ $\iff$$W(x) \neq 0$ for some $x$. 

\begin{prop}\label{wronskian}
Let $R$ be a stably finite ring. Suppose $\D \in R^\Z[T]$ is a properly bounded difference operator of degree $d$, and $v \in M^\Z$, where $M$ is a free left $R$-module of rank $d$. Suppose that $\D v = 0$. Then the following conditions are equivalent:
\begin{enumerate}
\item $\mathrm{Span}\, v = \Ker \,\D$.
\item $v_k, \dots, v_{k+d-1}$ span $M$ (equivalently, form a basis in $M$) for some $k \in \Z$.
\item $v_k, \dots, v_{k+d-1}$ span $M$ (equivalently, form a basis in $M$) for all $k \in \Z$.
\end{enumerate}

\end{prop}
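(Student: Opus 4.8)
The plan is to reduce all three conditions to the single statement that a certain $d \times d$ matrix over $R$ is invertible, and to exploit stable finiteness to pass freely between the notions ``spans,'' ``is a basis,'' and ``has invertible coordinate matrix.''

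First I would fix a basis $e_1, \dots, e_d$ of $M$ and write $v = \sum_{j=1}^d a_j \otimes e_j$ with $a_j \in R^\Z$, so that the entries of $a_j$ are the coordinates $(a_j)_i = v_{i,j}$ of $v_i = \sum_j v_{i,j} e_j$. Since $\D v = 0$, Proposition \ref{prop:trinker} gives $\mathrm{Span}\, v = a_1 R + \dots + a_d R \subset \Ker \D$, so Condition 1 is precisely the statement that $a_1, \dots, a_d$ generate $\Ker \D$ as a right $R$-module. For each $k \in \Z$ I would consider the window map $\mathrm{ev}_k \colon \Ker \D \to R^d$, $a \mapsto (a_k, \dots, a_{k+d-1})$; the argument of Proposition \ref{propKernelDim} shows this is an isomorphism of right $R$-modules for every $k$ (one solves the defining recurrence in both directions, using that the leading and trailing coefficients of a properly bounded operator are left-invertible, hence invertible since $R$ is Dedekind-finite by Proposition \ref{epiisiso}). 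Under $\mathrm{ev}_k$ the generators $a_j$ are sent to the columns of the matrix $A_k := (v_{k+s,\,j})_{0 \le s \le d-1,\ 1 \le j \le d} \in \Mat_d(R)$.

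Next I would invoke stable finiteness twice. Since $\Ker \D$ is free of rank $d$ (Proposition \ref{propKernelDim}), Proposition \ref{prop:gsb} says the generating set $a_1, \dots, a_d$ equals $\Ker \D$ iff it is a basis; transporting along the isomorphism $\mathrm{ev}_k$, this holds iff the columns of $A_k$ form a basis of the free right module $R^d$, i.e.\ iff $A_k$ is invertible. Hence Condition 1 is equivalent to ``$A_k$ invertible,'' and, as Condition 1 does not involve $k$, this forces $A_k$ to be invertible for one $k$ iff for all $k$. On the other hand, ``$v_k, \dots, v_{k+d-1}$ span $M$'' means these $d$ elements generate the free rank-$d$ module $M$, which by Proposition \ref{prop:gsb} is equivalent to their being a basis, i.e.\ to the coordinate matrix $A_k$ being invertible. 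Combining, Condition 2 says $A_k$ is invertible for some $k$, Condition 3 says for all $k$, and Condition 1 is equivalent to either, so the cycle $2 \Rightarrow 1 \Rightarrow 3 \Rightarrow 2$ closes immediately.

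The routine parts are the recurrence bookkeeping and the basis/coordinate-matrix dictionary. The one point demanding care — and the place where stable finiteness is genuinely used rather than merely invoked — is the repeated conversion of a spanning set of size $d$ into a basis, together with the left/right-module bookkeeping: $M$ is a left module while $\Ker \D$ is a right module, so I must check that the coordinate matrix controlling the spanning of $M$ by its rows is literally the same matrix $A_k$ whose columns control the generation of $\Ker \D$, and that its left- and right-invertibility coincide. This reconciliation rests on $\Mat_d(R)$ being Dedekind-finite (Proposition \ref{epiisiso}), which is exactly the stable-finiteness hypothesis.
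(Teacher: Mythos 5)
Your proposal is correct and follows essentially the same route as the paper: both reduce all three conditions to the (two-sided) invertibility of the $d\times d$ window matrix $A_k$ of coordinates, with the rows controlling the spanning of $M$ (left-invertibility) and the columns controlling the generation of $\Ker\,\D$ (right-invertibility), and both use stable finiteness, via Dedekind-finiteness of $\Mat_d(R)$, to identify the two. Your extra detour through Proposition \ref{prop:gsb} and the explicit remark that the recurrence is solvable in both directions are harmless elaborations of the same argument.
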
 
\begin{proof}
Let $k \in \Z$. It suffices to show that $v_k, \dots, v_{k+d-1}$ span $M$ if and only if $\mathrm{Span}\, v = \Ker \,\D$. Take a basis $e_1, \dots, e_d$ in $M$. 
Consider a $d \times d$ matrix $$ V_k:= \left(\begin{array}{ccc}v_{k,1} & \dots & v_{k,d} \\ \vdots && \vdots \\v_{k+d-1,1} & \dots & v_{k+d-1,d}\end{array}\right) \in \Mat_d(R)$$ where $v_{i,j} \in R$ are coordinates of $v_i$ relative to the basis  $e_1, \dots, e_d$. The rows of the matrix $V_k$ are coordinate expressions for $v_k, \dots, v_{k+d-1} \in M$, so $V_k$ is left invertible if and only if $v_k, \dots, v_{k+d-1}$ span $M$. At the same time, the matrix $V_k$ consists of $d$ consecutive rows of the infinite matrix \eqref{vmatrix} whose columns span the module $\mathrm{Span}\, v$. Since taking $d$ consecutive entries gives an isomorphism $\Ker \,\D \simeq R^d$, the columns of the matrix $V_k$ span $R^d$ if and only if the columns of the matrix  \eqref{vmatrix} span $\Ker \,\D$, which is equivalent to saying that $\mathrm{Span}\, v = \Ker \,\D$. To sum up, $v_k, \dots, v_{k+d-1}$ span $M$ if and only if the matrix $V_k$ is left-invertible, while $\mathrm{Span}\, v = \Ker \,\D$ if and only  if $V_k$ is right-invertible. Since $R$ is stably finite, those conditions are equivalent. 
\end{proof}

\subsection{The correspondence between difference operators and polygons}\label{sec:cor}
We will now establish a correspondence between polygons and difference operators. 

Note that the ring $R^\Z[T]$ of difference operators over $R$ contains a subring $R^\Z$ of bi-infinite sequences valued in $R$: such sequences are just degree zero difference operators. In particular, the group of units $(R^\Z)^\times$ in  $R^\Z$ (which consists of bi-infinite sequences of units in $R$) acts on $R^\Z[T]$ in two different ways: by left multiplication, and by right multiplication.

\begin{theorem} \label{DOcorresPOLY} 
Let $R$ be a stably finite ring. Then, there is a natural one-to-one correspondence between the following sets:
\begin{enumerate}
    \item Projective equivalence classes of polygons in the left  projective space $\Proj^{d-1}(R)$.
    \item Properly bounded left difference operators of degree $d$ with coefficients in $R$, up to multiplication on the left and right by bi-infinite sequences of units in $R$.
\end{enumerate}
\end{theorem}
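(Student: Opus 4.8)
The plan is to construct a map in each direction and verify they are mutually inverse, with Proposition~\ref{wronskian} serving as the linchpin that makes both directions well-defined over a stably finite ring. Starting from a properly bounded difference operator $\D$ of degree $d$, Proposition~\ref{propKernelDim} tells me that $K := \Ker\,\D$ is a free right $R$-module of rank $d$, so its dual $M := K^*$ is a free \emph{left} $R$-module of rank $d$, and $\Proj(M) = \Proj(K^*)$ is a left projective space of dimension $d-1$. To produce a polygon, I consider the evaluation-type sequence: for each $i \in \Z$, the $i$-th coordinate functional $\mathrm{ev}_i \colon K \to R$, $a \mapsto a_i$, is an element of $M = K^*$, and I define $P_i := R\,\mathrm{ev}_i \in \Proj(M)$. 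The spanning condition $P_i + \dots + P_{i+d-1} = M$ that makes $(P_i)$ a polygon is exactly the statement that, for the tautological $K$-valued (hence $M^*$-valued) sequence, $d$ consecutive evaluation functionals form a basis of $K^*$; dually this is the left-invertibility/right-invertibility equivalence from Proposition~\ref{wronskian}, applied to the canonical embedding of $K$ into $M^\Z$ via $a \mapsto (\mathrm{ev}_i)$. I must check that $\mathrm{ev}_i$ is unimodular (so that $P_i$ is genuinely a point of $\Proj(M)$, via Corollary~\ref{cor:qbu}), which follows because the proper boundedness of $\D$ forces the coordinate projections to be surjective.

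Conversely, starting from a polygon $(P_i)$ in $\Proj^{d-1}(R) = \Proj(M)$ with $M$ free of rank $d$, I choose unimodular lifts $v_i \in M^*$ spanning $P_i$ (so $v_i$ is determined up to a unit), and assemble them into a sequence. The spanning condition guarantees that every block of $d$ consecutive $v_i$ is a basis of $M^*$; hence $v_{i+d}$ is a unique right-$R$-linear combination of $v_i, \dots, v_{i+d-1}$, and clearing this recurrence produces a degree-$d$ difference operator $\D$ annihilating the sequence $(v_i)$, with $\Ker\,\D = \mathrm{Span}(v_i)$ by Proposition~\ref{prop:trinker} and Proposition~\ref{wronskian}. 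Proper boundedness of $\D$ comes from the fact that both the leading and trailing $d$-blocks are bases, so the extreme coefficients $\alpha_0, \alpha_d$ are (left-)invertible. I then verify that rescaling each $v_i$ by a unit $u_i$ conjugates $\D$ by the diagonal sequence $(u_i)$ and composes it with a left multiplication, i.e.\ changes $\D$ precisely by left and right multiplication by bi-infinite sequences of units, matching the equivalence in item~(2). Finally, applying a projective transformation to the polygon corresponds to acting on the $v_i$ by a fixed automorphism of $M^*$, which I show leaves $\D$ unchanged, so the correspondence descends to projective equivalence classes.

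The two constructions are inverse to each other essentially by design: recovering the evaluation functionals from the recurrence coefficients and vice versa, with Proposition~\ref{wronskian} ensuring the kernel is faithfully recorded. The main obstacle, and the point where I expect to spend real effort, is the bookkeeping of left versus right module structures: the kernel $\Ker\,\D$ is naturally a \emph{right} module while polygons live in a \emph{left} projective space, so the correspondence is forced to pass through the dual, and I must track carefully that unimodular elements, bases, and the invertibility dichotomy all transport correctly under dualization (using Corollary~\ref{cor:dual}). Relatedly, I need to confirm that the ambiguity in choosing unimodular lifts and the ambiguity of left/right unit multiplication on operators correspond \emph{exactly}, with no residual slack; this is where non-commutativity could introduce subtle discrepancies, and where the proof departs most sharply from the field case of \cite{izopentagramandrefactorization}. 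Once the dualization dictionary is set up cleanly, the verification that the two maps compose to the identity in both orders should be a direct, if careful, computation.
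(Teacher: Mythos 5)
Your proposal follows essentially the same route as the paper: the forward map via the evaluation functionals $\eps_{\D,i} \in (\Ker\D)^*$, the backward map via the recurrence expressing $v_{i+d}$ in the basis $v_i,\dots,v_{i+d-1}$, Proposition~\ref{wronskian} as the key tool making both directions work, and the left--right unit action accounting exactly for the choices of lifts. The only slips are cosmetic (the lifts $v_i$ live in $M$, not $M^*$, and the ``no residual slack'' step is carried out in the paper via Lemma~\ref{lemma:inj}, which shows two properly bounded operators of the same degree with equal kernels differ by a left unit sequence).
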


By the set of projective equivalence classes of polygons in the left  projective space $\Proj^{d-1}(R)$, we mean the set of polygons in projectivizations of \emph{all} rank $d$ free left $R$-modules, considered up to projective equivalence. 

Similarly,  there is a natural one-to-one correspondence between the set of projective equivalence classes of polygons in the \emph{right}  projective space $\Proj^{d-1}(R)$, and the set of
   properly bounded \emph{right} difference operators of degree $d$ with coefficients in $R$, up to multiplication on the left and right by bi-infinite sequences of units in $R$. 
   
   In the case when the base ring $R$ is a field, the correspondence between polygons and difference operators is well-known. In particular, the case $R = \R$ follows from \cite[Proposition 3.3]{izopentagramandrefactorization}. 
   
   In future sections, we are most interested in the case of $d=3$, i.e.,  a correspondence between polygons in the projective plane $\Proj^2(R)$ and degree $3$ difference operators. Still, we give a proof in the general case because it works in the same way for any $d$. Also, we expect $d > 3$ cases of Theorem \ref{DOcorresPOLY} to be useful for the study of pentagram maps in higher-dimensional projective spaces, cf. \cite{izopentagramandrefactorization}.

In order to prove Theorem \ref{DOcorresPOLY}, we will first build a map from the space of all degree $d$ properly bounded left difference operators with coefficients in a stably finite ring $R$ to the space of polygons, up to projective equivalence, in the left projective space $\Proj^{d-1}(R)$. Then, we will consider difference operators up to the action \ai{of} bi-infinite sequences of units on left and right and show that our original map factors through this quotient. Finally, we will prove that the map on the quotient is a bijection.

Let us denote the space of all degree $d$ properly bounded difference operators over $R$ as $R^\Z[T]_d^{\mathrm{pb}}$ and the space of polygons in $\Proj^{d-1}(R)$ up to projective equivalence as $\mathfrak P(\Proj^{d-1}(R))$. Our first aim to build a map $\bar \varphi :R^\Z[T]_d^{\mathrm{pb}} \to \mathfrak P(\Proj^{d-1}(R))$.

Let $\D \in R^\Z[T]_d^{\mathrm{pb}}$. Then, by Proposition \ref{propKernelDim}, we know that $\K \subset R^\Z$ is a rank $d$ free right $R$-module. Thus,  $(\Ker{\D})^* = \Hom_R(\Ker{\D}, R)$ is a free rank $d$ left $R$-module. For any $i \in \Z$, let $\eps_{\D, i} \in (\Ker{\D})^*$ be defined by $\eps_{\D, i}(a) := a_i$ for any $a \in \Ker{\D}$. Then $\eps_{\D} \in ((\Ker{\D})^*)^\Z = R^\Z \otimes_R (\Ker{\D})^*$.

\begin{prop}\label{prop:spaneps}
For stably finite ring $R$, we have $\mathrm{Span}\, \eps_{\D} = \Ker \D$. In particular, $\D \eps_{\D} = 0$ (recall that a difference operator $\D = \sum_{i=0}^d \alpha_i T^i$ acts on $\eps_\D$ by
$
(\D \eps_\D)_j =  \sum_{i=0}^d \alpha_{ij} \eps_{\D,i+j}
$).
\end{prop}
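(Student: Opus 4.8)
The plan is to read off the statement from Proposition \ref{wronskian}, whose hypotheses are tailored to exactly this situation once we set $M := (\Ker \D)^*$ and $v := \eps_\D$. By Proposition \ref{propKernelDim}, $\Ker \D$ is a free right $R$-module of rank $d$, so $M = (\Ker \D)^*$ is a free left $R$-module of rank $d$, and $\eps_\D \in M^\Z$ as required. The first step is to verify the relation $\D \eps_\D = 0$ directly, before it is reused. For any $a \in \Ker \D$ and any index $j$, evaluating the functional $(\D\eps_\D)_j \in (\Ker\D)^*$ at $a$ gives
$$
(\D \eps_\D)_j(a) = \sum_{i=0}^d \alpha_{ij}\, \eps_{\D,i+j}(a) = \sum_{i=0}^d \alpha_{ij}\, a_{i+j} = (\D a)_j = 0,
$$
where we used $\eps_{\D,i+j}(a) = a_{i+j}$ and the fact that the $R$-action on $(\Ker\D)^*$ is by left multiplication, compatible with evaluation. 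Hence $(\D\eps_\D)_j = 0$ for all $j$, i.e.\ $\D\eps_\D = 0$, so the standing hypothesis of Proposition \ref{wronskian} is met.

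Next I would check that condition (2) of Proposition \ref{wronskian} holds, namely that $\eps_{\D,k},\ldots,\eps_{\D,k+d-1}$ is a basis of $M$ for some (hence, by that proposition, every) $k$. This is where the explicit description of $\Ker\D$ from Proposition \ref{propKernelDim} enters: the evaluation map $\pi_k\colon \Ker\D \to R^d$, $a\mapsto(a_k,\ldots,a_{k+d-1})$, is an isomorphism of right $R$-modules. Dualizing yields an isomorphism $\pi_k^*\colon (R^d)^*\to (\Ker\D)^* = M$ of left $R$-modules, and one computes $\pi_k^*(e^l) = e^l\circ\pi_k = \eps_{\D,\,k+l-1}$, where $e^1,\ldots,e^d$ is the standard dual basis of $(R^d)^*$. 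Thus $\pi_k^*$ carries a basis to $\eps_{\D,k},\ldots,\eps_{\D,k+d-1}$, which are therefore a basis of $M$.

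The conclusion is then immediate: the implication (2)$\Rightarrow$(1) of Proposition \ref{wronskian} gives $\mathrm{Span}\,\eps_\D = \Ker\D$, and this is precisely the step that uses stable finiteness of $R$ (through the left–right invertibility equivalence for matrices over $R$). The ``in particular'' clause follows at once, since $\mathrm{Span}\,\eps_\D = \Ker\D$ forces $\mathrm{Span}\,\eps_\D \subset \Ker\D$ and hence $\D\eps_\D = 0$ by Proposition \ref{prop:trinker} — consistent with the direct verification above. I expect the only genuinely delicate point to be the bookkeeping of left versus right module structures: $\Ker\D$ is a right module, its dual $M$ a left module, $\mathrm{Span}$ is defined through right-linear combinations, and dualization interchanges the two sides; getting the identity $\pi_k^*(e^l) = \eps_{\D,\,k+l-1}$ and the evaluation pairing exactly right is where care is needed, the remainder being routine. (One could alternatively argue by reflexivity of $M$, combining the characterization $\mathrm{Span}\,\eps_\D = \{(\xi(\eps_{\D,i}))_{i\in\Z} : \xi\in M^*\}$ with the canonical isomorphism $((\Ker\D)^*)^*\simeq\Ker\D$; the Wronskian route is preferable here, however, as it makes the role of stable finiteness explicit.)
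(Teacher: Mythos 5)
Your proof is correct, but it takes a different route from the paper's. The paper's argument is a two-line appeal to reflexivity: by the basis-independent description of $\mathrm{Span}$, the module $\mathrm{Span}\,\eps_\D$ consists of the sequences $(a(\eps_{\D,i}))_{i\in\Z}$ for $a\in((\Ker\D)^*)^*=\Ker\D$, and since $a(\eps_{\D,i})=\eps_{\D,i}(a)=a_i$ this set is exactly $\Ker\D$; the identity $\D\eps_\D=0$ is then deduced \emph{afterwards} from Proposition \ref{prop:trinker}. You instead verify $\D\eps_\D=0$ by hand, show that $\eps_{\D,k},\dots,\eps_{\D,k+d-1}$ is a basis of $(\Ker\D)^*$ by dualizing the evaluation isomorphism $\pi_k\colon\Ker\D\to R^d$, and invoke the implication $(2)\Rightarrow(1)$ of Proposition \ref{wronskian} — essentially the route you yourself flag in your closing parenthesis as the alternative, with the roles of the two arguments swapped relative to the paper. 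Both are sound; your version additionally yields the basis statement of Corollary \ref{alphabasis} as a byproduct rather than as a separate application of Proposition \ref{wronskian}. One small caveat on your final remark: the reflexivity argument shows that stable finiteness is not actually needed for this particular proposition (the double-dual isomorphism holds for finite-rank free modules over any ring), so routing through Proposition \ref{wronskian} does not so much ``make the role of stable finiteness explicit'' as introduce a dependence on it that the direct argument avoids.
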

\begin{proof}
The module $\mathrm{Span}\, \eps_{\D}$ consists of sequences of the form $a(\eps_{\D, i})$, where $a \in ((\Ker{\D})^*)^* = \Ker \D$. But $a(\eps_{\D, i}) = \eps_{\D, i}(a) = a_i$, so $\mathrm{Span}\, \eps_{\D}$ is precisely $\Ker \D$. In particular, by Proposition~\ref{prop:trinker}, we have $\D \eps_{\D} = 0$.\qedhere

\end{proof}

\begin{cor} \label{alphabasis}
For stably finite ring $R$ and for any $i \in \Z$, the list $\eps_{\D, i}, \dots, \eps_{\D,i+d-1}$ is a basis of $(\K)^*$.
\end{cor}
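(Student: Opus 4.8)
The plan is to deduce this corollary directly from Proposition \ref{wronskian}, applied to the free rank $d$ left module $M := (\K)^*$ and the sequence $v := \eps_{\D} \in M^\Z$. First I would check that the hypotheses of Proposition \ref{wronskian} are in place: the ring $R$ is stably finite by assumption, the operator $\D$ is properly bounded of degree $d$, the module $M = (\K)^*$ is free of rank $d$ (since $\K$ is free of rank $d$ by Proposition \ref{propKernelDim}, and the dual of a free finite-rank module is again free of the same rank), and finally $\D \eps_{\D} = 0$ by Proposition \ref{prop:spaneps}. Thus $\eps_{\D}$ is a legitimate input to Proposition \ref{wronskian}.

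Next I would invoke the conclusion of Proposition \ref{prop:spaneps} once more, this time in the form $\mathrm{Span}\, \eps_{\D} = \K$. This is exactly condition (1) in the list of equivalent statements of Proposition \ref{wronskian}. By the equivalence $(1) \Leftrightarrow (3)$ there, it follows that $\eps_{\D, k}, \dots, \eps_{\D, k+d-1}$ span $M = (\K)^*$ for \emph{every} $k \in \Z$, and in particular for $k = i$. Since these are $d$ generators of a free module of rank $d$ over a stably finite ring, Proposition \ref{prop:gsb} promotes the spanning property to the statement that the list is in fact a basis; indeed, condition (3) of Proposition \ref{wronskian} already records this parenthetically (``equivalently, form a basis in $M$''), so no separate argument is needed.

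I do not expect any genuine obstacle here, as all of the technical difficulty has already been absorbed into Propositions \ref{wronskian} and \ref{prop:spaneps}, where stable finiteness was used in an essential way (through the equivalence of left- and right-invertibility of the Wronskian-type matrix). The only point requiring a moment's care is the bookkeeping of which object plays the role of ``$M$'': the polygon associated to $\D$ lives in the projectivization of $(\K)^*$, so it is this dual module, rather than $\K$ itself, to which Proposition \ref{wronskian} must be applied, with $\eps_{\D}$ as the relevant sequence. Once this identification is made, the corollary is immediate.
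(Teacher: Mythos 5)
Your proposal is correct and follows exactly the paper's intended argument: the paper's own proof is simply ``Apply Proposition \ref{wronskian},'' and you have correctly identified that the application is to $M = (\K)^*$ with $v = \eps_{\D}$, with condition (1) of that proposition supplied by Proposition \ref{prop:spaneps}. The extra care you take in verifying the hypotheses and identifying which module plays the role of $M$ is exactly the right bookkeeping.
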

\begin{proof}
Apply Proposition \ref{wronskian}. 
\end{proof}

It follows that the sequence of points $P_i = R \eps_{\D, i}$ is a polygon in $\Proj((\K)^*)$. Thus, every properly bounded difference operator gives rise to a polygon in the projectivization of the dual to its kernel. 
This gives a map $\bar \varphi: R^\Z[T]_d^{\mathrm{pb}} \to \mathfrak P(\Proj^{d-1}(R))$. 
Denote the collection of properly bounded, degree $d$ left difference operators up to multiplication on the left and right by bi-infinite sequences of units as ${R^\Z[T]_d^{\mathrm{pb}}}/((R^\Z)^{\times} \times (R^\Z)^{\times})$.

\begin{prop}
For stably finite ring $R$, the map $\bar \varphi: R^\Z[T]_d^{\mathrm{pb}} \to \mathfrak P(\Proj^{d-1}(R))$ factors through a map $ \varphi: {R^\Z[T]_d^{\mathrm{pb}}}/((R^\Z)^{\times} \times (R^\Z)^{\times}) \to \mathfrak P(\Proj^{d-1}(R))$.
\end{prop}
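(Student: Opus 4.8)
The plan is to show that $\bar\varphi$ is constant on each orbit of the $(R^\Z)^\times \times (R^\Z)^\times$ action, which is precisely the assertion that it descends to the quotient. Since the action is generated by left multiplications $\D \mapsto g\D$ and right multiplications $\D \mapsto \D h$ by sequences of units $g,h \in (R^\Z)^\times$, it suffices to treat these two types of moves separately and check that each preserves the projective equivalence class of the associated polygon.

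First I would dispose of left multiplication. Because $g$ is an invertible multiplication operator, one has $\Ker(g\D) = \Ker\D$, and the functionals $\eps_{g\D,i}$ and $\eps_{\D,i}$ coincide, since both are given by $a \mapsto a_i$ on the common kernel. Hence $g\D$ and $\D$ yield literally the same polygon, so $\bar\varphi(g\D) = \bar\varphi(\D)$. Along the way I would record the routine check that $g\D$ is again properly bounded of degree $d$: its extreme coefficients $g\alpha_0$ and $g\alpha_d$ are products of a sequence of units with a sequence of left-invertible elements, hence left-invertible.

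The substantive case is right multiplication. Here $a \in \Ker(\D h)$ iff $\D(ha) = 0$ iff $ha \in \Ker\D$, so multiplication by $h$ is a right $R$-module isomorphism $\Ker(\D h) \to \Ker\D$, $a \mapsto ha$, with inverse $b \mapsto h^{-1}b$. Its adjoint, i.e. precomposition with multiplication by $h$, is then a left $R$-module isomorphism $(\Ker\D)^* \to (\Ker(\D h))^*$, i.e. a projective transformation between the two projective spaces. The key computation is that this adjoint sends $\eps_{\D,i}$ to the functional given on $a \in \Ker(\D h)$ by $\eps_{\D,i}(ha) = (ha)_i = h_i a_i = (h_i\eps_{\D h,i})(a)$; that is, the adjoint of $\eps_{\D,i}$ equals $h_i\,\eps_{\D h,i}$, with the scalar $h_i$ appearing on the left in accordance with the left module structure on the dual. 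Since each $h_i$ is a unit, $R\cdot(h_i\eps_{\D h,i}) = R\eps_{\D h,i}$, so this projective transformation carries the point $R\eps_{\D,i}$ to $R\eps_{\D h,i}$ for every $i$, i.e. it takes the polygon associated to $\D$ onto the polygon associated to $\D h$. Thus the two polygons are projectively equivalent and $\bar\varphi(\D h) = \bar\varphi(\D)$; as before, one first verifies that $\D h$ is properly bounded of degree $d$.

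Combining the two cases gives $\bar\varphi(g\D h) = \bar\varphi(\D)$ for all $g,h \in (R^\Z)^\times$, so $\bar\varphi$ factors through a well-defined map $\varphi$ on the quotient. I do not expect a genuine obstacle here: no general-position or stable-finiteness input is needed for this factorization step. The only care required is bookkeeping with the left/right module conventions — in particular, confirming that the adjoint is a left-module isomorphism for the left-module structure on duals (so that it really is a projective transformation in the sense of the paper), and that the unit $h_i$ in $h_i\eps_{\D h,i}$ does not change the underlying rank-one subspace.
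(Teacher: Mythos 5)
Your argument is correct and follows essentially the same route as the paper: left multiplication is dismissed because it does not change the kernel, and right multiplication by $h$ is handled by dualizing the right-module isomorphism $a \mapsto ha$ between kernels and computing that the resulting projective transformation sends $\eps_{\D,i}$ to the unit multiple $h_i\,\eps_{\D h,i}$. Your bookkeeping of the direction of the adjoint is, if anything, slightly more careful than the paper's.
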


\begin{proof}
We need to check that if we take a properly bounded left difference operator $\D$, and multiply it on the left or right by a bi-infinite sequence of units in $R$, then the image of the resulting operator under $\bar \varphi$ is in the same projective equivalence class as the image of $\D$. 
Consider $\D \in R^\Z[T]_d^{\mathrm{pb}}$ and a bi-infinite sequence $c \in (R^\Z)^{\times}$ of units in $R$. Note that multiplication on the left $\D \mapsto c\D$ does not change the kernel and hence the image under $\bar \varphi$. So, it suffices to establish projective equivalence of the polygons $\bar \varphi(\D c)$ and $\bar \varphi(\D)$. Clearly, left multiplication by $c$ gives a right $R$-module isomorphism $\lambda_c \colon \Ker \D c \to  \Ker \D $. 
Consider the associated left module isomorphism $\lambda_c^* \colon (\Ker \D c)^*\to (\Ker \D)^*$. As before, let $\eps_{\D c, i} \in (\Ker \D c)^*$ be defined by $\eps_{\D c, i}(a) = a_i$, and let $a \in \Ker \D$. Then
$$
\lambda_c^*(\eps_{\D c, i})(a) = \eps_{\D c, i}(\lambda_c(a)) = \eps_{\D c, i}(ca) = c_i a_i = c_i \eps_{\D, i}(a).
$$
So,  $\lambda_c^*$ maps $\eps_{\D c, i} \in (\Ker \D c)^*$ to a scalar multiple of $\eps_{\D, i} \in (\Ker \D)^*$. Therefore, the associated projective transformation $\Proj((\Ker \D c)^*) \to \Proj((\Ker \D)^*)$ maps   $R\eps_{\D c, i}$ to $R\eps_{\D, i}$, as needed.\qedhere

\end{proof}

To establish injectivity, we start with the following.

 \begin{lemma}\label{lemma:inj}
 Let $R$ be a stably finite ring. Suppose that $\D, \D' \in R^\Z[T]_d^{\mathrm{pb}}$ are properly bounded (left) difference operators of degree $d$. Assume also that $\Ker \D = \Ker \D'$. Then $\D' = c' \D$ for some $c' \in (R^\Z)^{\times}$.  
 \end{lemma}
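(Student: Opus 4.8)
The plan is to produce the required sequence of units $c'$ by matching leading coefficients, and then to show that the difference operator $\D' - c'\D$ must vanish because its kernel is too large relative to its degree. Write $\D = \sum_{i=0}^d \alpha_i T^i$ and $\D' = \sum_{i=0}^d \alpha_i' T^i$ with coefficients $\alpha_i, \alpha_i' \in R^\Z$. Since both operators are properly bounded of degree $d$, their leading coefficients $\alpha_d$ and $\alpha_d'$ are sequences of left invertible elements of $R$. Here I would invoke that stably finite rings are Dedekind-finite (Proposition \ref{epiisiso}), so that a left invertible element is in fact a two-sided unit; hence $\alpha_d, \alpha_d' \in (R^\Z)^\times$.

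Next I would set $c' := \alpha_d' \alpha_d^{-1}$, the pointwise product, which again lies in $(R^\Z)^\times$. The point of this choice is that the leading coefficient of $c'\D$ is exactly $c'\alpha_d = \alpha_d'$, matching that of $\D'$; here the order matters because $R$ may be noncommutative. Consequently the operator $E := \D' - c'\D$ has degree at most $d-1$.

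The key step is to show $E = 0$. First I would verify the inclusion $\Ker \D \subseteq \Ker E$: for $a \in \Ker \D$ we have $\D a = 0$, hence $c'\D a = 0$ (left multiplication by the sequence $c'$), and also $\D' a = 0$ since $\Ker \D = \Ker \D'$; therefore $E a = \D' a - c'\D a = 0$. Now suppose, for contradiction, that $E \neq 0$. Since $E$ has degree strictly less than $d$ while $\D$ is properly bounded of degree $d$, Proposition \ref{prop:kernpb} — applied with $E$ in the role of the low-degree operator and $\D$ in the role of the properly bounded one — yields an element $a \in \Ker \D$ with $a \notin \Ker E$, contradicting the inclusion just established. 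Hence $E = 0$, i.e.\ $\D' = c'\D$ with $c' \in (R^\Z)^\times$, as claimed.

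I expect the only genuinely nontrivial ingredients to be the two appeals to stable finiteness: once through Dedekind-finiteness, to guarantee that the one-sided invertibility built into the notion of \emph{properly bounded} actually produces honest units (so that $c'$ is invertible), and once through Proposition \ref{prop:kernpb}, whose own proof rests on stable finiteness and which encodes the principle that a nonzero low-degree operator cannot annihilate the full-rank kernel of a properly bounded operator. The remaining steps — matching leading coefficients and checking the kernel inclusion — are purely formal manipulations with difference operators.
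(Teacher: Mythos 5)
Your proof is correct and follows essentially the same route as the paper's: choose $c'$ so that $\D' - c'\D$ drops to degree at most $d-1$, observe that its kernel contains $\Ker\D$, and conclude it vanishes by Proposition \ref{prop:kernpb}; you simply spell out the leading-coefficient matching (via Dedekind-finiteness of stably finite rings) that the paper leaves implicit. One small quibble with your closing commentary: the proof of Proposition \ref{prop:kernpb} does not actually use stable finiteness (only the recurrence structure of properly bounded operators), so the sole genuine appeal to stable finiteness here is through Dedekind-finiteness when upgrading the left-invertible leading coefficients to units.
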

 \begin{proof}
Since $\D' $ and $\D$ are properly bounded of the same degree $d$, there is a sequence of units $c' \in (R^\Z)^{\times}$ such that
$
\Rem := \D' - c' \D
$
has degree at most $d-1$. Note that since $\Ker \D' = \Ker \D$, we have $\Ker \Rem \supset  \Ker \D$. So, by Proposition \ref{prop:kernpb}, it follows that $\Rem = 0$, meaning that $\D' = c' \D$, as desired. \qedhere
 \end{proof}

\begin{prop}\label{prop:inj} For stably finite ring $R$, the map
 $ \varphi: {R^\Z[T]_d^{\mathrm{pb}}}/((R^\Z)^{\times} \times (R^\Z)^{\times}) \to \mathfrak P(\Proj^{d-1}(R))$
 is injective.
 \end{prop}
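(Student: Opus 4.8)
The plan is to prove the contrapositive-free statement directly: if $\D,\D'\in R^\Z[T]_d^{\mathrm{pb}}$ give projectively equivalent polygons, I will produce sequences of units $c',d\in(R^\Z)^\times$ with $\D'=c'\D d^{-1}$, so that $\D$ and $\D'$ represent the same class in ${R^\Z[T]_d^{\mathrm{pb}}}/((R^\Z)^\times\times(R^\Z)^\times)$. The overall strategy is to use the projective equivalence to reduce, after a single right multiplication by units, to the case of \emph{equal} kernels in $R^\Z$, and then to invoke Lemma \ref{lemma:inj}.

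First I would unpack the hypothesis $\varphi([\D])=\varphi([\D'])$. By Definition \ref{defProjEq}, projective equivalence of the polygons $R\eps_{\D,i}\subset\Proj((\Ker \D)^*)$ and $R\eps_{\D',i}\subset\Proj((\Ker \D')^*)$ means there is a module isomorphism $\Psi\colon(\Ker \D)^*\to(\Ker \D')^*$ whose induced projective transformation sends $R\eps_{\D,i}$ to $R\eps_{\D',i}$ for every $i$. Since $\eps_{\D',i}$ is a basis vector of $(\Ker \D')^*$ (Corollary \ref{alphabasis}) and hence unimodular, while $\Psi(\eps_{\D,i})$ generates the same rank-one subspace $R\eps_{\D',i}$, the proposition characterizing unimodular generators of a common subspace gives units $d_i\in R^\times$ with $\Psi(\eps_{\D,i})=d_i\eps_{\D',i}$. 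Collecting these yields a sequence of units $d:=(d_i)_{i\in\Z}\in(R^\Z)^\times$.

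Next I would invoke the computation already carried out in the proof that $\bar\varphi$ factors through the quotient: applied to $\D'$ with $c=d$, it furnishes an isomorphism $\nu\colon(\Ker \D'd)^*\to(\Ker \D')^*$ with $\nu(\eps_{\D'd,i})=d_i\eps_{\D',i}=\Psi(\eps_{\D,i})$. Hence $\Phi:=\nu^{-1}\circ\Psi\colon(\Ker \D)^*\to(\Ker \D'd)^*$ is an isomorphism carrying $\eps_{\D,i}$ to $\eps_{\D'd,i}$ exactly, with no scalar factors. The core step is to convert this equality of distinguished bases into an equality of kernels as submodules of $R^\Z$. Taking the adjoint $\Phi^*\colon\Ker \D'd\to\Ker \D$ — here I use that $\Ker \D$ and $\Ker \D'd$ are free of rank $d$ (Proposition \ref{propKernelDim}), so the canonical maps to their double duals are isomorphisms — and evaluating through the chain $\eps_{\D,i}(\Phi^*(b))=\eps_{\D'd,i}(b)=b_i$, one finds $\Phi^*(b)=b$ as a sequence for every $b\in\Ker \D'd$. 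Since $\Phi^*$ acts as the identity on entries yet is a bijection onto $\Ker \D$, we get $\Ker \D'd=\Phi^*(\Ker \D'd)=\Ker \D$ as subsets of $R^\Z$.

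Finally I would note that $\D'd$ is again properly bounded of degree $d$, since right multiplication by a sequence of units only rescales the leading and trailing coefficients by units and thus preserves left invertibility. With $\Ker \D=\Ker \D'd$, Lemma \ref{lemma:inj} gives $\D'd=c'\D$ for some $c'\in(R^\Z)^\times$, i.e. $\D'=c'\D d^{-1}$, which is the desired conclusion. I expect the main obstacle to be the middle paragraph: correctly threading the several dualizations — from the projective map $\Psi$ on dual modules, through the adjoint $\Phi^*$, and back to the kernels sitting inside $R^\Z$ — and recognizing that matching the bases $\{\eps_{\D,i}\}$ and $\{\eps_{\D'd,i}\}$ \emph{exactly} (rather than up to scalars) is precisely what upgrades an abstract isomorphism of kernels to their literal coincidence as submodules, which is what Lemma \ref{lemma:inj} requires.
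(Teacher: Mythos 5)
Your argument is correct and follows essentially the same route as the paper: both proofs dualize the projective equivalence to see that the induced map on kernels is left multiplication by a sequence of units, reduce to literally equal kernels inside $R^\Z$ after a right twist by units, and finish with Lemma \ref{lemma:inj} (the paper just does this in a single dualization, computing $\psi^*(a)_i = c_i a_i$ directly). The one slip is a unit-inverse bookkeeping issue in your middle step: the isomorphism dual to left multiplication by $d$ sends $\eps_{\D',i}\mapsto d_i\,\eps_{\D'd,i}$, so your composite $\nu^{-1}\circ\Psi$ sends $\eps_{\D,i}\mapsto d_i^2\,\eps_{\D'd,i}$ rather than to $\eps_{\D'd,i}$; twisting by $d^{-1}$ instead (i.e.\ working with $\D'd^{-1}$) makes the factors cancel, after which the rest of your argument goes through verbatim and still lands in the same $(R^\Z)^{\times}\times(R^\Z)^{\times}$ orbit.
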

\begin{proof}
Consider two properly bounded left difference operators $\D, \D' \in R^\Z[T]_d^{\mathrm{pb}}$ such that $\overline{\varphi}(\D) = \overline{\varphi}(\D')$. We need to show that $\D$ and $\D'$ belong to the same class in the quotient ${R^\Z[T]_d^{\mathrm{pb}}}/((R^\Z)^{\times} \times (R^\Z)^{\times})$. 

The equality $\overline{\varphi}(\D) = \overline{\varphi}(\D')$ means that the polygons associated with $\D$, $\D'$ are projectively equivalent. That is, there is a left module isomorphism $\psi \colon (\Ker \D')^*\to (\Ker \D)^*$ such that $\psi(\eps_{\D', i}) = c_i \eps_{\D, i}$ for some sequence of units $c_i \in R$. Consider the dual $\psi^* \colon \Ker \D \to \Ker \D'$. For any $a \in \Ker \D$, we have
$$
\psi^*(a)_i = \eps_{\D', i}(\psi^*(a)) = \psi(\eps_{\D', i})(a) = c_i \eps_{\D, i}(a) = c_i a_i.
$$
So, $\psi^*$ is left multiplication by $c$, and $\Ker \D' = c \Ker \D$. Therefore, $\Ker \D' c = \Ker \D$, and, by \ai{Lemma \ref{lemma:inj}}, we have $\D'c = c' \D$, as needed. \qedhere

\end{proof}


Now, to complete the proof of Theorem \ref{DOcorresPOLY}, it suffices to show that the mapping  $\bar \varphi :R^\Z[T]_d^{\mathrm{pb}} \to \mathfrak P(\Proj^{d-1}(R))$ is surjective. In other words, given a polygon $(P_i)$ in $\Proj(M)$, where $M$ is a rank $d$ free left $R$-module, we need to find a difference operator $\D \in R^\Z[T]_d^{\mathrm{pb}}$ such that the associated polygon in $\Proj((\Ker \D)^*)$ is projectively equivalent to $(P_i)$.

\begin{prop}\label{prop:ri}
Let $R$ be stably finite. Suppose $v_i \in M$ are such that the sequence $P_i = Rv_i$ is a polygon. Then there exists a properly bounded degree $d$ difference operator $\D\ai{_v} \in R^\Z[T]_d^{\mathrm{pb}} $ such that $\D\ai{_v} v = 0$.
\end{prop}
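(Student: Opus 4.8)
The plan is to construct $\D_v$ directly from the linear recurrence expressing each vertex in terms of the preceding $d$ vertices. Since $(v_i)$ is a polygon, for every $j \in \Z$ the tuple $v_j, \dots, v_{j+d-1}$ is a basis of the left module $M$ (this is the restatement of the polygon condition via Proposition \ref{prop:gsb}). Hence there are unique left coefficients $c_{0j}, \dots, c_{(d-1)j} \in R$ with $v_{j+d} = \sum_{i=0}^{d-1} c_{ij}\, v_{i+j}$. I would then set $\alpha_{dj} := 1$ and $\alpha_{ij} := -c_{ij}$ for $0 \leq i \leq d-1$, and define $\D_v := \sum_{i=0}^{d} \alpha_i T^i$. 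By construction $(\D_v v)_j = v_{j+d} - \sum_{i=0}^{d-1} c_{ij} v_{i+j} = 0$ for all $j$, so $\D_v v = 0$, and $\D_v$ has degree exactly $d$ because $\alpha_d = 1 \neq 0$.

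It then remains to check that $\D_v$ is properly bounded, i.e.\ that $\alpha_0$ and $\alpha_d$ are sequences of left invertible elements of $R$. The leading sequence is immediate, as $\alpha_{dj} = 1$ is a unit for every $j$. The entire content of the statement is therefore the left invertibility of each $\alpha_{0j} = -c_{0j}$, and this is where I expect the only real difficulty to lie. The crucial input not yet used is that $v_{j+1}, \dots, v_{j+d}$ is \emph{also} a basis of $M$, namely the polygon condition at the shifted index $j+1$.

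To exploit this I would compare the two bases through the middle submodule $N := Rv_{j+1} + \dots + Rv_{j+d-1}$, which is free of rank $d-1$. From the basis $v_j, \dots, v_{j+d-1}$ we get $M = Rv_j \oplus N$, and from the basis $v_{j+1}, \dots, v_{j+d}$ we get $M = N \oplus Rv_{j+d}$. Let $\pi \colon M \to Rv_j$ be the projection along $N$. Since both $Rv_j$ and $Rv_{j+d}$ are complements of $N$, each maps isomorphically onto $M/N$, whence the restriction $\pi|_{Rv_{j+d}} \colon Rv_{j+d} \to Rv_j$ is an isomorphism. Because $v_{j+d} - c_{0j} v_j \in N$, we have $\pi(v_{j+d}) = c_{0j} v_j$; identifying $Rv_j \simeq R$ and $Rv_{j+d} \simeq R$ via their generators then turns $\pi|_{Rv_{j+d}}$ into the map $r \mapsto r c_{0j}$, i.e.\ right multiplication by $c_{0j}$. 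As this map is surjective, $s c_{0j} = 1$ for some $s \in R$, so $c_{0j}$, and hence $\alpha_{0j}$, is left invertible, completing the construction. (Stable finiteness gives Dedekind-finiteness by Proposition \ref{epiisiso}, so $c_{0j}$ is in fact a two-sided unit, but left invertibility is all the properly bounded condition demands.) The main obstacle is exactly the noncommutative bookkeeping of this last step: one must keep the coefficients on the correct side throughout and recognize $\pi|_{Rv_{j+d}}$ as right — not left — multiplication by $c_{0j}$, which is what forces the desired left invertibility.
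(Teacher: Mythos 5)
Your proposal is correct and follows essentially the same route as the paper: build $\D_v$ from the recurrence $v_{j+d}=\sum_{i=0}^{d-1}c_{ij}v_{i+j}$ and then extract left invertibility of the constant coefficient from the polygon condition at the shifted index. The paper packages that last step as the equality of cyclic submodules $Rv_j = R\,c_{0j}v_j$ (both being complements of the span of $v_{j+1},\dots,v_{j+d-1}$, with one contained in the other), whereas you phrase it via surjectivity of the projection $\pi|_{Rv_{j+d}}$; these are the same argument in slightly different clothing.
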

\begin{proof}
Since $P_i$ is a polygon, the list $v_i, \dots, v_{i+d-1}$ is a basis in $M$ for all $i$. In particular, there exist $\alpha_{i,j} \in R$ such that

 \[v_{i+d} = \alpha_{0,i}v_i + ... +\alpha_{d-1,i}v_{i+d-1}.\] 
 In other words, for the difference operator
\[\D\ai{_v} : = \alpha_0+  \dots +\alpha_{d-1}T^{i+d-1} - T^{d},\] 
we have $\D\ai{_v} v = 0$. Further, for any $i \in \Z$ we have
$$
M = R\langle v_{i+1} , \dots, v_{i+d} \rangle = R\langle v_{i+1} , \dots, \dots, v_{i+d-1},\alpha_{0,i}v_i \rangle = P_{i+1} \oplus \dots \oplus P_{i+d-1} \oplus R \alpha_{0,i}v_i.
$$
Separately, $$M = P_{i+1} \oplus \dots \oplus P_{i+d-1} \oplus P_i,$$
so, together, we have
$$
P_i = Rv_i =  R  \alpha_{0,i}v_i ,
$$
and thus $\alpha_{0,i}$ is left-invertible and hence a unit for any $i \in \Z$. Therefore, $\D\ai{_v}$ is properly bounded, as desired. 
\end{proof}

The following says that the \ai{correspondence $v \mapsto \D_v$} from Proposition \ref{prop:ri} provides a right inverse for the map  $\bar \varphi :R^\Z[T]_d^{\mathrm{pb}} \to \mathfrak P(\Proj^{d-1}(R))$. Thus, this map is surjective, and the proof of Theorem \ref{DOcorresPOLY} is complete.
\begin{prop}
Assume that $v_i \in M$, where $M$ is a rank $d$ free left $R$-module for stably finite ring $R$, is such that the sequence $Rv_i$ is a polygon. Let $\D \in R^\Z[T]_d^{\mathrm{pb}}$ be a properly bounded degree $d$ difference operator such that $\D v = 0$. Then there is a projective transformation $\Proj((\Ker \D)^*) \to \Proj(M)$ mapping the polygon associated with $\D$ to the polygon $(Rv_i)$. 
\end{prop}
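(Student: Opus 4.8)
The plan is to construct the desired projective transformation explicitly using the duality between $v$ and the evaluation sequence $\eps_\D$. Both modules $M$ and $(\Ker \D)^*$ are free of rank $d$, and the polygon in $\Proj(M)$ is given by the lines $Rv_i$, while the polygon in $\Proj((\Ker \D)^*)$ is given by the lines $R\eps_{\D,i}$. A projective transformation between these spaces is induced by a module isomorphism $M \to (\Ker \D)^*$, so it suffices to produce an isomorphism carrying each $v_i$ to a scalar (unit) multiple of $\eps_{\D,i}$. The natural candidate is the map sending $x \in M$ to the functional on $\Ker \D$ that reads off the appropriate coordinate of $x$ in the moving basis $v_i, \dots, v_{i+d-1}$; equivalently, I would exploit that by Proposition \ref{prop:trinker} the condition $\D v = 0$ means $\mathrm{Span}\, v \subset \Ker \D$, and by Proposition \ref{wronskian} (applicable since the $v_i$ form a polygon) in fact $\mathrm{Span}\, v = \Ker \D$.

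First I would fix a basis $e_1, \dots, e_d$ of $M$ and write $v = \sum_j a_j \otimes e_j$ with $a_j \in R^\Z$; the equality $\mathrm{Span}\, v = \Ker \D$ says precisely that the $a_j$ form a basis of the rank $d$ right module $\Ker \D$. Define $\psi \colon M \to (\Ker \D)^*$ on the basis by $\psi(e_j) := a_j^\vee$, where $a_1^\vee, \dots, a_d^\vee$ is the dual basis of $(\Ker \D)^*$ corresponding to $a_1, \dots, a_d$. Since $\psi$ sends a basis to a basis, it is a left-module isomorphism. The key computation is then to evaluate $\psi(v_i)$ against an arbitrary element $b \in \Ker \D$ and check that $\psi(v_i) = \eps_{\D,i}$ up to a unit. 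Writing $b = \sum_j a_j r_j$ for scalars $r_j \in R$, I would compute $\eps_{\D,i}(b) = b_i = \sum_j (a_j)_i r_j = \sum_j v_{i,j} r_j$, and separately $\psi(v_i)(b) = (\sum_j v_{i,j}\, e_j$ coefficients paired with the dual basis$)$, which by definition of $a_j^\vee$ yields the same sum $\sum_j v_{i,j} r_j$. This shows $\psi(v_i) = \eps_{\D,i}$ on the nose, so the induced projective transformation $\Proj(M) \to \Proj((\Ker \D)^*)$ carries $Rv_i$ to $R\eps_{\D,i}$, and its inverse is the map required by the statement.

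The main obstacle I anticipate is purely bookkeeping: keeping the left/right module structures straight. The sequence module $R^\Z$ and $\Ker \D$ are \emph{right} $R$-modules, so $(\Ker \D)^*$ is a \emph{left} module, matching $M$; but the dual basis pairing and the coordinate expansions must respect the correct sidedness, and scalars must be written on the correct side throughout the computation $\eps_{\D,i}(b) = b_i$. I would therefore be careful that $\psi$ is genuinely a homomorphism of \emph{left} modules (the $e_j$ and $a_j^\vee$ are left-module basis elements) and that the identity $\psi(v_i)(b) = \eps_{\D,i}(b)$ holds for all $b \in \Ker \D$, not merely on a generating coordinate. Once the sidedness is handled correctly, no stable-finiteness input is needed beyond what is already packaged into Proposition \ref{wronskian} guaranteeing $\mathrm{Span}\, v = \Ker \D$; the remaining argument is a direct verification that the constructed isomorphism intertwines the two polygons.
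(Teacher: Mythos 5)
Your proof is correct and is essentially the paper's argument: the paper constructs the right-module isomorphism $M^* \to \Ker \D$, $\xi \mapsto (\xi(v_i))$, and then dualizes, which in coordinates is exactly your map $e_j \mapsto a_j^\vee$ run in the opposite direction, with the same verification that $\eps_{\D,i}$ and $v_i$ correspond. One small correction: promoting the $d$ generators $a_j$ of the free rank-$d$ right module $\Ker \D$ to a basis (so that the dual basis $a_j^\vee$ exists) is itself an application of stable finiteness via Proposition \ref{prop:gsb}, which is precisely where the paper invokes it, so your closing remark that no stable-finiteness input is needed beyond Proposition \ref{wronskian} is slightly inaccurate.
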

\begin{proof}
Since $\D v = 0$, and the list $v_i, \dots, v_{i+d-1}$ is a basis in $M$ for all $i$, by Proposition \ref{wronskian} we have $\mathrm{Span}\, v = \Ker \D$. This means that the image of the mapping $\psi \colon M^* \to R^\Z$ given by $\xi \mapsto (\xi(v_i))$ is precisely $\Ker \D$. Furthermore, since  $M^*$ \ai{and} $\Ker \D$ are free modules of the same rank, \ai{and $R$ is stably finite,} it follows that $\psi \colon M^* \to \Ker \D$ is a right $R$-module isomorphism.  Consider the dual $\psi^* \colon (\Ker \D)^* \to M$. It is a left $R$-module isomorphism\ai{, and}, for any $\xi \in M^*$, we have
$$
\xi(\psi^*(\eps_{\D, i})) = \eps_{\D, i}(\psi(\xi)) = \xi(v_i),
$$
so $\psi^*(\eps_{\D, i}) = v_i$\ai{. Therefore,} the associated projective map $\Proj((\Ker \D)^*) \to \Proj(M)$ takes the polygon associated with $\D$ to the polygon $Rv_i$, as needed. 
\end{proof}

\section{Integrability of pentagram maps over rings} \label{knownPM}

In  this section, we show that pentagram maps over arbitrary stably finite rings can be viewed as integrable systems. As a first step, in Section \ref{sec:pmdo}, we reformulate the pentagram map in the language of difference operators. This is used in Section \ref{sec:lax} to give a Lax formulation of the pentagram map. It is given in terms of pseudo-difference operators, which are introduced in Section \ref{PDO}.

Throughout this section, the ground ring $R$ is assumed to be stably finite.

\subsection{Pentagram maps in terms of difference operators}\label{sec:pmdo}

By Theorem \ref{DOcorresPOLY}, polygons in the (left) projective plane $\Proj^2(R)$, considered up to projective transformation, are in one-to-one correspondence with elements of the set $R^\Z[T]_3^{\mathrm{pb}}/((R^\Z)^{\times} \times (R^\Z)^{\times})$ of properly bounded left difference operators of degree three, up to left-right action \ai{of $(R^\Z)^{\times}$}. Accordingly, one can describe the pentagram map as a dynamical system on difference operators. In this section we give an intrinsic description of that dynamical system. For technical reasons, we consider the inverse pentagram map, rather than the pentagram map itself. It turns out that that this inverse, phrased in terms  of difference operators, is described by a simple linear equation. In the case when the ground ring is a field, this result was established in \cite[Theorem 1.1]{izopentagramandrefactorization}. 

In order to have well-defined dynamics on difference operators, we only consider polygons  whose image under the inverse pentagram map is a polygon (so that the image corresponds to a difference operator). Denote the inverse pentagram map by $\Psi$, and let $\mathfrak P(\Proj^2(R))^{\mathrm{reg}}$ be the set of projective equivalence classes of polygons in $\Proj^2(R)$ whose image under the inverse pentagram map is a polygon. That way, we have a map
$$
\Psi \colon  \mathfrak P(\Proj^2(R))^{\mathrm{reg}} \to \mathfrak P(\Proj^2(R)).
$$

For a degree three difference operator of the form
\[\D = a+b T +c T^2+d T^3, \quad a,b,c,d \in R^{\Z},\]
 let us define
\begin{equation} \label{eqDplusDminus}
    \D_+ := a + bT \hspace{1cm} \D_- := c T^2+ dT^3.
\end{equation}
\begin{definition} \label{defRelationDO}     Two properly bounded left difference operators of degree three, $\D, \wD \in R^\Z[T]_3^{\mathrm{pb}}$, are said to be \emph{related}, denoted $\D \sim \wD$, if 
     \begin{equation} \label{EqDO}
         \wD_+\D_-=\wD_-\D_+.
     \end{equation}
          \end{definition}
          \begin{prop}
          If $\D \sim \wD$, then $\D_+ (\Ker \D) \subset \Ker \wD$.
          \end{prop}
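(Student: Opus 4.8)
The plan is to reduce the statement to a single operator identity, namely that $\wD\,\D_+=\wD_+\,\D$ as elements of the ring $R^\Z[T]$ (equivalently, as endomorphisms of $R^\Z$). Granting this identity, the proposition is immediate: for any $v\in\Ker\D$ we have $\D v=0$, hence
\[
\wD(\D_+ v)=(\wD\,\D_+)v=(\wD_+\,\D)v=\wD_+(\D v)=0,
\]
so $\D_+ v\in\Ker\wD$, as desired.

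To establish the operator identity, the first step is to observe that the splitting \eqref{eqDplusDminus} is additive, so that $\D=\D_++\D_-$ and likewise $\wD=\wD_++\wD_-$ (a degree-three operator is exactly the sum of its ``$+$'' part of degrees $0,1$ and its ``$-$'' part of degrees $2,3$). Expanding the product then gives
\[
\wD\,\D_+=(\wD_++\wD_-)\,\D_+=\wD_+\,\D_++\wD_-\,\D_+ .
\]
The second step is to rewrite the cross term $\wD_-\,\D_+$ using the defining relation \eqref{EqDO} of $\D\sim\wD$, which says precisely $\wD_-\,\D_+=\wD_+\,\D_-$. Substituting and factoring out $\wD_+$ on the left yields
\[
\wD\,\D_+=\wD_+\,\D_++\wD_+\,\D_-=\wD_+(\D_++\D_-)=\wD_+\,\D,
\]
which is the required identity.

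There is essentially no obstacle here: the only point to be careful about is the order of multiplication, since $R$ need not be commutative, so one must keep $\wD_+$ as a \emph{left} factor throughout (the relation \eqref{EqDO} is used in exactly the form that makes this factoring legitimate). It is worth emphasizing that this argument is purely formal -- it holds as an identity of operators before restricting to $\Ker\D$ -- and in particular does not invoke stable finiteness or any structural hypothesis on $R$.
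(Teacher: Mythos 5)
Your proof is correct and is essentially the same as the paper's: both derive the operator identity $\wD\,\D_+=\wD_+\,\D$ by adding $\wD_+\D_+$ to both sides of the relation \eqref{EqDO} and then apply it to elements of $\Ker\D$. Your write-up just spells out the factoring step in slightly more detail.
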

          \begin{proof}
     By adding $\wD_+ \D_+$ to both sides, the definition \eqref{EqDO} of the relation $\D \sim \wD$  can be rewritten as
         $$
         \wD_+\D =\wD\D_+.
$$
In particular, if $\D \sim \wD$, then, for any $\alpha \in \Ker \D$, we have 
     $
     \wD (\D_+ \alpha) =  \wD_+(\D \alpha) = 0.
     $\qedhere
   \end{proof}
   
   \begin{definition}
     We say that $\D \in  R^\Z[T]_3^{\mathrm{pb}}$ is \emph{regular} if there exists $\wD \in R^\Z[T]_3^{\mathrm{pb}}$  such that $\D \sim \wD$, and the homomorphism of right $R$-modules given by
     $$
     \D_+\mid_{\Ker \D} \colon \Ker \D \to \Ker \wD
     $$
     is surjective (and hence bijective).
     \end{definition}

     When $R$ is the field of real or complex numbers, the set of  $\D \in  R^\Z[T]_3^{\mathrm{pb}}$ satisfying the first condition of this definition (existence of $\wD \sim \D$) is dense by \cite[Lemma 4.10]{izopentagramandrefactorization}. Moreover, surjectivity of $ \D_+\mid_{\Ker \D}$ is, in this case, equivalent to the requirement $\Ker \D_+ \cap \Ker D_- = 0$, which also holds on a dense subset. So, for real or complex numbers, regular operators are dense, hence the name.

\begin{prop} \label{propRelationonQuotientDO}
Suppose that $\D \in  R^\Z[T]_3^{\mathrm{pb}}$ is regular. Then
\begin{enumerate} \item The operator $\wD \in R^\Z[T]_3^{\mathrm{pb}}$ such that $\D \sim \wD$  exists and is unique up to the left action of $(R^\Z)^{\times}$ on $R^\Z[T]_3^{\mathrm{pb}}$. \ai{(Note that, if $\D \sim \wD$, then $\D \sim \alpha\wD$ for any $\alpha \in (R^\Z)^{\times}$. Therefore, the relation $\D \sim \wD$ is invariant under the left action of $(R^{\Z})^\times$ on $\wD$.)}
\item Any operator in the same orbit of the left-right $(R^\Z)^{\times} \times (R^\Z)^{\times}$ action on $R^\Z[T]_3^{\mathrm{pb}}$  as $\D$ is also regular. Moreover, for two regular operators in the same orbit, their related operators are also in the same orbit. 
\end{enumerate}
\end{prop}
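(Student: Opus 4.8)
The plan is to handle the two parts of the proposition separately, establishing the uniqueness in part~1 first, since the proof of part~2 will feed back into it.

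For part~1, the existence of $\wD$ is built into the definition of regularity, so the real content is uniqueness up to the left $(R^\Z)^\times$-action. Let $\wD_0$ be a witness of regularity, so that $\D\sim\wD_0$ and $\D_+\mid_{\Ker\D}\colon\Ker\D\to\Ker\wD_0$ is surjective; in particular $\Ker\wD_0=\D_+(\Ker\D)$. For any other $\wD$ with $\D\sim\wD$, the earlier observation that $\D\sim\wD$ implies $\D_+(\Ker\D)\subseteq\Ker\wD$ gives $\Ker\wD_0=\D_+(\Ker\D)\subseteq\Ker\wD$. Since $\wD$ and $\wD_0$ are both properly bounded of degree three, I would choose $c'\in(R^\Z)^\times$, the ratio of their leading coefficients, so that $\Rem:=\wD-c'\wD_0$ has degree at most two. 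Then $\Ker\Rem\supseteq\Ker\wD\cap\Ker\wD_0=\Ker\wD_0$, and Proposition~\ref{prop:kernpb} forces $\Rem=0$, i.e.\ $\wD=c'\wD_0$. This is the desired uniqueness, and it simultaneously shows that the set of all $\wD$ with $\D\sim\wD$ is exactly the left $(R^\Z)^\times$-orbit of $\wD_0$.

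The computational backbone of part~2 is that left or right multiplication by a bi-infinite sequence does not change the degree of a homogeneous term, and hence commutes with the splitting of a degree-three operator into its low part $\D_+$ (degrees $0,1$) and high part $\D_-$ (degrees $2,3$): one has $(\alpha\D)_\pm=\alpha\D_\pm$ and, using $T\beta=(\sigma\beta)T$ for the shift $\sigma$ of sequences, $(\D\beta)_\pm=\D_\pm\beta$. Substituting these into $\wD_+\D_-=\wD_-\D_+$ shows that $\alpha\D\beta\sim\gamma\wD\delta$ is equivalent to $\wD_+(\delta\alpha)\D_-=\wD_-(\delta\alpha)\D_+$; the choice $\delta=\alpha^{-1}$ makes the inner units cancel and reduces this to $\D\sim\wD$. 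Thus, whenever $\D\sim\wD$, we get $\alpha\D\beta\sim\wD\alpha^{-1}$ for all units $\alpha,\beta$.

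To prove that every orbit element is regular, I take regular $\D$ with witness $\wD$ and an arbitrary $\D'=\alpha\D\beta$, with candidate witness $\wD':=\wD\alpha^{-1}$, for which $\D'\sim\wD'$ by the previous paragraph. Computing kernels via $\Ker(\D\beta)=\beta^{-1}\Ker\D$ and $\Ker(\wD\alpha^{-1})=\alpha\Ker\wD$, together with $\D'_+=\alpha\D_+\beta$, one finds that for $a=\beta^{-1}k$ with $k\in\Ker\D$ we have $\D'_+a=\alpha\D_+k$; surjectivity of $\D_+\mid_{\Ker\D}$ then yields surjectivity of $\D'_+\mid_{\Ker\D'}\colon\Ker\D'\to\Ker\wD'$, so $\D'$ is regular. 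For the second assertion, if $\D$ and $\D'=\alpha\D\beta$ are regular with witnesses $\wD$ and $\wD'$, then $\wD\alpha^{-1}$ is a legitimate witness for $\D'$, so the uniqueness from part~1 gives $\wD'=\mu\wD\alpha^{-1}$ for some unit $\mu$, exhibiting $\wD'$ in the left--right orbit of $\wD$. The main obstacle is purely bookkeeping: one must track how the left and right unit actions thread through the bilinear relation $\wD_+\D_-=\wD_-\D_+$ and through the kernel identities, and in particular recognize that the relation forces the specific pairing $\delta=\alpha^{-1}$ between the right action on $\D$ and the inner left action on $\wD$. Once the degree-preservation of the scalar actions and these kernel computations are in hand, both parts follow from Proposition~\ref{prop:kernpb} and the uniqueness established in part~1.
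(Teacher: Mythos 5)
Your proof is correct and follows essentially the same route as the paper: existence from the definition of regularity, uniqueness via the kernel containment $\D_+(\Ker\D)\subseteq\Ker\wD$ combined with the degree-reduction argument of Proposition~\ref{prop:kernpb}, and part~2 via the explicit candidate $\wD\alpha^{-1}$ and the kernel computation $\D'_+(\Ker\D')=\alpha\D_+(\Ker\D)$. The only (harmless) divergence is that you reprove the content of Lemma~\ref{lemma:inj} inline directly from the containment of kernels, whereas the paper first upgrades the containment to an equality and then cites that lemma; you also spell out the compatibility $(\alpha\D\beta)_\pm=\alpha\D_\pm\beta$, which the paper leaves implicit.
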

\begin{proof}
1. Existence of $\wD$ with the property  $\D \sim \wD$ is a part of the  definition of a regular operator. That $\wD$ also satisfies $\D_+ (\Ker \D) = \Ker \wD$. For any other $\wD' \in R^\Z[T]_3^{\mathrm{pb}}$ such that $\D \sim \wD$, we have $\Ker \wD' \supset \D_+ (\Ker \D) = \Ker \wD$. Furthermore, we know that the mapping $\Ker \wD' \to R^3$, given by taking three consecutive entries of a bi-infinite sequence, is an isomorphism. The same holds for $\wD$. Therefore, the embedding $\Ker \wD \to \Ker \wD'$ must be an isomorphism as well. By Lemma \ref{lemma:inj}, we have $\wD' = \alpha \wD$ for some $\alpha \in (R^\Z)^{\times}$.

2. Consider $\wD$ such that $\D \sim \wD$ and $\D_+ (\Ker \D) = \Ker \wD$. Take an operator $\D' = \alpha \D \beta$ in the same orbit as $\D$. Let $\wD' = \wD \alpha^{-1}$. Then $\D' \sim \wD'$. Furthermore,
$$
\D_+' (\Ker \D') =\alpha \D_+ \beta (\beta^{-1} \Ker \D) = \alpha \D_+ (\Ker \D)   = \alpha \Ker \wD = \Ker \wD'. 
$$
The result follows. \qedhere

\end{proof}

As a result, for a regular operator $\D \in  R^\Z[T]_3^{\mathrm{pb}}$, all its related operators constitute a single orbit of the $(R^\Z)^{\times} \times (R^\Z)^{\times}$ action. Moreover, this orbit remains the same if we replace $\D$ by a left-right-equivalent operator. Denote by $R^\Z[T]_3^{\mathrm{reg}}$ the set of regular operators. Then we have a map
$$
\psi \colon R^\Z[T]_3^{\mathrm{reg}}/((R^\Z)^{\times} \times (R^\Z)^{\times})
 \to R^\Z[T]_3^{\mathrm{pb}}/((R^\Z)^{\times} \times (R^\Z)^{\times})
$$
which takes an orbit of a regular operator  $\D \in  R^\Z[T]_3^{\mathrm{reg}}$ to the orbit of its related operator $\wD \sim \D$. This orbit is independent of the choice of a related operator, by Proposition \ref{propRelationonQuotientDO}.

Recall also that, by Theorem \ref{DOcorresPOLY}, we have a bijection $\varphi \colon  R^\Z[T]_3^{\mathrm{pb}}/((R^\Z)^{\times} \times (R^\Z)^{\times}) \to \mathfrak P(\Proj^2(R))$, the correspondence between difference operators (up to left-right equivalence) and polygons (up to projective equivalence).

\begin{theorem} \label{IPMintermsDO} \begin{enumerate}
  \item   We have\begin{equation}\label{eq:inc}\varphi(R^\Z[T]_3^{\mathrm{reg}}/((R^\Z)^{\times} \times (R^\Z)^{\times})) =\mathfrak P(\Proj^2(R))^{\mathrm{reg}}.\end{equation} In other words, regular difference operators precisely correspond to polygons which are mapped to polygons under the inverse pentagram map.  \item The following diagram commutes.
\begin{equation}\label{cd}
\begin{tikzcd}
R^\Z[T]_3^{\mathrm{reg}}/((R^\Z)^{\times} \times (R^\Z)^{\times}) \arrow[r, "\psi"] \arrow[d, "\varphi"]                    &  \arrow[d, "\varphi"]     R^\Z[T]_3^{\mathrm{pb}}/((R^\Z)^{\times} \times (R^\Z)^{\times})\\
\mathfrak P(\Proj^2(R))^{\mathrm{reg}} \arrow[r, "\Psi"]                                              &   \mathfrak P(\Proj^2(R))   .         
\end{tikzcd}
\end{equation}
In other words, rephrased in terms of difference operators, the (inverse) pentagram map is described by equation \eqref{EqDO}. 
\end{enumerate}
\end{theorem}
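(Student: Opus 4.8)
The plan is to route everything through the correspondence $\varphi$ of Theorem \ref{DOcorresPOLY} by computing, for an arbitrary properly bounded $\D = a + bT + cT^2 + dT^3$, the inverse pentagram image of the polygon $\varphi(\D)$ in closed form. Recall $\varphi(\D)$ has vertices $P_i = R\eps_{\D,i}$ in $\Proj((\Ker \D)^*)$ and that $\D\eps_\D = 0$ by Proposition \ref{prop:spaneps}, i.e. $a_i\eps_{\D,i} + b_i\eps_{\D,i+1} + c_i\eps_{\D,i+2} + d_i\eps_{\D,i+3} = 0$ for every $i$. Setting $\nu_i := a_i\eps_{\D,i} + b_i\eps_{\D,i+1}$, this relation reads $\nu_i = -(c_i\eps_{\D,i+2} + d_i\eps_{\D,i+3})$, so $\nu_i$ lies on both the line $P_iP_{i+1}$ and the line $P_{i+2}P_{i+3}$. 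Since $a_i$ is a unit (proper boundedness), $\nu_i$ is unimodular, so by Corollary \ref{cor:di} the rank one subspace $R\nu_i$ equals the unique intersection point furnished by Lemma \ref{uniqueintpt}. Hence the inverse pentagram image of $\varphi(\D)$ is exactly $(R\nu_i)$, and $\varphi(\D)\in\mathfrak P(\Proj^2(R))^{\mathrm{reg}}$ if and only if $(R\nu_i)$ is again a polygon. I also record the bookkeeping identity $\nu = \D_+\eps_\D$ as sequences, from which $\mathrm{Span}\,\nu = \D_+(\Ker \D)$.

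Next I would treat the regular case. Suppose $\D$ is regular with related operator $\wD$, so that $\D_+\colon\Ker \D\to\Ker \wD$ is an isomorphism. Dualizing gives an isomorphism $\D_+^*\colon(\Ker \wD)^*\to(\Ker \D)^*$ and an induced projective transformation. For $\alpha\in\Ker \D$ one has $\D_+^*(\eps_{\wD,i})(\alpha) = \eps_{\wD,i}(\D_+\alpha) = (\D_+\alpha)_i = \nu_i(\alpha)$, so $\D_+^*(\eps_{\wD,i}) = \nu_i$. Thus this projective transformation carries $\varphi(\wD) = (R\eps_{\wD,i})$ onto $(R\nu_i)$. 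This simultaneously shows that $(R\nu_i)$ is a genuine polygon (so $\varphi$ maps regular operators into $\mathfrak P(\Proj^2(R))^{\mathrm{reg}}$, giving the inclusion $\subseteq$ in \eqref{eq:inc}) and that $\Psi(\varphi(\D)) = \varphi(\wD) = \varphi(\psi(\D))$, which is exactly the commutativity of \eqref{cd}.

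For the reverse inclusion in \eqref{eq:inc}, I would assume $(R\nu_i)$ is a polygon and produce a related operator. Proposition \ref{prop:ri} yields a properly bounded degree three $\wD = \hat a + \hat b T + \hat c T^2 + \hat d T^3$ with $\wD\nu = 0$, and Proposition \ref{wronskian} gives $\Ker \wD = \mathrm{Span}\,\nu = \D_+(\Ker \D)$; in particular $\D_+\colon\Ker \D\to\Ker \wD$ is surjective, hence an isomorphism since $R$ is stably finite. It remains to check $\D\sim\wD$. From $\wD\nu = 0$ and $\nu = \D_+\eps_\D$ we get $\wD\D_+\eps_\D = 0$, and since $\mathrm{Span}\,\eps_\D = \Ker \D$ (Proposition \ref{prop:spaneps}), Proposition \ref{prop:trinker} gives $\Ker \D\subset\Ker(\wD\D_+)$. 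As $\D$ has invertible leading coefficient, I may right-divide $\wD\D_+ = Q\D + S$ with $\deg S\le 2$; evaluating on $\Ker \D$ shows $S$ annihilates $\Ker \D$, so $S = 0$ by Proposition \ref{prop:kernpb}, leaving $\wD\D_+ = Q\D$ with $Q = q_0 + q_1 T$. Comparing the $T^0$ and $T^1$ coefficients of $\wD\D_+ = Q\D$ and using that $a$ and $a^{(1)}$ are sequences of units forces $q_0 = \hat a$ and $q_1 = \hat b$, i.e. $Q = \wD_+$. Hence $\wD\D_+ = \wD_+\D$, which after subtracting $\wD_+\D_+$ from both sides is precisely $\wD_+\D_- = \wD_-\D_+$, that is $\D\sim\wD$. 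Combined with surjectivity of $\D_+$, this makes $\D$ regular, completing \eqref{eq:inc}.

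The main obstacle is this last step. A naive degree count makes $\wD_-\D_+ - \wD_+\D_-$ appear to be a degree four operator annihilating the rank three module $\Ker \D$, a situation Proposition \ref{prop:kernpb} cannot resolve directly. The remedy is to factor $\wD\D_+$ through $\D$ by right division and then exploit proper boundedness — the invertibility of the extreme coefficients of $\D$ — to pin down the quotient $Q$ from its two lowest coefficients alone, after which the relation $\D\sim\wD$ falls out automatically. This is the one point where, beyond stable finiteness, proper boundedness is indispensable, and it is what substitutes for the general-position and commutativity arguments used over a field.
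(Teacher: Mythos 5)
Your proposal is correct, and most of it runs parallel to the paper's argument: both identify the inverse pentagram image of $\varphi(\D)$ with the sequence $\D_+\eps_\D$ (your $\nu$, the paper's $\hat v$) and both use $\mathrm{Span}(\D_+\eps_\D)=\D_+(\Ker\D)$ together with Propositions \ref{prop:spaneps}, \ref{wronskian} and \ref{prop:ri} to pass between the two sides of \eqref{eq:inc}. The genuine divergence is in the last step of the reverse inclusion, where one must upgrade $\D_+(\Ker\D)=\Ker\wD$ to the relation $\D\sim\wD$. The paper notes that $\tilde\D:=\wD_-\D_+-\wD_+\D_-$ is supported on the powers $T^2,T^3,T^4$, so $T^{-2}\tilde\D$ is an honest difference operator of degree at most two annihilating $\Ker\D$, and Proposition \ref{prop:kernpb} kills it in one line; your remark that this proposition ``cannot resolve'' the situation directly overlooks this shift by $T^{-2}$. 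Your substitute --- right Euclidean division $\wD\D_+=Q\D+S$ (legitimate because the top coefficient $d$ of $\D$ consists of units, with left-invertibility upgraded to invertibility by Dedekind-finiteness), killing $S$ via Proposition \ref{prop:kernpb}, and then identifying $Q=\wD_+$ from the $T^0$ and $T^1$ coefficients using invertibility of $a$ and its shift --- is also valid and arrives at the same identity $\wD_+\D_-=\wD_-\D_+$. The paper's shift trick is shorter and uses only the unit condition implicitly through Proposition \ref{prop:kernpb}; your division argument is slightly longer but makes explicit where both extreme coefficients of $\D$ enter, and it would be the natural fallback for the variant \eqref{eq:newform} of $\D_\pm$ where the shift trick fails (though there the needed coefficient comparison would no longer close up so cleanly). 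A final cosmetic point: in the forward direction you establish that $(R\nu_i)$ is a polygon via the dual isomorphism $\D_+^*$ carrying $\eps_{\wD,i}$ to $\nu_i$, whereas the paper invokes Proposition \ref{wronskian}; these are interchangeable.
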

\begin{proof}
Suppose that $\D \in R^\Z[T]_3^{\mathrm{reg}}$. We will show that $\Psi(\bar \varphi(\D)) =\varphi(\psi(\D)) $. This will establish the inclusion \ai{$\varphi(R^\Z[T]_3^{\mathrm{reg}}/((R^\Z)^{\times} \times (R^\Z)^{\times})) \subset \mathfrak P(\Proj^2(R))^{\mathrm{reg}}$}, as well as commutativity of the diagram~\eqref{cd}. Let $v_i := \eps_{\D, i} \in (\Ker \D)^*$. Then, by construction of the map $\bar \varphi \colon R^\Z[T]_3^{\mathrm{pb}} \to \mathfrak P(\Proj^2(R))$, the polygon $(Rv_i)$ represents the class of polygons $\bar \varphi(\D)$  associated to $\D$. Furthermore, by Proposition~\ref{prop:spaneps}, we have that $\mathrm{Span}\,v = \Ker \D$. Therefore,
$$
\mathrm{Span}(\D_+ v)  = \D_+ (\mathrm{Span} \, v) = \D_+(\Ker \D) = \Ker \wD,
$$
where $\wD \in R^\Z[T]_3^{\mathrm{pb}}$ is any operator such that $\D \sim \wD$. So, if we set  $\hat v := \D_+ v$, by Proposition~\ref{wronskian} we have that  $\hat v_i, \hat v_{i+1}, \hat v_{i+2}$ is a basis in $(\Ker \wD)^*$ for all $i$. In other words, the points $R\hat v_i$ form a polygon. At the same time, if $\D  =   a+ b T + c T^2+ d T^3$, then
  $$
\hat v_i =  a_i  v_i +  b_i   v_{i+1} = -  c_i  v_{i+2} -  d_i  v_{i+3}.
 $$
Thus,  $\hat v_i \in ( Rv_i \oplus  Rv_{i+1}) \cap ( Rv_{i+2} \oplus  Rv_{i+3})$, meaning that the point $R\hat v_i$  lies on the lines $R v_i \oplus R v_{i+1}$,  $R v_{i+2} \oplus R v_{i+3}$ (those are indeed lines by Lemma \ref{uniqueintpt}) and hence coincides with their intersection point (which is unique also by Lemma \ref{uniqueintpt}). In other words, the points $R\hat v_i$ are the vertices of the image of the polygon $(R  v_i)$ under the inverse pentagram map, which proves that $\Psi(\bar \varphi(\D)) =\varphi(\psi(\D)) $.

Now, we show that $\varphi(R^\Z[T]_3^{\mathrm{reg}}/((R^\Z)^{\times} \times (R^\Z)^{\times})) \supset \mathfrak P(\Proj^2(R))^{\mathrm{reg}}$. Take a polygon $(Rv_i)$, $v_i \in M$ (where $M$ is a rank three free left $R$-module),  such that its image under the inverse pentagram map is a polygon. We need to show that the corresponding difference operator $\D$ satisfying $\D v = 0$  is regular. Arguing as above, we see that the image of the polygon $(Rv_i)$ under the inverse pentagram map is given by the points $R \hat v_i$, where  $\hat v := \D_+ v$. Since this image is a polygon, there exists $\wD \in R^\Z[T]_3^{\mathrm{pb}}$ such that $\wD \hat v = 0$. Let us show that $\D \sim \wD$ and, furthermore, $\D_+ (\Ker \D) = \Ker \wD$. For the latter, note that
$$
\D_+(\Ker \D) = \D_+(\mathrm{Span}\, v) = \mathrm{Span}(\D_+ v) = \mathrm{Span} \,\hat v = \Ker \wD.
$$
As for the former, observe that $\D_+ (\Ker \D) = \Ker \wD$ implies that
$
\Ker \D \subset \Ker \tilde \D$, where \begin{equation}\tilde \D:= \wD_-\D_+ -\wD_+\D_-. \end{equation}
Therefore, $
\Ker \D$ is also contained in the kernel of a degree two operator $\Ker T^{-2}\tilde \D$, which, by Proposition~\ref{prop:kernpb}, implies $T^{-2}\tilde \D = 0$, and hence $\D \sim \wD$. \qedhere

\end{proof}

In \cite{izopentagramandrefactorization}, similar ideas are used to work with the pentagram map itself instead of its inverse. The former is also described by \eqref{EqDO}, but with $\D_\pm$ of the form
\begin{equation}\label{eq:newform}
    \D_+ := a + cT^2, \quad \D_- := b T+ dT^3.
\end{equation}
In that setting, Theorem \ref{IPMintermsDO} does not seem to be true as stated. The argument breaks down where we say that the operator $\tilde \D:= \wD_-\D_+ -\wD_+\D_-$ is of degree two; that is no longer the case for operators of the form \eqref{eq:newform}. 

\subsection{Pseudo-difference operators} \label{PDO}
    A (left) \textit{pseudo-difference operator} over a ring $R$ is a formal Laurent series  in terms of the left shift operator $T$ with coefficients in $R^\Z$. Formally, consider the subring $R^\Z[T,T^{-1}]$ of $\End_R(R^\Z)$ generated by $R^\Z$, $T$, and $T^{-1}$. The ring  $R^\Z[T,T^{-1}]$ consists of operators of the form $\sum \alpha_iT^i$, where $\alpha_i \in R^\Z$, and $\alpha_i \neq 0$ for only finitely many values of $i$. Consider a valuation-type function $\nu \colon R^\Z[T,T^{-1}] \to \Z_+ \cup +\infty$ given by the degree of the lowest order term, where, by definition, $\nu(0) = +\infty$. Then $||\cdot|| := \exp(-\nu(\cdot))$ is a \emph{norm} on $R^\Z[T,T^{-1}]$, in the sense of \cite[Definition 2.11]{Jarden2011}.
    \begin{definition} \label{defLPseudoDO}
  The \emph{ring $R^\Z((T))$ of (left) pseudo-difference operators over a ring $R$} is the completion of $ R^\Z[T,T^{-1}]$ with respect to the norm $||\cdot||$. 
  
\end{definition}
Any pseudo-difference operator $\Q \in R^\Z((T))$ can be uniquely represented as a sum of a series
$
\Q = \sum_{i=m}^{\infty}\alpha_iT^i
$
where $\alpha_i \in R^\Z$, $m\in \Z$, and convergence is relative to the \ai{above} norm.

\begin{prop}
Suppose $\Q = \sum_{i=1}^{\infty}\alpha_iT^i$ is a power series in $T$ without a free term. Then $1 + \Q$ is a unit in $R^\Z((T))$.
\end{prop}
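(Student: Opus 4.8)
The plan is to invert $1+\Q$ by the geometric series $\sum_{k=0}^\infty (-\Q)^k$, exactly as one inverts $1+x$ for a topologically nilpotent element $x$ in a complete normed ring.

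First I would record that, since $\Q$ has no free term, its lowest-order term has degree at least one; thus $\nu(\Q) \geq 1$ and $||\Q|| \leq e^{-1} < 1$. Next I would use that the norm is submultiplicative, i.e.\ $\nu(ab) \geq \nu(a) + \nu(b)$: this holds because any product of a term of degree $i$ in $a$ with a term of degree $j$ in $b$ lands in degree $i+j$, and the vanishing of a leading coefficient (possible in the presence of zero divisors) only raises $\nu$. Consequently $||\Q^k|| \leq ||\Q||^k \leq e^{-k} \to 0$ as $k \to \infty$, so the sequence $(-\Q)^k$ tends to $0$ in norm.

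Then I would invoke completeness of $R^\Z((T))$, established by Definition \ref{defLPseudoDO}. Since $||(-\Q)^k|| \to 0$, the partial sums $S_N := \sum_{k=0}^N (-\Q)^k$ form a Cauchy sequence and hence converge to some $S \in R^\Z((T))$. The telescoping identity $(1+\Q)\,S_N = 1 - (-\Q)^{N+1}$, together with continuity of multiplication with respect to $||\cdot||$, gives $(1+\Q)\,S = 1$ upon passing to the limit; the identical computation on the other side yields $S\,(1+\Q) = 1$. Therefore $S$ is a two-sided inverse of $1+\Q$, so $1+\Q$ is a unit.

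I do not expect a serious obstacle here: the whole argument is the standard Neumann-series trick, and the only point requiring any attention is the submultiplicativity estimate $\nu(ab) \geq \nu(a)+\nu(b)$ in the noncommutative setting with possible zero divisors. As noted above this is immediate from degree bookkeeping, and it is in any case already subsumed by the assertion in the text that $||\cdot||$ is a norm in the sense of \cite[Definition 2.11]{Jarden2011}.
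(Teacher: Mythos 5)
Your proof is correct and is exactly the paper's argument: the paper also inverts $1+\Q$ via the Neumann series $\sum_{i=0}^\infty(-\Q)^i$, citing completeness of $R^\Z((T))$ and the bound $\|\Q\|\le e^{-1}$. You simply spell out the supporting details (submultiplicativity of the valuation, Cauchy partial sums, the telescoping identity) that the paper leaves implicit.
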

\begin{proof}
We have $ (1 +\Q)^{-1} = \sum_{i=0}^{\infty} (-\Q)^i $. The series converges since the ring $R^\Z((T))$ is complete, and $||\Q|| \leq e^{-1}$.
\end{proof}

\begin{cor}\label{cor:inv}
Let $\Q = \sum_{i=m}^{\infty}\alpha_iT^i$. If $\alpha_m$ is a unit in  $R^\Z$, then $\Q$ \ai{is} a unit in  $R^\Z((T))$.
\end{cor}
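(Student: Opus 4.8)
The plan is to reduce to the preceding proposition by factoring out the lowest-order term of $\Q$. Since $\alpha_m$ is a unit in $R^\Z$, the multiplication operator $\alpha_m$ is invertible with inverse $\alpha_m^{-1}$ (the entrywise sequence of inverses), and $T$ is invertible in $R^\Z[T,T^{-1}] \subset R^\Z((T))$. Hence $\alpha_m T^m$ is a unit in $R^\Z((T))$, with two-sided inverse $T^{-m}\alpha_m^{-1}$; this is checked directly from $\alpha_m \alpha_m^{-1} = \alpha_m^{-1}\alpha_m = 1$ together with $T^m T^{-m} = T^{-m}T^m = 1$.

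Next, I would left-multiply $\Q$ by this inverse. Since multiplication by a fixed element of $R^\Z[T,T^{-1}]$ is continuous with respect to the norm $||\cdot||$, the product $(\alpha_m T^m)^{-1}\Q$ is a well-defined element of $R^\Z((T))$. Writing $\Q = \alpha_m T^m + \sum_{i > m}\alpha_i T^i$ and using the commutation relation $T^{-m}\beta = (T^{-m}\beta)\,T^{-m}$, valid for any $\beta \in R^\Z$ (where $T^{-m}\beta$ on the right denotes the shifted sequence), each higher summand takes the form $(\alpha_m T^m)^{-1}\alpha_i T^i = \gamma_i\, T^{\,i-m}$ for a suitable $\gamma_i \in R^\Z$. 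The leading summand contributes $(\alpha_m T^m)^{-1}\alpha_m T^m = 1$, and since $i > m$ forces $i - m \geq 1$, every remaining term has strictly positive order in $T$. Thus $(\alpha_m T^m)^{-1}\Q = 1 + \Q'$, where $\Q' = \sum_{j \geq 1}\gamma_j' T^j$ is a power series in $T$ without free term.

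Finally, by the previous proposition, $1 + \Q'$ is a unit in $R^\Z((T))$. Therefore $\Q = \alpha_m T^m\,(1 + \Q')$ is a product of two units and is itself a unit, as claimed. No step presents a genuine obstacle; the only point requiring care is the noncommutative bookkeeping, namely choosing to factor on the \emph{left} so that $\alpha_m T^m$ is inverted within $R^\Z[T,T^{-1}]$, and then verifying via the commutation relation that the residual factor is a power series with constant term exactly $1$.
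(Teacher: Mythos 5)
Your proof is correct and follows essentially the same route as the paper: both factor $\Q = \alpha_m T^m(1+\Rem)$ with $\Rem$ a power series without free term and invoke the preceding proposition to conclude that $\Q$ is a product of units. You simply spell out the noncommutative bookkeeping (inverting $\alpha_m T^m$ and commuting $T^{-m}$ past the coefficients) that the paper leaves implicit.
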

\begin{proof}
We have $\Q = \alpha_m T^m (1 + \Rem)$ where $ \Rem$ is a power series in $T$ without a free term. So,~$\Q$ is a product of units and hence a unit. \qedhere
\end{proof}

\begin{remark}

Note that $R^\Z((T))$ is not a subring of $\End_R(R^\Z)$, as the action of a general pseudo-difference operator on a bi-infinite sequence is not well-defined. However, a pseudo-difference operator still defines an endomorphism of the right $R$-module of \emph{eventually vanishing sequences}, i.e., bi-infinite sequences $\alpha_i \in R$ such that $\alpha_i = 0$ for large enough $i$.

\end{remark}

\subsection{Lax representation for pentagram maps over rings}\label{sec:lax}
In this section, we recall the notion of a Lax pair and construct such pairs for pentagram maps over rings. Details on Lax pairs can be found in any textbook on integrable systems, see, e.g., \cite[Section 2.4]{babelon2003introduction}.

Let $X$ be a set, and  $\phi \colon X \dashrightarrow X$ be a dynamical system. We use a dashed arrow to emphasize that the map $\phi$ might not be everywhere defined. For instance, the pentagram map is not defined on the whole set of polygons, since the image of a polygon under that map might not be a polygon. 

Let $S$ be a ring, and $S^\times$ be its group of units.  A  \emph{Lax pair} for $\phi$ is a pair of maps $L \colon X \dashrightarrow S$, $A \colon X \dashrightarrow S^\times$, defined on the domain of $\phi$, such that
$$
L(\phi(x)) = A(x)^{-1}L(x)A(x)
$$
whenever $\phi(x)$ is in the domain of $\phi$. To distinguish meaningful Lax pairs from trivial cases such as $L \equiv 0$, $A \equiv 1$, we say that a Lax pair is \emph{faithful} if the mapping $L$ is injective.

An immediate consequence of the existence of a Lax pair is that functions of the form $f(L(x))$, where $f$ is central (i.e., invariant under conjugation by units) on $S$, are invariants of $\phi$. In many examples, it is possible to show that, for a suitable Poisson bracket on $X$, such invariants constitute a maximal Poisson-commutative family, implying that $\phi$ is an Arnold-Liouville integrable system.

Another framework where Lax pairs play a central role is algebraic integrability. In that setting, $S$ is supposed to be the ring of matrices depending on an additional parameter, known as the \emph{spectral parameter}. In this case, the characteristic polynomial of $\det(L(x) - \lambda \mathbf{1})$ is, for fixed $x$, a function of two variables, $\lambda$ and the spectral parameter. The zero locus of that function is a plane curve, known as the \emph{spectral curve}. The spectral curve is preserved by the dynamical system $\phi$. Moreover, in many cases, one can show that, for a fixed spectral curve, the dynamics of $\phi$ can be identified with linear motion on the Jacobian of the spectral curve, turning $\phi$ into an \emph{algebraic completely integrable system}.

The pentagram map in $\Proj^2(\F)$, where $\F$ is $\R$ or $\C$, is Arnold-Liouville integrable \cite{integrabilityPMOvScTa}; it is also algebraically integrable for any field $\F$ with $\mathrm{char} \, \F \neq 2$ \cite{solovievintegrability, weinreich2023algebraic}. 
A Lax pair with spectral parameter is given, e.g., in \cite[Proposition 4.10]{clusteralgebras}.

In the setting of rings, one cannot expect neither Arnold-Liouville, nor algebraic integrability without further assumptions. Indeed, neither Poisson brackets, nor algebraic curves and their Jacobians make sense in this generality. So,  we take the existence of a Lax pair as a \emph{definition} of integrability, cf. \cite{felipe2015pentagram, higherdimPM2} . In the case of closed polygons, our Lax pair possesses a spectral parameter, see Remark \ref{rem:spectral}. We expect that, under further assumptions, our Lax pair will yield both Arnold-Liouville and algebraic integrability, since over the real and complex numbers it specializes to the standard Lax pair which is known to possess these properties.

The construction of a Lax pair for the pentagram map over a stably finite ring $R$ is as follows. Denote the pentagram map by $\Phi$ and let $\PEC \in \mathfrak P(\Proj^2(R))$ be
 a projective equivalence class of polygons in $\Proj^2(R)$ which is in the domain of $\Phi$ (i.e., the images of polygons in  $\PEC$ under the pentagram map are again polygons). Then $\Phi(\PEC) \in \mathfrak P(\Proj^2(R))^{\mathrm{reg}}$. Let $ \wD \in R^\Z[T]_3^{\mathrm{reg}}$ be an operator associated to $\Phi\left(\PEC\right)$. Then, by Theorem \ref{IPMintermsDO}, the equation  
 \begin{equation}\label{eq:main2}\D_+\wD_-=\D_-\wD_+\end{equation}
 has a solution $\D \in R^\Z[T]_3^{\mathrm{pb}} $, unique up to the left action of $(R^\Z)^\times$,  such that the associated class of polygons coincides with $\PEC$ (note that we have switched the roles of $\D$ and $\wD$). Furthermore, we know that $\D_+$ is a unit in $R^\Z((T))$ by Corollary \ref{cor:inv}. Set \begin{equation}\label{eq:lapair}\mathcal L(\PEC) := \D_+^{{-1}}\D_-, \quad \mathcal A(\PEC) := \wD_+.\end{equation} Note that the definition of $\mathcal L(\PEC)$ is independent on the choice of solution of equation \eqref{eq:main2}. However, both $\mathcal L(\PEC)$ and $\mathcal A(\PEC)$ depend on the choice of $\wD$. Upon replacing $\wD$ by $\alpha \wD \beta$, where $\alpha, \beta \in (R^\Z)^\times$, these operators transform as
 
 \begin{equation}\label{eq:laaction}\mathcal L(\PEC)\mapsto \alpha \mathcal L(\PEC) \alpha^{-1}, \quad
 \mathcal A(\PEC) \mapsto \alpha  \mathcal A(\PEC) \beta.\end{equation}

\begin{prop}\label{prop:lax}
The maps $\mathcal L, \mathcal A$, defined by \eqref{eq:lapair} up to the action \eqref{eq:laaction}, 
constitute a faithful Lax pair for the pentagram map in $\Proj^2(R)$.
\end{prop}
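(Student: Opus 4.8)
The plan is to check the three things a faithful Lax pair requires: that $\mathcal L,\mathcal A$ are well defined as maps into $S:=R^\Z((T))$ and $S^\times$ modulo the ambiguity \eqref{eq:laaction}, that the conjugation identity holds, and that $\mathcal L$ is injective. The well-definedness up to \eqref{eq:laaction} is already contained in the discussion preceding the statement; I would only add that $\mathcal A(\PEC)=\wD_+$ has a unit as its lowest (degree-zero) coefficient, since $\wD$ is properly bounded, so by Corollary \ref{cor:inv} it genuinely lands in $S^\times$.

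The heart of the matter is the conjugation identity. Recall from \eqref{eq:lapair} that the value of $\mathcal L$ on a class is of the form $\D_+^{-1}\D_-$, where $\D$ is any difference operator representing that class; this expression is unchanged by left units on $\D$ and conjugated by right units, so by Theorem \ref{DOcorresPOLY} it defines a conjugacy class depending only on the class of polygons. Since $\wD$ represents $\Phi(\PEC)$, we obtain $\mathcal L(\Phi(\PEC))=\wD_+^{-1}\wD_-$ up to conjugation by $(R^\Z)^\times$. On the other hand,
\[
\mathcal A(\PEC)^{-1}\mathcal L(\PEC)\mathcal A(\PEC)=\wD_+^{-1}\,\D_+^{-1}\D_-\,\wD_+ ,
\]
and the defining relation \eqref{eq:main2}, namely $\D_+\wD_-=\D_-\wD_+$, rewrites the middle factor as $\D_+^{-1}\D_-\wD_+=\D_+^{-1}\D_+\wD_-=\wD_-$. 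Hence the right-hand side equals $\wD_+^{-1}\wD_-$, which matches $\mathcal L(\Phi(\PEC))$. I would then check that both sides transform compatibly under \eqref{eq:laaction}, so that the identity holds for suitable representatives: the residual conjugation ambiguity on the right is absorbed into that of $\mathcal L(\Phi(\PEC))$.

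For faithfulness I would use Theorem \ref{DOcorresPOLY} to reduce the injectivity of $\mathcal L$ to the claim that the left-right orbit of $\D$ is recovered from the conjugacy class of $\D_+^{-1}\D_-$. Absorbing the conjugation into a right unit, suppose $\D,\D'$ are properly bounded of degree three with $\D_+^{-1}\D_-=(\D'_+)^{-1}\D'_-$. Setting $g:=\D'_+\D_+^{-1}\in S$, a power series in nonnegative powers of $T$ with unit constant term, one reads off $g\D_+=\D'_+$ and, from equality of the two ratios, $g\D_-=\D'_-$, whence $g\D=\D'$. Everything then comes down to showing that $g$ is a degree-zero unit, for then $[\D']=[\D]$ and injectivity follows. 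Comparing coefficients, the two requirements that $g\D_+$ have degree $\le 1$ and that $g\D_-$ be supported in degrees $\{2,3\}$ translate into a forward and a backward recursion on the coefficients $g_i$, in which the unit lowest coefficient of $\D_+$ and the unit top coefficient of $\D_-$ are used to solve for successive $g_i$.

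Collapsing this gauge $g$ to degree zero is, I expect, the genuine obstacle, and the place where the ring hypotheses must enter. Over a field the two recursions instantly force $g_i=0$ for $i\ge 1$; but with zero divisors and without commutativity the combined recursion only produces identities of the shape $g_i(1-w_i)=0$ with $w_i\in R^\Z$, so additional input is required to conclude $g_i=0$ — for instance proper boundedness of $\D'$ together with stable finiteness, or the fact that $\D,\D'$ actually come from polygons. Making this reconstruction step rigorous in the general stably finite setting is the main difficulty; by contrast, the conjugation identity is, as above, a short and essentially formal consequence of \eqref{eq:main2}.
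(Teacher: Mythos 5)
Your verification of the conjugation identity is correct and coincides with the paper's argument: from $\D_+\wD_-=\D_-\wD_+$ one gets $\D_+^{-1}\D_-=\wD_-\wD_+^{-1}$, hence $\wD_+^{-1}\bigl(\D_+^{-1}\D_-\bigr)\wD_+=\wD_+^{-1}\wD_-$, which is $\mathcal L(\Phi(\PEC))$ up to the residual conjugation ambiguity. The problem is faithfulness, where you propose a direct gauge argument (set $g:=\D'_+\D_+^{-1}$, show $g\D=\D'$, then try to force $g$ to have degree zero) and you yourself flag that you cannot complete the last step: the coefficient recursions only yield relations of the form $g_i(1-w_i)=0$, which over a ring with zero divisors do not give $g_i=0$. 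That is a genuine gap, not a routine verification left to the reader; as stated, your proof of injectivity of $\mathcal L$ is incomplete, and I do not see how to close it along the route you chose without importing essentially the full strength of the correspondence you are trying to avoid.

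The paper closes exactly this gap by \emph{not} comparing $\D$ and $\tilde\D$ directly. Instead it uses the identity $\D_+^{-1}\D_-=\wD_-\wD_+^{-1}$ to clear denominators against $\wD$ rather than against $\D$: if $\tilde\D_+^{-1}\tilde\D_-=\D_+^{-1}\D_-=\wD_-\wD_+^{-1}$, then $\tilde\D_+\wD_-=\tilde\D_-\wD_+$, i.e.\ $\tilde\D$ satisfies the same \emph{linear} difference-operator relation with $\wD$ that $\D$ does. Theorem \ref{IPMintermsDO} together with Proposition \ref{propRelationonQuotientDO} (uniqueness, up to the left $(R^\Z)^\times$-action, of the operator related to the regular operator $\wD$ --- which is where stable finiteness and the kernel-dimension arguments have already been spent) then forces $\tilde\D$ to represent $\Psi(\Phi(\PEC))=\PEC$. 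In other words, the ``reconstruction step'' you identify as the main difficulty is precisely the content of the already-proved uniqueness part of the difference-operator/polygon correspondence, and the trick is to reduce to it by multiplying through by $\wD_+$ and $\wD_-$ instead of trying to show that the pseudo-difference gauge $g$ collapses to degree zero. Your instinct that ``the fact that $\D,\D'$ actually come from polygons'' must be used is the right one; you just need to route the argument through $\wD$ to make it effective.
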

\begin{proof}

From \eqref{eq:main2}, we get
\begin{equation}\label{Laxstep1}
\mathcal L(\PEC) = \D_+^{-1}\D_- = \wD_- \wD_+^{-1}.
\end{equation}
So,
$$
\mathcal A(\PEC)^{-1}\mathcal L(\PEC) \mathcal A(\PEC) = \wD_+^{-1}\wD_- ,
$$
which, up to conjugation action of $(R^\Z)^{\times}$, is equal to $\mathcal L(\Phi(\PEC))$, as needed. To prove that this Lax pair is faithful, we need to show that if 
\begin{equation}\label{Laxstep3}
\D_+^{-1}\D_- = \tilde \D_+^{-1}\tilde\D_-
\end{equation}
for some $\tilde \D \in R^\Z[T]_3^{\mathrm{pb}} $, then $\tilde \D$ represents the same class of polygons $\PEC$ as $\D$. Indeed, from \eqref{Laxstep1}  and \eqref{Laxstep3}, we get
$$
 \tilde \D_+^{-1}\tilde\D_- =  \wD_- \wD_+^{-1},
$$ so
$$\tilde \D_+\wD_-= \tilde\D_-\wD_+.$$
By Theorem~\ref{IPMintermsDO}, the latter means that the class of polygons represented by $\tilde \D$ is the image of $\Phi(\PEC)$ under the inverse pentagram map, i.e., $\PEC$.\qedhere

\end{proof}

We note that, when $R$ is a field, this Lax representation becomes the one given in \cite{clusteralgebras, izopentagramandrefactorization}.

\section{Invariants of pentagram maps on closed polygons} \label{InvariantsGPM}
Here we use the Lax form found in the previous section to give invariants (i.e., conserved quantities, or \emph{first integrals}) of the pentagram map on closed polygons (i.e., polygons \ai{given by periodic sequences $P_i \in \Proj^2(R)$}). In the case of the pentagram map over real numbers, these invariants (in a slightly more general setting of \emph{twisted} polygons) are known to form a maximal Poisson-commuting  family \cite[Theorem 1]{integrabilityPMOvScTa}. We expect that similar results can be obtained for much bigger classes of rings. 

Throughout this section, the ground ring $R$ is assumed to be stably finite.

\subsection{Closed polygons and super-periodic difference operators}
\begin{definition} \label{defSuperPeriodicDO}
A difference/pseudo-difference operator $\D = \sum \alpha_iT^i$ is said to be \emph{periodic}, with period $n$, if all its coefficients $\alpha_i$ are $n$-periodic sequences. Equivalently, $\D$ is $n$-periodic if $\D T^n = T^n \D$. 
    An $n$-periodic difference operator is said to be \textit{super-periodic} if its kernel consists entirely of $n$-periodic sequences\ai{, cf. \cite{krichever2015commuting}}.
\end{definition}
This class of difference operators is naturally acted upon the left and right by the subgroup of periodic bi-infinite sequences of units.
\begin{prop} \label{propSuperDOperiodic}
    Let $\D$ be a super-periodic left properly bounded difference operator with period $n$. Consider also $\alpha, \beta \in (R^\Z)^{\times}$. Then, $\alpha \D \beta $ is super-periodic with period $n$ if and only if $\alpha$ and $\beta$ are $n$-periodic. 
\end{prop}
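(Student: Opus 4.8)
The plan is to treat the two implications separately, using throughout the commutation characterization of periodicity from Definition \ref{defSuperPeriodicDO} (a sequence or operator is $n$-periodic if and only if it commutes with $T^n$), together with the observation that, since $\alpha$ is a sequence of units, $\Ker(\alpha \D \beta) = \Ker(\D \beta) = \beta^{-1}\Ker\D$, where $\beta^{-1}$ acts by pointwise left multiplication and gives a right $R$-module isomorphism $\Ker \D \to \Ker(\alpha\D\beta)$. For a sequence $\gamma$ I write $\gamma^{(n)}$ for the shifted sequence $\gamma^{(n)}_j := \gamma_{j+n}$, so that $T^n\gamma = \gamma^{(n)}T^n$ and $\gamma$ is $n$-periodic precisely when $\gamma = \gamma^{(n)}$.

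The \emph{if} direction is immediate: if $\alpha,\beta$ are $n$-periodic then they commute with $T^n$, and since $\D$ does too, so does $\alpha\D\beta$; moreover $\alpha\D\beta$ is properly bounded of degree $d$, since its extreme coefficients $\alpha\alpha_0\beta$ and $\alpha\alpha_d\beta^{(d)}$ are products of a unit, a left-invertible element, and a unit. For super-periodicity, any $a \in \Ker(\alpha\D\beta)$ has the form $\beta^{-1}b$ with $b \in \Ker\D$ necessarily $n$-periodic (as $\D$ is super-periodic); since $\beta^{-1}$ is $n$-periodic as well, the product $\beta^{-1}b$ is $n$-periodic.

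For the \emph{only if} direction I would first show that $\beta$ is $n$-periodic, using the kernel description, which is independent of $\alpha$. Fix a basis $a^{(1)},\dots,a^{(d)}$ of the free rank $d$ module $\Ker\D$; since $\D$ is super-periodic, each $a^{(k)}$ is $n$-periodic. Because $\alpha\D\beta$ is super-periodic, each $\beta^{-1}a^{(k)} \in \Ker(\alpha\D\beta)$ is $n$-periodic, which, using $n$-periodicity of $a^{(k)}$, unwinds to $(\beta^{-1}_{j+n} - \beta^{-1}_j)\,a^{(k)}_j = 0$ for all $j,k$. Now the evaluation functional $\eps_{\D,j}\colon \Ker\D \to R$, $a \mapsto a_j$, is surjective, being the composition of the coordinate isomorphism $\Ker\D \to R^d$ of Proposition \ref{propKernelDim} with a coordinate projection; hence $\sum_k a^{(k)}_j R = R$, so $1 = \sum_k a^{(k)}_j r_k$ for some $r_k \in R$, and multiplying each relation on the right by $r_k$ and summing gives $\beta^{-1}_{j+n} = \beta^{-1}_j$. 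Thus $\beta$ is $n$-periodic.

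With $\beta$ now $n$-periodic, $\mathcal{E} := \D\beta$ is an $n$-periodic properly bounded operator of degree $d$, and $\alpha\D\beta = \alpha\mathcal{E}$ commutes with $T^n$. Since $\mathcal{E}$ also commutes with $T^n$, cancelling the (invertible) operator $T^n$ in $\alpha\mathcal{E}T^n = T^n\alpha\mathcal{E} = \alpha^{(n)}\mathcal{E}T^n$ yields $(\alpha - \alpha^{(n)})\mathcal{E} = 0$; comparing constant coefficients gives $(\alpha_j - \alpha^{(n)}_j)(e_0)_j = 0$, where $e_0 = \alpha_0\beta$ is the trailing coefficient of $\mathcal{E}$. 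This is the one place where the hypothesis on $R$ is essential: $(e_0)_j$ is left-invertible, and a stably finite ring is Dedekind-finite (Proposition \ref{epiisiso}), so $(e_0)_j$ is in fact a unit; cancelling it gives $\alpha_j = \alpha_{j+n}$, i.e.\ $\alpha$ is $n$-periodic. I expect the main obstacle to be exactly this interplay of non-commutativity with the fact that \emph{properly bounded} only guarantees one-sided invertibility of the extreme coefficients: the $\beta$-step must be routed through surjectivity of the evaluation maps rather than naive cancellation, and the $\alpha$-step crucially upgrades left-invertibility to genuine invertibility via Dedekind-finiteness.
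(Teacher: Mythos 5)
Your proof is correct, and its overall strategy --- first extract periodicity of $\beta$ from the kernels, then deduce periodicity of $\alpha$ from the operator identity --- is the same as the paper's. The $\beta$-step is the paper's argument in only slightly different clothing: the paper picks a single $a \in \Ker(\alpha\D\beta)$ with $a_i = 1$ (using the isomorphism $\Ker(\alpha\D\beta) \simeq R^d$ of Proposition \ref{propKernelDim}) and reads off $\beta_i = \beta_{i+n}$ from $n$-periodicity of $\beta a \in \Ker\D$, whereas you run the same surjectivity-of-evaluation fact through a basis of $\Ker\D$ and the sequence $\beta^{-1}$; both are fine, the paper's being marginally shorter. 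The $\alpha$-step is where you genuinely diverge: the paper passes to the ring $R^\Z((T))$ of pseudo-difference operators and writes $\alpha = (\alpha\D\beta)(\D\beta)^{-1}$ as a product of $n$-periodic units (invoking Corollary \ref{cor:inv}), while you stay inside $\End_R(R^\Z)$, derive $(\alpha - \alpha^{(n)})\mathcal{E} = 0$ by commuting with $T^n$, and cancel the trailing coefficient. Your route is more elementary (no completion needed) and has the virtue of making explicit the point that the paper leaves implicit in this step: \emph{properly bounded} only gives left-invertibility of the extreme coefficients, and one needs Dedekind-finiteness of the stably finite ring $R$ (Proposition \ref{epiisiso}) to upgrade this to genuine invertibility --- the paper relies on the same upgrade when it applies Corollary \ref{cor:inv} to $\D\beta$, whose hypothesis asks for a unit rather than a left-invertible element.
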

\begin{proof}
    Suppose first that $\alpha$ and $\beta$ are $n$-periodic. Then
    $$
     \Ker \alpha \D \beta  = \beta^{-1}\Ker \D,
    $$
    so all elements of $\Ker \alpha \D \beta $ are $n$-periodic. 
    
    Conversely, suppose that that $\alpha\D \beta$ is super-periodic with period $n$. Fix $i \in \Z$, and choose $a \in  \Ker \alpha\D \beta $ such that $a_i = 1$. Since $ \alpha\D \beta$ is super-periodic with period $n$, we also have $a_{i+n} = 1$. Further, since $a \in  \Ker \alpha\D \beta $, we have $\beta  a \in \K$. Since $\D$ is super-periodic with period $n$, this gives $\beta_i a_i = \beta_{i+n} a_{i+n}$, so, $\beta_{i+n} = \beta_i$. Additionally, given that $i$ was arbitrary, this means that $\beta$ is $n$-periodic. But then $\alpha$ is a product of $n$-periodic pseudo-difference operators:
  $$
  \alpha =  \alpha\D \beta (\D \beta)^{-1},
  $$
  and hence $n$-periodic. 
\end{proof}
This gives the following periodic version of Theorem \ref{DOcorresPOLY}.
\begin{cor} \label{corDOclosed}
Let $R$ be a stably finite ring. Then, there is a natural one-to-one correspondence between the following sets:
\begin{enumerate}
    \item Projective equivalence classes of closed $n$-gons in the left  projective space $\Proj^{d-1}(R)$.
    \item Properly bounded super-periodic left difference operators of degree $d$ and period $n$ with coefficients in $R$, up to multiplication on the left and right by $n$-periodic sequences of units in $R$.
\end{enumerate}
\end{cor}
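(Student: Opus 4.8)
The plan is to deduce this periodic refinement directly from the bijection $\varphi$ of Theorem \ref{DOcorresPOLY}, together with Proposition \ref{propSuperDOperiodic}, by checking that $\varphi$ carries the subclass of super-periodic operators onto the subclass of closed $n$-gons and that it matches the two equivalence relations. Concretely, I would establish three facts: (a) if $\D$ is super-periodic of period $n$, then its associated polygon is a closed $n$-gon; (b) every closed $n$-gon arises from a super-periodic operator of period $n$; and (c) two super-periodic operators yield projectively equivalent polygons if and only if they differ by multiplication on the left and right by $n$-periodic sequences of units. Facts (a) and (b) show the two subclasses correspond under $\bar\varphi$, and fact (c) promotes this to a bijection of the respective quotients.

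Fact (a) is immediate: if $\D$ is super-periodic then every $a\in\K$ is $n$-periodic, so the functionals $\eps_{\D,i}\in(\K)^*$ defined by $\eps_{\D,i}(a)=a_i$ satisfy $\eps_{\D,i+n}=\eps_{\D,i}$; hence the vertices $R\eps_{\D,i}$ of the polygon $\bar\varphi(\D)$ (which is a polygon by Corollary \ref{alphabasis}) repeat with period $n$, i.e.\ form a closed $n$-gon. For the nontrivial direction of fact (c), suppose $\D,\D'$ are super-periodic of period $n$ with $\bar\varphi(\D)$ and $\bar\varphi(\D')$ projectively equivalent. Theorem \ref{DOcorresPOLY} then gives $\D'=\alpha\D\beta$ for some $\alpha,\beta\in(R^\Z)^{\times}$; since both $\D$ and $\D'=\alpha\D\beta$ are super-periodic of period $n$, Proposition \ref{propSuperDOperiodic} forces $\alpha$ and $\beta$ to be $n$-periodic. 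The converse direction of (c) is the well-definedness check already performed for $\varphi$ in Section \ref{sec:cor}, combined with the easy implication of Proposition \ref{propSuperDOperiodic} (multiplying a super-periodic operator by $n$-periodic units keeps it super-periodic).

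The only step requiring genuine work is fact (b), surjectivity. Given a closed $n$-gon $(P_i)$ with $P_{i+n}=P_i$, I would first produce $n$-periodic unimodular representatives. Starting from arbitrary representatives $v_i$ of $P_i$, the equality $Rv_{i+n}=Rv_i$ yields units $c_i$ with $v_{i+n}=c_iv_i$; solving the recursion $u_{i+n}=u_ic_i^{-1}$ with $u_0,\dots,u_{n-1}:=1$ produces units $u_i$ for which $v_i':=u_iv_i$ satisfies $v_{i+n}'=v_i'$. Replacing $v_i$ by $v_i'$, I may assume the representatives are literally $n$-periodic. Applying Proposition \ref{prop:ri} to this $n$-periodic sequence gives a properly bounded operator $\D_v$ of degree $d$ with $\D_v v=0$; uniqueness of the coordinates of $v_{i+d}$ relative to the basis $v_i,\dots,v_{i+d-1}$ shows the defining coefficients are $n$-periodic, so $\D_v$ is $n$-periodic. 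Moreover $\Ker\D_v=\mathrm{Span}\,v$ by Proposition \ref{wronskian}, and $\mathrm{Span}\,v$ consists of the sequences $(\xi(v_i))_{i\in\Z}$ with $\xi\in M^*$, all of which are $n$-periodic; hence $\D_v$ is super-periodic, and $\bar\varphi(\D_v)$ represents the class of $(P_i)$ by the construction of Section \ref{sec:cor}.

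Taken together, (a)--(c) show that $\varphi$ restricts to a bijection between super-periodic operators modulo $n$-periodic units and closed $n$-gons modulo projective equivalence. The main obstacle, such as it is, is not a single hard estimate but the bookkeeping in (c): one must rule out that passing between two super-periodic operators could require genuinely aperiodic unit factors, and it is precisely Proposition \ref{propSuperDOperiodic} that closes this gap. The construction of $n$-periodic representatives in (b) is the other point deserving care, since the representatives furnished by Theorem \ref{DOcorresPOLY} need not be periodic on the nose.
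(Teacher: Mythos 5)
Your proposal is correct and follows essentially the same route as the paper: restrict the bijection of Theorem \ref{DOcorresPOLY}, observe that super-periodicity of $\D$ makes $\eps_{\D}$ an $n$-periodic sequence, produce $n$-periodic spanning vectors for a closed $n$-gon to get a super-periodic operator via Proposition \ref{prop:ri}, and invoke Proposition \ref{propSuperDOperiodic} to match the equivalence relations. The only difference is that you spell out the recursion producing literally $n$-periodic representatives $v_i' = u_i v_i$, a step the paper asserts without detail.
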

\begin{proof} 
If $\D$ is a super-periodic operator with period $n$, then $\eps_{\D}$ is an $n$-periodic sequence in $(\Ker \D)^*$. So, the mapping $\bar \phi \colon R^\Z[T]_d^{\mathrm{pb}} \to \mathfrak P(\Proj^{d-1}(R))$ takes super-periodic operator\ai{s} with period $n$ to closed polygons with the same period. Conversely, given a closed $n$-gon $(P_i)$, we can choose spanning vectors $v_i \in P_i$ so that $v_{i+n} = v_i$, and then the operator $\D$ such that $\D v = 0$ can be taken to be $n$-periodic. So, the set of properly bounded super-periodic left difference operators of degree $d$ and period $n$ \ai{is mapped onto} the set  of projective equivalence classes of closed $n$-gons . Furthermore, by Proposition \ref{propSuperDOperiodic} preimages of every class of polygons under this mapping are precisely the orbits of the left-right action of $n$-periodic sequences of units.\qedhere

\end{proof}
One also has a periodic version of Theorem \ref{IPMintermsDO}. In particular, the diagram \eqref{cd} still commutes, assuming that we replace all objects with their periodic counterparts. As a result, to any projective equivalence class of closed $n$-gons we can assign an $n$-periodic pseudo-difference operator $\Q := \D_+^{-1}\D_-$, defined up to conjugation by $n$-periodic sequences of units. Upon application of the pentagram map, the operator $\Q$ changes to one that is conjugate to $\Q$ within the ring of $n$-periodic pseudo-difference operators.

\subsection{The trace of a periodic pseudo-difference operator} \label{invariantInnerProduct}

For a ring $R$, consider the additive subgroup $[R,R] \subset R$ generated over $\Z$ by all commutators $[a,b]:=ab - ba$. The Abelian group $R / [R,R]$ is known as the \emph{cyclic space} of $R$, and the quotient map $R \to R/[R,R]$ as the \emph{universal trace}.

\begin{definition} \label{defTrace}
    For an $n$-periodic left pseudo-difference operator  $\Q = \sum \alpha_iT^i \in R^\Z((T))$, its trace $\Tr\, \Q$ is an element of the cyclic space $R / [R,R]$ defined by
$$
\Tr\, \Q := \sum_{i=1}^n \alpha_{0,i},
$$
where the right-hand side is understood modulo $[R,R]$.
\end{definition}
Thus, the trace is a homomorphism of Abelian groups $\Tr \colon R^\Z((T)) \to R / [R,R]$. The following shows it is indeed a trace (that is, it satisfies the cyclic property).

\begin{lemma}\label{trinv}
   Let $\Q$ and $\wQ$ be two $n$-periodic left pseudo-difference operators. Then 
    $$
    \Tr\, \Q\wQ = \Tr\, \wQ\Q.
    $$

\end{lemma}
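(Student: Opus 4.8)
The plan is to reduce the trace of a product to its coefficient of $T^0$ and then exploit $n$-periodicity to cyclically rearrange a finite sum. Write $\Q = \sum_i \alpha_i T^i$ and $\wQ = \sum_j \beta_j T^j$ with $\alpha_i,\beta_j \in R^\Z$ all $n$-periodic. The one computational fact needed is the commutation rule $T^i \beta_j = \beta_j^{(i)} T^i$, where $\beta_j^{(i)}$ is the shifted sequence $(\beta_j^{(i)})_l := \beta_{j,\,l+i}$; this is immediate from $(Ta)_l = a_{l+1}$. A preliminary observation worth recording is that, although $\Q$ and $\wQ$ are genuinely infinite series, only finitely many terms contribute to the coefficient of $T^0$ in a product: a term $\alpha_i T^i \beta_j T^j$ lands in degree $T^0$ only when $i + j = 0$, and with $i,j$ bounded below this leaves finitely many indices. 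Hence all the sums below are finite and the trace is well-defined.

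First I would compute the $T^0$-coefficient of each product. Using the commutation rule, $\Q\wQ = \sum_{i,j} \alpha_i \beta_j^{(i)} T^{i+j}$, so its $T^0$-coefficient is $\sum_i \alpha_i \beta_{-i}^{(i)}$, whose $l$-th entry is $\sum_i \alpha_{i,l}\,\beta_{-i,\,l+i}$. Taking the trace gives
$$\Tr\,\Q\wQ = \sum_{l=1}^n \sum_i \alpha_{i,l}\,\beta_{-i,\,l+i} \pmod{[R,R]},$$
and, swapping the roles of $\Q$ and $\wQ$ and reindexing the outer sum by $j = -i$,
$$\Tr\,\wQ\Q = \sum_{l=1}^n \sum_i \beta_{-i,\,l}\,\alpha_{i,\,l-i} \pmod{[R,R]}.$$
The crucial step is then to reindex the inner (entry) sum in the second expression by the cyclic shift $l \mapsto l + i$. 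Because every sequence in sight is $n$-periodic, the product sequence $l \mapsto \beta_{-i,\,l}\,\alpha_{i,\,l-i}$ is $n$-periodic, and summing an $n$-periodic sequence over any complete residue system mod $n$ yields the same value; thus $\sum_{l=1}^n \beta_{-i,\,l}\,\alpha_{i,\,l-i} = \sum_{l=1}^n \beta_{-i,\,l+i}\,\alpha_{i,\,l}$.

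Combining these, I would conclude
$$\Tr\,\Q\wQ - \Tr\,\wQ\Q = \sum_{l=1}^n \sum_i \bigl(\alpha_{i,l}\,\beta_{-i,\,l+i} - \beta_{-i,\,l+i}\,\alpha_{i,l}\bigr) = \sum_{l=1}^n \sum_i [\alpha_{i,l},\,\beta_{-i,\,l+i}],$$
which lies in $[R,R]$ and therefore vanishes in the cyclic space $R/[R,R]$. The main obstacle is purely bookkeeping: keeping the degree index $i$ and the entry index $l$ separate, correctly applying the shift rule to turn operator products into honest sums of shifted coefficient sequences, and justifying the cyclic reindexing of the entry sum by periodicity (this is exactly where periodicity of $\Q,\wQ$ is used, and is what makes the trace genuinely cyclic rather than merely additive). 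No deeper structural input about $R$ is needed; the argument works for any ring, relying only on the definition of $[R,R]$ as the additive span of commutators.
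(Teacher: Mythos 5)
Your proof is correct and takes essentially the same approach as the paper: compute the $T^0$-coefficient of each product, cyclically reindex the entry sum using $n$-periodicity, and observe that the difference is a sum of commutators. The only organizational difference is that the paper first reduces to monomials $\alpha T^k$, $\beta T^{-k}$ via bilinearity (and continuity of the trace), whereas you track the full double sum directly and justify its finiteness by hand; the core computation is identical.
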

\begin{proof}
We need to show that  $\Tr ( \Q\wQ - \wQ\Q) = 0$. Note that the left-hand side is bilinear over~$\Z$. Furthermore, if 
we endow the cyclic space with discrete topology, then the trace function becomes continuous, so the operation $\Q, \wQ \mapsto \Tr ( \Q\wQ - \wQ\Q)$ distributes over infinite sums. Hence, it suffices to verify the statement in the case $\Q = \alpha T^k$, $\wQ = \beta T^{\ai{l}}$, where $\alpha, \beta \in R^\Z$ are $n$-periodic. Clearly, $\Tr \, \Q\wQ =\Tr \, \wQ\Q= 0$ if $ l \neq -k$, so suppose $l = -k$. Then
$$
\Tr \, \Q\wQ = \Tr\, \alpha T^k \beta T^{-k} = \sum_{i=1}^n \alpha_{i}\beta_{i+k}.
$$
while
$$
\Tr \, \wQ\Q = \Tr\, \beta T^{-k} \alpha T^k  = \sum_{i=1}^n \beta_{i}\alpha_{i-k}.
$$
Modulo $[R,R]$, these two expressions are equal. \qedhere
\end{proof}

\subsection{Invariants of the pentagram map}

By the above, to any projective equivalence class $\PEC$ of closed $n$-gons we can assign an $n$-periodic pseudo-difference operator $\mathcal L(\PEC) := \D_+^{-1}\D_-$, defined up to conjugation by $n$-periodic bi-infinite sequences of units. Upon application of the pentagram map, the operator $\mathcal L(\PEC) $ changes to a conjugate one. Now, consider the functions on the space $\mathfrak P(\Proj^2(R))$ of projective equivalence classes of polygons defined by
\begin{equation} \label{eqInvariantFu}
    f_{ij}(\PEC) := \Tr(T^{in}\mathcal{L}(\PEC)^j), \hspace{1cm} i \in \Z, j \in  \Z_{\ge 0},
\end{equation}
where $\mathcal L$ is defined as described above.

\begin{prop}
    \begin{enumerate}
        \item The functions $f_{ij}$ are independent of the choice of $\mathcal L(\PEC)$ and, hence, are well-defined functions on  $\mathfrak P(\Proj^2(R))$ valued in the cyclic space $R/[R,R]$.
        \item The functions $f_{ij}$ are invariant under the pentagram map. 
        \end{enumerate}    
\end{prop}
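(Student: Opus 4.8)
The plan is to reduce both parts to a single observation: the trace $\Tr$ is unchanged when its argument is conjugated by any $n$-periodic unit of $R^\Z((T))$. Granting this, part (1) follows because $\mathcal{L}(\PEC)$ is defined only up to conjugation by $n$-periodic sequences of units, which are a special case of $n$-periodic units; and part (2) follows because, by the Lax pair property of Proposition \ref{prop:lax}, applying the pentagram map $\Phi$ replaces $\mathcal{L}(\PEC)$ by its conjugate $\mathcal{A}(\PEC)^{-1}\mathcal{L}(\PEC)\mathcal{A}(\PEC)$, where $\mathcal{A}(\PEC) = \wD_+$ is itself an $n$-periodic unit.

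The first step is to record why the operators involved are $n$-periodic units. In the closed $n$-gon setting every operator in play is $n$-periodic, so $\wD$, and hence $\wD_+$, commutes with $T^n$; since $\wD$ is properly bounded, the lowest-degree coefficient of $\wD_+$ is a sequence of units, so $\wD_+$ is invertible in $R^\Z((T))$ by Corollary \ref{cor:inv}. The same reasoning shows $\mathcal{L}(\PEC) = \D_+^{-1}\D_-$ is $n$-periodic, and conjugating the relation $T^n = \mathcal{A}(\PEC)^{-1}T^n\mathcal{A}(\PEC)$ shows $\mathcal{A}(\PEC)^{-1}$ is $n$-periodic as well.

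The heart of the argument is then the conjugation-invariance of the trace. Let $\mathcal{U}$ be any $n$-periodic unit and $\mathcal{L}$ any $n$-periodic pseudo-difference operator. Since $(\mathcal{U}\mathcal{L}\mathcal{U}^{-1})^j = \mathcal{U}\mathcal{L}^j\mathcal{U}^{-1}$ and since $\mathcal{U}$, being $n$-periodic, commutes with $T^{in} = (T^n)^i$ (the criterion $\D T^n = T^n \D$ of Definition \ref{defSuperPeriodicDO}), I would compute
\begin{equation*}
\Tr\bigl(T^{in}(\mathcal{U}\mathcal{L}\mathcal{U}^{-1})^j\bigr) = \Tr\bigl(T^{in}\mathcal{U}\mathcal{L}^j\mathcal{U}^{-1}\bigr) = \Tr\bigl(\mathcal{U}\,T^{in}\mathcal{L}^j\mathcal{U}^{-1}\bigr) = \Tr\bigl(T^{in}\mathcal{L}^j\mathcal{U}^{-1}\mathcal{U}\bigr) = \Tr\bigl(T^{in}\mathcal{L}^j\bigr),
\end{equation*}
where the third equality is the cyclic property of Lemma \ref{trinv} applied to $\mathcal{U}$ and $T^{in}\mathcal{L}^j\mathcal{U}^{-1}$ (both $n$-periodic). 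Taking $\mathcal{U} = \alpha$ an $n$-periodic sequence of units gives part (1). For part (2), part (1) first lets me evaluate $f_{ij}(\Phi(\PEC))$ on the representative $\mathcal{A}(\PEC)^{-1}\mathcal{L}(\PEC)\mathcal{A}(\PEC)$ of $\mathcal{L}(\Phi(\PEC))$ furnished by Proposition \ref{prop:lax}; the displayed identity with $\mathcal{U} = \mathcal{A}(\PEC)^{-1}$ then yields $f_{ij}(\Phi(\PEC)) = f_{ij}(\PEC)$.

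I do not anticipate a genuine obstacle here: once Lemma \ref{trinv} is available for $n$-periodic pseudo-difference operators, everything is a one-line cyclic-trace manipulation. The only points requiring care are bookkeeping ones — confirming that $T^{in}$ really commutes with the conjugating operator (exactly $n$-periodicity) and that $\mathcal{A}(\PEC)$ is a unit, so that $\mathcal{A}(\PEC)^{-1}$ lives in $R^\Z((T))$ and Lemma \ref{trinv} applies. I would also note in passing that each $f_{ij}$ is well-defined as a finite expression: $\mathcal{L}(\PEC)$ has lowest order $T^2$, so $T^{in}\mathcal{L}^j$ has a well-defined $T^0$-coefficient, whose $n$ fundamental entries are summed modulo $[R,R]$ to form the trace.
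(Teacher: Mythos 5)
Your proposal is correct and follows essentially the same route as the paper: both reduce parts (1) and (2) to the single claim that $\Q \mapsto \Tr(T^{in}\Q^j)$ is invariant under conjugation by $n$-periodic units, and both prove this by the same three-step manipulation (pull the conjugation outside the power, commute $T^{in}$ past the periodic conjugator, then apply the cyclic property of Lemma \ref{trinv}). The extra bookkeeping you include (invertibility and periodicity of $\wD_+$, well-definedness of the $T^0$-coefficient) is sound but not a substantive departure.
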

\begin{proof}
Since $\mathcal L(\PEC)$ is defined up to conjugation, and the action of the pentagram map is also given by conjugation, it suffices to show that the functions $\Q \mapsto  \Tr(T^{in}\mathcal{\Q}^j)$ on $n$-periodic pseudo-difference operators are conjugation-invariant ({central}). Indeed,
$$
 \Tr(T^{in}(\wQ \Q \wQ^{-1})^j) =  \Tr(T^{in}\wQ \Q^j \wQ^{-1}) =  \Tr(\wQ T^{in} \Q^j \wQ^{-1}) =  \Tr(T^{in}\Q^j),
$$
where in the second equality we used periodicity of $\wQ$, and in the last equality Lemma \ref{trinv}. 
\end{proof}

\begin{example}
If $R = \Mat_k(\F)$, then $R/[R,R] \simeq \F$, so the invariants $f_{ij}$ are valued in the ground field $\F$. 
\end{example}

\begin{example}
The ring $R = \DN$ of dual numbers is commutative, so  $[R,R] = 0$, and the invariants $f_{ij}$ are valued in the ring $\DN$ itself. The real parts of these invariants can be understood as follows.  There is a natural projection $\pi \colon \Proj^2(\DN) \to \Proj^2(\R)$ which sends every affine line to its direction. The skewer pentagram map and the ordinary pentagram map are semi-conjugate via this projection (or, more precisely, via the induced projection on polygon spaces). In particular, $\pi$-lifts of the invariants of the usual pentagram map are invariants of the skewer map. These $\pi$-lifts are precisely the real parts of $f_{ij}$ (this easily follows from the fact that the real part function $\Re \colon \DN \to \R$ is a homomorphism of rings and hence takes all objects associated with the skewer map to the corresponding objects for the pentagram map over the reals). The non-real parts of $f_{ij}$ are additional invariants which do not come from the pentagram map in $\Proj^2(\R)$.
\end{example}

\begin{remark}\label{rem:spectral}
The ring of $n$-periodic pseudo-difference operators over $R$ is isomorphic to the ring $\Mat_n(R)((z))$ of formal Laurent series over $\Mat_n(R)$, cf. \cite[Remark 3.8]{izopentagramandrefactorization}. The isomorphism is given by
$$
\sum_{i=m}^{\infty}\alpha_iT^i \mapsto \sum_{i=m}^{\infty}\mathrm{diag}(\alpha_{i1}, \dots, \alpha_{in})\Lambda^i, 
$$
where $\mathrm{diag}(\alpha_{i1}, \dots, \alpha_{in})$ is the diagonal matrix with entries $\alpha_{i1}, \dots, \alpha_{in}$, and
$$
\Lambda = zE_{n,1}+\sum_{i=1}^{n-1} E_{i,i+1}.
$$
Here $E_{i,j} \in \Mat_n(R)$ is the matrix with a $1$ at position $(i, j)$ and zeros elsewhere. Therefore, in the periodic case, the Lax representation given by Proposition \ref{prop:lax} can be thought of as being valued in $\Mat_n(R)((z))$. In other words, it is a \emph{Lax representation with spectral parameter}. Also note that, for $R = \Mat_k(\F)$, we have $\Mat_n(R) \simeq \Mat_{nk}(\F)$. So, in this case, one gets Lax matrices whose entries are in a field, allowing one to study the corresponding (Grassmann) pentagram maps using standard tools of integrable systems theory. This Lax representation is different from and in a sense dual to the one given in \cite{felipe2015pentagram}.
\end{remark}


\bibliographystyle{plain}
\addcontentsline{toc}{section}{References}
\bibliography{ref.bib}

\end{document}